\newcommand{\fixme}[1]{}
\newcommand{\pb}{{\sc $C_3$-Packing-T}}
\newcommand{\perfectpb}{{\sc $C_3$-Perfect-Packing-T}}
\newcommand{\T}{{\cal T}}
\newcommand{\V}{\sigma}
\renewcommand{\S}{S}
\newcommand{\ALG}{\Phi}
\newcommand{\Lft}{Lft}
\newcommand{\Rgt}{Rgt}
\renewcommand{\O}{\mathcal{O}}
\newcommand{\Fb}{{\cal I}}
\newcommand{\Lb}{L}
\newcommand{\Kb}{K}
\newcommand{\Mb}{M}
\newcommand{\Lt}{\tilde{L}}
\newcommand{\Kt}{\tilde{K}}
\newcommand{\Mt}{\tilde{M}}
\newcommand{\m}{\omega} 
\newcommand{\g}{\gamma} 
\newcommand{\gp}{\g'} 
\newcommand{\Pg}{P_{\m,\g}} 
\newcommand{\Pgc}[1]{P^{#1}_{\m,\g}} 
\newcommand{\Xg}[1]{ 
\ifthenelse{\equal{#1}{}}{X}{X^{#1}}} 
\newcommand{\Vg}[1]{V^{#1}} 
\newcommand{\Alg}[1]{\alpha^{#1}} 
\newcommand{\mm}[1]{\mu^{#1}}
\newlength{\mylfA}
\newlength{\myhfA}
\newcommand{\fA}[1]{%
\settowidth{\mylfA}{#1}
\def\sizefB{\the\mylfA}
\settoheight{\myhfA}{#1}
\def\heightfB{\the\myhfA}
\hspace{-0.35em}
\begin{tikzpicture}[baseline=(O.base)] 
\draw[ -> ](0,\heightfB)  -- ( \sizefB, \heightfB);
\node (O) at (\sizefB/2,\heightfB/7-0.01ex) {$#1$};
\end{tikzpicture}%
\hspace{-0.35em}
}
\newcommand{\bA}[1]{%
\settowidth{\mylfA}{#1}
\def\sizefB{\the\mylfA}
\settoheight{\myhfA}{#1}
\def\heightfB{\the\myhfA}
\hspace{-0.35em}
\begin{tikzpicture}[baseline=(O.base)] 
\draw[ <- ](0,\heightfB)  -- ( \sizefB, \heightfB);
\node (O) at (\sizefB/2,\heightfB/7-0.01ex) {$#1$};
\end{tikzpicture}%
\hspace{-0.35em}
}
\newtheorem{claim}{Claim}[theorem]
\newtheorem*{claim*}{Claim}
\newtheorem{problem}{Problem}
\newtheorem{observation}[theorem]{Observation}
\newtheorem*{rep@theorem}{\rep@title}
\newcommand{\newreptheorem}[2]{%
\newenvironment{rep#1}[1]{%
 \def\rep@title{#2 \ref{##1}}%
 \begin{rep@theorem}}%
 {\end{rep@theorem}}}
\author[1]{St\'ephane Bessy} 
\author[2]{Marin Bougeret}
\author[3]{Jocelyn Thiebaut} 
\affil[1]{Universit\'e de Montpellier -  CNRS, LIRMM 
\texttt{bessy@lirmm.fr}} 
\affil[2]{Universit\'e de Montpellier - CNRS, LIRMM 
\texttt{bougeret@lirmm.fr}}
\affil[3]{Universit\'e de Montpellier - CNRS, LIRMM 
\texttt{thiebaut@lirmm.fr}} 
\title{Triangle packing in (sparse)
  tournaments: approximation and kernelization.}
\authorrunning{S. Bessy, M. Bougeret and J. Thiebaut} 
\subjclass{G.2.2 Graph Theory - Graph algorithms}
\keywords{Tournament, triangle packing, feedback arc set, approximation and parameterized algorithms}
\begin{document}

\maketitle

\begin{abstract}

Given a tournament $\T$ and a positive integer $k$, the \pb~problem
asks if there exists a least $k$ (vertex-)disjoint directed 3-cycles
in $\T$. This is the dual problem in tournaments of the classical
minimal feedback vertex set problem. Surprisingly \pb~did not receive
a lot of attention in the literature. We show that it does not admit
a {\sf PTAS} unless {\sf P}={\sf NP}, even if we restrict the considered instances to sparse tournaments, that is tournaments with a feedback arc set (FAS) being a matching.
Focusing on sparse tournaments we provide a $(1+\frac{6}{c-1})$
approximation algorithm for sparse tournaments having a linear
representation where all the backward arcs have ``length'' at least $c$.
Concerning kernelization, we show that \pb~admits a kernel with
$\O(m)$ vertices, where $m$ is the size of a given feedback arc
set. In particular, we derive a $\O(k)$ vertices kernel for \pb~when
restricted to sparse instances. On the negative size, we show that
\pb~does not admit a kernel of (total bit) size $\O(k^{2-\epsilon})$
unless ${\sf NP} \subseteq {\sf coNP / Poly}$.  The existence of a kernel in
$\O(k)$ vertices for \pb~remains an open question.
\end{abstract}

\section{Introduction and related work}
\subsection*{Tournament}
A tournament $\T$ on $n$ vertices is an orientation of the edges of
the complete undirected graph $K_n$.  Thus, given a tournament
$\T=(V,A)$, where $V = \{v_i, i\in [n]\}$, for each $i,j \in [n]$,
either $v_iv_j \in A$ or $v_jv_i \in A$.  A tournament $\T$ can
alternatively be defined by an ordering $\V(\T)=(v_1,\dots,v_n)$ of
its vertices and a set of \emph{backward arcs} $\bA{A}_{\V}(\T)$
(which will be denoted $\bA{A}(\T)$ as the considered ordering is not
ambiguous), where each arc $a \in \bA{A}(\T)$ is of the form
$v_{i_1}v_{i_2}$ with $i_2 < i_1$.  Indeed, given $\V(\T)$ and
$\bA{A}(\T)$, we can define $V = \{v_i, i\in [n]\}$ and $A= \bA{A}(\T)
\cup \fA{A}(\T)$ where $\fA{A}(\T) = \{v_{i_1}v_{i_2} : (i_1 <
i_2) \mbox{ and } v_{i_2}v_{i_1} \notin \bA{A}(\T)\}$ is the set of
forward arcs of $\T$ in the given ordering $\V(\T)$. In the following,
$(\V(\T), \bA{A}(\T))$ is called a \emph{linear representation} of the
tournament $\T$. For a backward arc $e=v_jv_i$ of $\V(\T)$ the
\emph{span value} of $e$ is $j-i-1$. Then $\mathtt{minspan}(\V(\T))$
(resp. $\mathtt{maxspan}(\V(\T))$) is simply the minimum
(resp. maximum) of the span values of the backward arcs of
$\V(\T)$.\\ A set $A'\subseteq A$ of arcs of $\T$ is a \emph{feedback
  arc set} (or \emph{FAS}) of $\T$ if every directed cycle of $\T$
contains at least one arc of $A'$. It is clear that for any linear
representation $(\V(\T), \bA{A}(\T))$ of $\T$ the set $\bA{A}(\T)$ is
a FAS of $\T$. A tournament is \emph{sparse} if it admits a FAS which
is a matching.
We denote by \pb~the problem of packing the maximum number of vertex
disjoint triangles in a given tournament, where a triangle is a
directed 3-cycle. More formally, an input of \pb~is a tournament $\T$,
an output is a set (called a \emph{triangle packing}) $\S=\{t_i, i \in
[|S|]\}$ where each $t_i$ is a triangle and for any $i \neq j$ we have
$V(t_i) \cap V(t_j) = \emptyset$, and the objective is to maximize
$|S|$. We denote by $opt(\T)$ the optimal value of $\T$.  We denote by
\perfectpb~the decision problem associated to \pb~where an input $\T$
is positive iff there is a triangle packing $\S$ such that
$V(\S)=V(\T)$ (which is called a \emph{perfect triangle packing}).

\subsection*{Related work}

We refer the reader to Appendix where we recall the definitions of the
problems mentioned bellow as well as the standard definitions about
parameterized complexity and approximation.  A first natural related
problem is {\sc 3-Set-Packing}
as we can reduce \pb~to {\sc 3-Set-Packing} by creating an hyperedge
for each triangle. 

\paragraph*{Classical complexity / approximation.}
It is known that \pb~is polynomial if the tournament does not contain
the three forbidden sub-tournaments described in~\cite{cai2002min}.
From the point of view of approximability, the best approximation
algorithm is the $\frac{4}{3}+\epsilon$ approximation
of~\cite{cygan2013improved} for {\sc 3-Set-Packing}, implying the same
result for {\sc $K_3$-packing}
and \pb.  Concerning negative results, it is
known~\cite{guruswami1998vertex} that even {\sc $K_3$-packing} is {\sf
  MAX SNP}-hard on graphs with maximum degree four.  We can also
mention the related "dual" problem {\sc FAST} and {\sc FVST} that received a lot of attention
with for example the {\sf NP}-hardness and {\sf PTAS} for {\sc FAS}
in~\cite{charbit2007minimum} and~\cite{kenyon2007rank} respectively,
and the $\frac{7}{3}$ approximation and inapproximability results for
{\sc FVST} in~\cite{73approx}.

\paragraph*{Kernelization.}
We precise that the implicitly considered parameter here is the size
of the solution.  There is a $\O(k^2)$ vertex kernel for {\sc
  $K_3$-packing} in~\cite{moser2009problem}, and even a $\O(k^2)$
vertex kernel for {\sc 3-Set-Packing} in~\cite{abu2009quadratic},
which is obtained by only removing vertices of the ground set. This remark is important as it directly
implies a $\O(k^2)$ vertex kernel for \pb~(see
Section~\ref{sec:kernel}). Let us now turn to negative results. There
is a whole line of research dedicated to finding lower bounds on the
size of polynomial kernels.  The main tool involved in these bounds is
the weak composition introduced in~\cite{hermelin2012weak} (whose
definition is recalled in Appendix). Weak composition allowed several
almost tight lower bounds, with for example the $\O(k^{d-\epsilon})$
for {\sc $d$-Set-Packing} and $\O(k^{d-4-\epsilon})$ for {\sc
  $K_d$-packing} of ~\cite{hermelin2012weak}. These results where
improved in~\cite{dell2014kernelization} to $\O(k^{d-\epsilon})$ for
\textsc{perfect} $d$-\textsc{Set-Packing}, $\O(k^{d-1-\epsilon})$ for {\sc $K_d$-packing}, and
leading to $\O(k^{2-\epsilon})$ for {\sc perfect $K_3$-packing}.  Notice that negative results for
the "perfect" version of problems (where parameter $k=\frac{n}{d}$,
where $d$ is the number of vertices of the packed structure) are
stronger than for the classical version where $k$ is arbitrary.
Kernel lower bound for these "perfect" versions is sometimes referred
as \emph{sparsification lower bounds}.

\subsection*{Our contributions}
Our objective is to study the approximability and kernelization of
\pb.  On the approximation side, a natural question is a possible
improvement of the $\frac{4}{3}+\epsilon$ approximation implied by
{\sc 3-Set-Packing}.  We show that, unlike {\sc FAST}, \pb~does not
admit a {\sf PTAS} unless {\sf P}={\sf NP}, even if the tournament is
sparse.  We point out that, surprisingly, we were not able to find any
reference establishing a negative result for \pb, even for the {\sf
  NP}-hardness.  As these results show that there is not much room for
improving the approximation ratio, we focus on sparse tournaments and
followed a different approach by looking for a condition that would
allow ratio arbitrarily close to $1$. In that spirit, we provide a
$(1+\frac{6}{c-1})$ approximation algorithm for sparse tournaments
having a linear representation with $\mathtt{minspan}$  at least
$c$.
Concerning kernelization, we complete the panorama of sparsification
lower bounds of~\cite{jansen2015sparsification} by proving that
\perfectpb~does not admit a kernel of (total bit) size
$\O(n^{2-\epsilon})$ unless ${\sf NP} \subseteq {\sf coNP / Poly}$.
This implies that \pb~does not admit a kernel of (total bit) size
$\O(k^{2-\epsilon})$ unless ${\sf NP} \subseteq {\sf coNP / Poly}$.
We also prove that \pb~admits a kernel of $\O(m)$ vertices, where $m$
is the size of a given FAS of the instance, and that
\pb~restricted to sparse instances has a kernel in $\O(k)$ vertices
(and so of total size bit $\O(k\log (k))$).  The existence of a kernel
in $\O(k)$ vertices for the general \pb~remains our main open
question.

\section{Specific notations and observations}
\label{sec:notation}

Given a linear representation $(\V(\T),\bA{A}(\T))$ of a tournament
$\T$, a triangle $t$ in $\T$ is a triple $t=(v_{i_1},v_{i_2},v_{i_3})$
with $i_l < i_{l+1}$ such that either $v_{i_3}v_{i_1} \in \bA{A}(\T)$,
$v_{i_3}v_{i_2} \notin \bA{A}(\T)$ and $v_{i_2}v_{i_1} \notin
\bA{A}(\T)$ (in this case we call $t$ a \emph{triangle with backward
  arc} $v_{i_3}v_{i_1}$), or $v_{i_3}v_{i_1} \notin \bA{A}(\T)$,
$v_{i_3}v_{i_2} \in \bA{A}(\T)$ and $v_{i_2}v_{i_1} \in \bA{A}(\T)$
(in this case we call $t$ a \emph{triangle with two backward arcs}
$v_{i_3}v_{i_2}$ and $v_{i_2}v_{i_1}$).

Given two tournaments $\T_1, \T_2$ defined by $\V(\T_l)$ and
$\bA{A}(\T_l)$ we denote by $\T=\T_1\T_2$ the tournament called the
concatenation of $\T_1$ and $\T_2$, where $\V(\T) = \V(\T_1)\V(\T_2)$
is the concatenation of the two sequences, and $\bA{A}(\T) =
\bA{A}(\T_1) \cup \bA{A}(\T_2)$. Given a tournament $\T$ and a subset
of vertices $X$, we denote by $\T \setminus X$ the tournament
$\T[V(\T) \setminus X]$ induced by vertices $V(\T) \setminus X$, and
we call this operation \emph{removing $X$ from $\T$}.  Given an arc
$a=uv$ we define $h(a)=v$ as the head of $a$ and $t(a)=u$ as the tail
of $a$.  Given a linear representation $(V(\T),\bA{A}(\T))$ and an arc
$a \in \bA{A}(\T)$, we define $s(a) = \{v : h(a) < v < t(a)\}$ as the
\emph{span} of $a$.  Notice that the span value of $a$ is then exactly
$|s(a)|$.  \\ Given a linear representation $(V(\T),\bA{A}(\T))$ and a
vertex $v \in V(\T)$, we define the degree of $v$ by $d(v)=(a,b)$,
where $a = |\{vu \in \bA{A}(\T) : u < v\}|$ is called the \emph{left
  degree} of $v$ and $b = |\{uv \in \bA{A}(\T) : u > v\}|$ is called
the \emph{right degree} of $v$. We also define \mbox{$V_{(a,b)} = \{v
  \in V(\T)| d(v)=(a,b)\}$}.  Given a set of pairwise distinct pairs
$D$, we denote by \pb$^D$ the problem \pb~restricted to tournaments
such that there exists a linear representation where $d(v) \in D$ for
all $v$.  Notice that when $D_{M}=\{(0,1),(1,0),(0,0)\}$, instances of
\pb$^{D_M}$ are the sparse tournaments.\\ Finally let us point out
that it is easy to decide in polynomial time if a tournament is sparse or not, and if so,
to give a linear representation whose FAS is a
matching. The corresponding algorithm is detailed in Appendix in Lemma~\ref{lem:faslinear}. 
Thus, in the
following, when considering a sparse tournament we will assume that a
linear ordering of it where backward arcs form a matching is also
given.
\section{Approximation for sparse tournaments}
\label{sec:approx}

\subsection{{\sf APX}-hardness for sparse tournaments}

In this subsection we prove that \pb$^{D_M}$ is {\sf APX}-hard by providing a
$L$-reduction (see Definition~\ref{def:L} in appendix) from Max
2-SAT(3), which is known to be {\sf APX}-hard~\cite{ausiello2012complexity,berman1999some}.  Recall that in
the {\sc Max 2-SAT(3)} problem where each clause contains exactly $2$
variables and each variable appears in at most 3 clauses (and at
most twice positively and once negatively).

\paragraph*{Definition of the reduction} 
\label{subsec:reduction2}
Let $\cal{F}$ be an instance of {\sc Max 2-SAT(3)}. In the following,
we will denote by $n$ the number of variables in $\cal{F}$ and $m$ the
number of clauses. Let $\{x_i, 1 \in [n]\}$ be the set of variables of
$\cal{F}$ and $\{C_j, j \in [m]\}$ its set of clauses.

We now define a reduction $f$ which maps an instance ${\cal F}$ of {\sc Max 2-SAT(3)} to an instance ${\cal T}$ of \pb$^{D_M}$. 
For each variable $x_i$ with $i \in [n]$, we create a tournament $L_i$ as follows and we call it \emph{variable gadget}. We refer the reader to Figure~\ref{fig:li} where an example of variable gadget is depicted. 
Let $\V(L_i) = (X_i, X'_i, \overline{X_i}, \overline{X_i}', \{\beta_i\}, \{\beta'_i\}$ $, A_i, B_i, \{\alpha_i\}, A'_i, B'_i)$.
We define \mbox{$C=\{ X_i,X'_i,\overline{X_i},\overline{X_i}',A_i,B_i,A'_i,B'_i\}$}. All sets of $C$ have size $4$. We denote $X_i = (x_i^1,x_i^2,x_i^3,x_i^4)$, and we extend the notation in a straightforward manner to the other others sets of $C$. 
Let us now define $\bA{A}(L_i)$. For each set of $C$, we add a backward arc whose head is the first element and the tail is the last element (for example for $X_i$ we add the arc $x_i^4x_i^1$). 
Then, we add to $\bA{A}(L_i)$ the set $\{e_1,e_2,e_3,e_4\}$ where $e_1=x_i^3a_i^3$, $e_2 = x_i^{'3}a_i^{'3}$, $e_3 = \overline{x_i^3} b_i^3$, $e_4 = \overline{x_i^{'3}} b_i^{'3}$ 
and the set $\{m_1,m_2\}$ where $m_1 = a_i^{'2}a_i^2$, $m_2 = b_i^{'2}b_i^2$ called the two \emph{medium arcs} of the variable gadget.
This completes the description of tournament $L_i$. Let $L = L_1 \dots L_n$ be the concatenation of the $L_i$. 

\begin{figure}[!htbp]
\begin{center}
\includegraphics[width=0.75\textwidth]{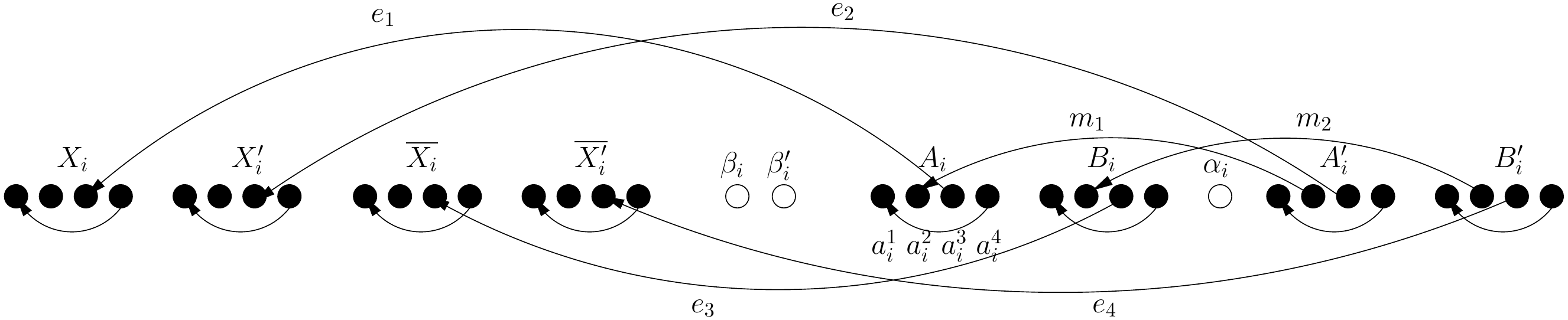}
\end{center}
\caption{Example of a variable gadget $L_i$.}
\label{fig:li}
\end{figure}

For each clause $C_j$ with $j \in [1,m]$, we create a tournament $K_j$ with ordering $\V(K_i) = (\theta_j, d^1_j,c^1_j,c^2_j,d^2_j)$ and $\bA{A}(K_i) = \{d^2_jd^1_j\}$.
We also define $K = K_1\dots K_m$. 
Let us now define $\T = LK$. We add to $\bA{A}(\T)$ the following backward arcs from $V(K)$ to $V(L)$. If $C_j = l_{i_1} \vee l_{i_2} $ is a clause in $\cal{F}$ then we add the arcs $c_j^1v_{i_1}, c_j^2v_{i_2}$ where $v_{i_c}$ is the vertex in $\{x_{i_c}^2,x_{i_c}^{'2},\overline{x_{i_c}^2}\}$ corresponding to $l_{i_c}$: if $l_{i_c}$ is a positive occurrence of variable $i_c$ we chose
$v_{i_c} \in \{x_{i_c}^2,x_{i_c}^{'2}\}$, otherwise we chose $v_{i_c} = \overline{x_{i_c}^2}$. Moreover, we chose vertices $v_{i_c}$ in such a way that for any $i \in [n]$, for each $v \in \{x_i^2,x_i^{'2},\overline{x_i^2}\}$ there exists a unique arc $a \in \bA{A}(\T)$ such that $h(a)=v$. This is always possible as each variable has at most two positive occurrences and one negative occurrence.
Thus, $x_i^2$ represent the first positive occurrence of variable $i$, and $x_i^{'2}$ the second one. We refer the reader to Figure~\ref{fig:LetK} where an example of the connection between variable and clause gadget is depicted.
\begin{figure}[!htbp]
\begin{center}
\includegraphics[width=0.75\textwidth]{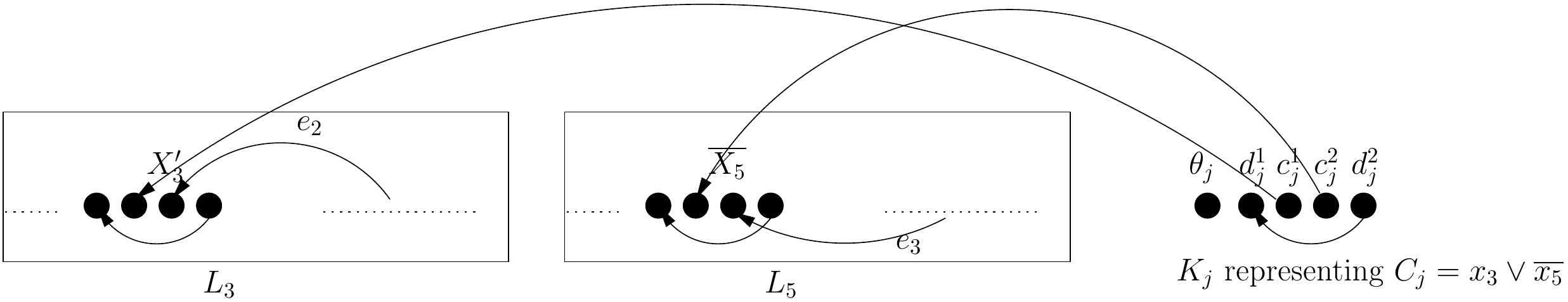}
\end{center}
\caption{Example showing how a clause gadget is attached to variable gadgets.}
\label{fig:LetK}
\end{figure}

Notice that vertices of $\overline{X'_i}$ are never linked to the clauses gadget. However, we need this set to keep the variable gadget symmetric so that setting $x_i$ to true or false leads to the same number of triangles inside $L_i$. This completes the description of $\T$. Notice that the degree of any vertex is in $\{(0,1),(1,0),(0,0)\}$, and thus $\T$ is an instance of \pb$^{D_M}$.

Let us now distinguish three different types of triangles in $\T$. A triangle $t=(v_1,v_2,v_3)$ of $\T$ is called an \emph{outer} triangle iff $\exists j \in [m]$ such that 
$v_2 = \theta_j$ and $v_3 = c^l_j$ (implying that $v_1 \in V(L)$), \emph{variable inner} iff $\exists i \in [n]$ such that $V(t) \subseteq V(L_i)$, 
and \emph{clause inner} iff $\exists j \in [m]$ such that $V(t) \subseteq V(K_j)$.
Notice that a triangle $t=(v_1,v_2,v_3)$ of $\T$ which is neither outer, variable or clause inner has necessarily $v_3 = c^l_j$ for some $j$, and $v_2 \neq \theta_j$ ($v_2$ could be in $V(L)$ or $V(K)$).
In the following definition, for any $Y \in C$ (where $C=\{ X_i,X'_i,\overline{X_i},\overline{X_i}',A_i,B_i,A'_i,B'_i\}$) with  $Y=(y^1,y^2,y^3,y^4)$, we define $t_Y^2 = (y^1,y^2,y^4)$ and $t_Y^3 = (y^1,y^3,y^4)$. For example, $t_{X'_i}^2 = (x_i^{'1},x_i^{'2},x_i^{'4})$.
For any $i\in [n]$, we define $P_i$ and $\overline{P_i}$, two sets of vertex disjoint variable inner triangles of $V(L_i)$, by:
\begin{itemize}
	\item $P_i = \{t_{X_i}^3, t_{X'_i}^3, t_{\overline{X_i}}^2, t_{\overline{X'_i}}^2, t_{A_i}^3,  t_{B_i}^2,  t_{A'_i}^3,  t_{B'_i}^2, (h(e_3),\beta_i,t(e_3)), (h(e_4),\beta'_i,t(e_4)), (h(m_1),\alpha_i,t(m_1))\}$
	\item $\overline{P_i} = \{t_{X_i}^2, t_{X'_i}^2, t_{\overline{X_i}}^3, t_{\overline{X'_i}}^3, t_{A_i}^2,  t_{B_i}^3,  t_{A'_i}^2,  t_{B'_i}^3, (h(e_1),\beta_i,t(e_1)), (h(e_2),\beta'_i,t(e_2)), (h(m_2),\alpha_i,t(m_2))\}$
\end{itemize}
Notice that $P_i$ (resp. $\overline{P_i}$) uses all vertices of $L_i$ except $\{x_i^2,x_i^{'2}\}$ (resp. $\{\overline{x_i^2},\overline{x_i^{'2}}\}$).
For any $j \in [m]$, and $x \in [2]$ we define the set of clause inner triangle of $K_j$, that is $Q^x_j = \{(d^1_j,c^x_j,d^2_j)\}$.

Informally, setting variable $x_i$ to true corresponds to create the $11$ triangles of $P_i$ in $L_i$ (as leaving vertices $\{x^2_i,x^{2'}_i\}$ available allows to create outer triangles corresponding to satisfied clauses), and setting it to false corresponds to create the $11$ triangles of $\overline{P_i}$. Satisfying a clause $j$ using its $x^{th}$ literal (represented by a vertex $v \in V(L)$) corresponds to create triangle in $Q^{3-x}_j$ as it leaves $c^x_j$ available to create the triangle $(v,\theta_j,c^x_j)$. Our final objective (in Lemma~\ref{lem:Lreducv2}) is to prove that satisfying $k$ clauses is equivalent to find $11n+m+k$ vertex disjoint triangles.

\paragraph*{Restructuration lemmas}

Given a solution $\S$, let $I^{L}_i =\{t \in \S : V(t) \subseteq V(L_i)\}$, $I^{K}_j =\{t \in \S : V(t) \subseteq V(K_j)\}$, 
$I^{L} = \cup_{i \in [n]} I^L_i$ be the set of variable inner triangles of $\S$, $I^{K} = \cup_{j \in [m]} I^K_j$ be the set of clause inner triangles of $\S$, and 
$O = \{t \in \S \mbox{ $t$ is an outer triangle }\}$ be the set of outer triangles of $\S$. Notice that \textit{a priori} $I^L,I^K,O$ does not necessarily form a partition of $\S$.
However, we will show in the next lemmas how to restructure $\S$ such that $I^L,I^K,O$ becomes a partition.

 \begin{lemma}\label{lem:intv2}
For any $\S$ we can compute in polynomial time a solution $\S' =
\{t'_l, l\in [k]\}$ such that $|\S'| \ge |\S|$ and for all $j\in[m]$
there exists $x \in [2]$ such that $I^{'K}_j=Q^x_j$ and
\begin{itemize}
\item either $\S'$ does not use any other vertex of $K_j$ ($V(\S') \cap V(K_j) = V(Q^x_j)$)
\item either $\S'$ contains an outer triangle $t'_l=(v,\theta_j, c^{3-x}_j)$ with $v \in V(L)$ (implying $V(\S') \cap V(K_j) = V(K_j)$)
\end{itemize}
 \end{lemma}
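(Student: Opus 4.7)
The plan is to process each clause gadget $K_j$ independently and rewrite the triangles of $\S$ that touch $V(K_j)$ into one of the two target configurations. The key structural fact I would invoke is that, in an instance of \pb$^{D_M}$, no vertex has degree $(1,1)$, so every triangle uses exactly one backward arc. Combined with the observation that the only backward arcs incident to $V(K_j)$ are the internal arc $d^2_j d^1_j$ and the (at most two) clause arcs $c^x_j v_{i_x^j}$ pointing to the literal vertices of $C_j$, this heavily restricts which triangles can touch $V(K_j)$.

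First I would enumerate the four principal shapes of triangle touching $V(K_j)$ via one of these arcs: the clause inner triangles $Q^1_j, Q^2_j$; the outer triangles $(v_{i_x^j}, \theta_j, c^x_j)$; ``$d^1$-triangles'' of the form $(v_{i_x^j}, d^1_j, c^x_j)$; and triangles $(v_{i_x^j}, u, c^x_j)$ with $u \notin \{\theta_j, d^1_j\}$. The critical rigidity observations would be that $d^2_j$ appears only in clause inner triangles, and $d^1_j$ appears only in clause inner or $d^1$-triangles. I would then carry out a short case analysis on $I_j = \{t \in \S : V(t) \cap V(K_j) \neq \emptyset\}$. In each case, I exhibit a replacement set $I'_j$, of the same or larger size, consisting of either $\{Q^x_j\}$ alone or $\{Q^x_j, (v_{i_{3-x}^j}, \theta_j, c^{3-x}_j)\}$, matching the target configuration. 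For example, if $\S$ contains a $d^1$-triangle $(v_{i_x^j}, d^1_j, c^x_j)$, I swap it for $Q^{3-x}_j$ together with the outer $(v_{i_x^j}, \theta_j, c^x_j)$, reusing the freshly freed $v_{i_x^j}$ and the previously unused $\theta_j, d^2_j, c^{3-x}_j$; the remaining cases are handled by analogous local swaps, each computable in polynomial time.

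The main subtlety, which I expect to be the chief obstacle, is ensuring that the local modifications at different gadgets do not conflict, since a single triangle of $\S$ may touch several $V(K_{j'})$ simultaneously (for example, a triangle whose middle vertex happens to lie in some $V(K_{j'})$ distinct from the gadget containing its backward arc). I would resolve this via two observations: each literal vertex $v_{i_x^j}$ is, by the construction of $f$, uniquely associated to the clause $j$, so the outer triangles introduced for different $j$ are pairwise vertex-disjoint; and any newly added triangle uses only vertices in $V(K_j) \cup \{v_{i_1^j}, v_{i_2^j}\}$, all of which were already occupied by triangles of $I_j$ that the procedure removes (or were free in $\S$). Removing a triangle shared between $K_j$ and $K_{j'}$ only frees a vertex of $V(K_{j'})$, which never harms the replacement chosen for $K_{j'}$. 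Combining all local swaps therefore yields a valid triangle packing $\S'$ with $|\S'| \ge |\S|$ satisfying the lemma.
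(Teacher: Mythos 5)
Your overall plan --- a local case analysis around each clause gadget, replacing the triangles that touch $V(K_j)$ by one of the two target configurations via swaps --- is the same as the paper's, and several of your individual swaps (e.g.\ trading a ``$d^1$-triangle'' for $Q^{3-x}_j$ plus the outer triangle) match the paper's moves. However, there is a genuine gap in how you handle the interaction between gadgets, and it invalidates two of your key claims. Every vertex of $K_j$ (not just $\theta_j$) lies in the span of every clause arc $c^{x'}_{j'}v'$ with $j'>j$, so each of $\theta_j$, $d^1_j$, $c^1_j$, $c^2_j$ and $d^2_j$ can be the \emph{middle} vertex of a triangle whose backward arc belongs to a later gadget. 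Hence your rigidity claims (``$d^2_j$ appears only in clause inner triangles'', ``$d^1_j$ appears only in clause inner or $d^1$-triangles'') are false as stated, and your accounting breaks: with $I_j=\{t\in\S : V(t)\cap V(K_j)\neq\emptyset\}$ one can have $|I_j|=5$ (five disjoint triangles, each using one vertex of $K_j$ as a middle vertex together with a backward arc from a later gadget), whereas any replacement of the prescribed shape has size at most $2$, so ``$|I'_j|\ge|I_j|$'' is unachievable gadget by gadget. Your proposed fix only argues that \emph{removing} shared triangles is harmless; it does not show that the vertices your replacements need ($\theta_j$, $d^2_j$, $c^{3-x}_j$, \dots) are actually free, nor does it repair the size count.

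The missing idea is the paper's processing order: it treats the gadgets from the largest index $j$ down to the smallest, maintaining the invariant that every $j'>j$ already satisfies the conclusion of the lemma. That invariant forces every triangle associated with a later gadget to stay inside $V(K_{j'})\cup V(L)$, so when gadget $j$ is reached no vertex of $K_j$ can be ``stolen'' by a later gadget --- this is exactly what the repeated ``by maximality of $j$'' steps in the paper establish, and it is what legitimises both the rigidity facts and the claim that $\theta_j$, $d^1_j$, $d^2_j$ are available for the new triangles. With that induction in place your case analysis goes through essentially as in the paper; without it (or some equivalent global argument) the ``independent'' per-gadget restructuring is not well founded.
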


 \begin{proof}
 Consider a solution $\S = \{t_l,l \in [k]\}$.
Let us suppose that $\S$ does not verify the desired property.
We say that $j \in [m]$ satisfies $(\star)$ iff there exists $x \in [2]$ such that $I^{K}_j=Q^x_j$ and
either $\S$ does not use any other vertex of $K_j$, or $\S$ contains an outer triangle $t_l=(v,\theta_j, c^{3-x}_j)$ with $v \in V(L)$.

 Let us restructure $\S$ to increase the number of $j$ satisfying
 $(\star)$, which will be sufficient to prove the lemma.
 Consider the largest $j\in [m]$ which does not satisfy $(\star)$.
Let $c = |I^{K}_j|$. Notice that the only possible triangle of $I^{K}_j$ contains $a=d^2_jd^1_j$, implying $c \le 1$.

If $c=1$, let $t \in I^K_j$ and $v_0 = \{c^1_j,c^2_j\} \setminus V(t)$.
If $v_0 \notin V(\S)$, then let us prove that $\theta_j \notin V(\S)$. Indeed, by contradiction if $\theta_j \in V(S)$, let $t' \in \S$ such that $\theta_j \in V(t')$. As $d(\theta_j)=(0,0)$ we necessarily have
$t'=(u,\theta_j,w)$ with $w = c^{x'}_{j'}$ with $j' \ge j$, which contradicts the maximality of $j$.
Otherwise ($v_0 \in V(\S)$), then denoting by $t'$ the triangle of $\S$ which contains $v_0$ we must have $t'=(u,v,v_0)$.
 Indeed, we cannot have (for some $u', v'$) $t'=(v_0,u',v')$ as there is no backward arc $a$ with $h(a)=v_0$ and we cannot have either $t'=(u',v_0,v')$ as this would imply $v'=c^{x'}_{j'}$ for $j' > j$ and again contradict the definition of $j$. As, again, by maximality of $j$ we get $\theta_j \notin V(\S)$ (and since $u\theta_j$ and $\theta_jv_0$ are forward arcs), we can replace $t'$ by the triangle $(u,\theta_j,v_0)$ which is disjoint to the other triangles of $\S$.

If $c=0$. Notice first that by maximality of $j$, $d^2_j \notin V(\S)$ as $d^2_j$ could only be used in a triangle $t=(v,d^2_j,c^x_{j'})$ with $j' > j$.
Let $Z = V(\S) \cap \{c^1_j,c^2_j\}$. 
If $|Z|=0$, then by maximality of $j$ we get $d^1_j \notin V(\S)$ and $\theta_j \notin V(\S)$, and thus we add to $\S$ triangle $(d^1_j,c^1_j,d^2_j)$.
If $|Z|=1$, let $c^x_j \in Z$ and $t \in \S$ such that $c^x_j \in V(t)$. By maximality of $j$ we necessarily have $t=(u,v,c^x_j)$ for some $u,v$.
If $v \neq \theta_j$ then by maximality of $j$ we have $\theta_j \notin V(\S)$, and thus we swap $v$ and $\theta_j$ in $t$ and now suppose that $\theta_j \in V(t)$. This implies that $d^1_j \notin V(\S)$
(before the swap we could have had $v = d^1_j$, but now by maximality of $j$ we know that $d^1_j$ is unused), and we add $(d^1_j,c^{3-x}_j,d^2_j)$ to $\S$.
It only remains now case where $|Z|=2$. If there exists $t \in \S$ with $Z \subseteq V(t)$, then $t=(u,c^1_j,c^2_j)$. Using the same arguments as above we get that $\{\theta_j,d^1_j\} \cap V(\S) = \emptyset$, 
and thus we swap $c^1_j$ by $\theta_j$ in $t$ and add $(d^1_,c^1_j,d^2_j)$ to $\S$.
Otherwise, let $t_x \in \S$ such that $c^x_j \in V(t_x)$ for $x \in [2]$. This implies that $t_x = (u_x,v_x,c^x_j)$. If $\theta_j \notin V(t_1) \cup V(t_2)$ then $\theta_j \notin V(\S)$ and we swap $v_1$ with $\theta_j$. Therefore, from 
now on we can suppose that $\theta_j \in V(t_x)$ for $x \in [2]$. Then, if $d^1_j \notin V(t_{3-x})$ then $d^1_j \notin V(\S)$ and thus we swap $v_{3-x}$ with $d^1_j$ and we now assume that
  $d^1_j \in V(t_{3-x})$. Finally, we remove $t_{3-x}$ from $\S$ and add instead $(d^1_j,c^{3-x}_j,d^2_j)$.
 \end{proof}

\begin{corollary}\label{cor:outerinnerv2}
For any $\S$ we can compute in polynomial time a solution $\S'$ such that $|\S'| \ge |\S|$, and $\S'$
 only contains outer, variable inner, and clause inner triangles. Indeed, in the solution $\S'$ of Lemma~\ref{lem:intv2}, given any $t \in \S'$, either
$V(t)$ intersects $V(K_j)$ for some $j$ and then $t$ is an outer or a clause inner triangle, or $V(t) \subseteq V(L_i)$ for $i \in [n]$ as there is no backward arc $uv$ with $u \in V(L_{i_1})$ and $v \in V(L_{i_2})$ with $i_1 \neq i_2$ .
\end{corollary}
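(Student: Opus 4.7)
The plan is to apply Lemma~\ref{lem:intv2} and then classify the triangles of the resulting solution. Let $\S'$ be the solution produced by Lemma~\ref{lem:intv2}; we already have $|\S'| \geq |\S|$ and $\S'$ is computable in polynomial time, so it only remains to check that every triangle $t \in \S'$ is outer, variable inner, or clause inner.

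First I would treat the triangles $t \in \S'$ such that $V(t) \cap V(K_j) \neq \emptyset$ for some $j \in [m]$. By the conclusion of Lemma~\ref{lem:intv2}, there exists $x \in [2]$ such that the only triangles of $\S'$ meeting $V(K_j)$ are the clause inner triangle of $Q^x_j$ and, in the second alternative, one outer triangle $(v,\theta_j,c^{3-x}_j)$ with $v \in V(L)$. So any such $t$ is outer or clause inner.

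Second, I would treat the triangles $t \in \S'$ disjoint from $V(K) = \bigcup_{j \in [m]} V(K_j)$, so that $V(t) \subseteq V(L) = \bigcup_{i \in [n]} V(L_i)$. Every triangle of $\T$ contains at least one backward arc, and by construction the backward arcs touching $V(L)$ either lie entirely inside a single $V(L_i)$ or go from $V(K)$ to $V(L)$. Since $V(t) \cap V(K) = \emptyset$, the backward arc(s) of $t$ are contained in some single $V(L_i)$. Writing $t = (v_{i_1},v_{i_2},v_{i_3})$ with $i_1 < i_2 < i_3$, this forces $v_{i_1},v_{i_3} \in V(L_i)$, and because $L = L_1\dots L_n$ is the concatenation of consecutive blocks, the intermediate vertex $v_{i_2}$ also lies in $V(L_i)$. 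Hence $V(t) \subseteq V(L_i)$ and $t$ is variable inner.

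There is no real obstacle here: once Lemma~\ref{lem:intv2} is in hand, the corollary reduces to the structural observation that no backward arc of $\T$ has one endpoint in $V(L_{i_1})$ and the other in $V(L_{i_2})$ with $i_1 \neq i_2$, combined with the block structure of the concatenation.
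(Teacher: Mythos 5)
Your proposal is correct and follows the same route as the paper: the paper's justification is the one\-sentence argument embedded in the corollary statement itself (triangles meeting some $V(K_j)$ are outer or clause inner by Lemma~\ref{lem:intv2}, and the rest lie in a single $V(L_i)$ since no backward arc joins two distinct variable gadgets). You merely spell out the second case in more detail via the backward arc of $t$ and the block structure of the concatenation, which is a harmless elaboration of the same observation.
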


\begin{lemma}
\label{lem:goodpatternv2}
For any $\S$ we can compute in polynomial time a solution $\S'$ such
that $|\S'| \ge |\S|$, $\S'$ satisfies Lemma~\ref{lem:intv2}, and for
every $i \in[n]$, $I^{'L}_i = P_i$ or $I^{'L}_i = \overline{P_i}$.
\end{lemma}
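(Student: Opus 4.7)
The plan is to restructure the variable inner triangles one gadget at a time, starting from a solution $\S$ already satisfying Lemma~\ref{lem:intv2}. By Corollary~\ref{cor:outerinnerv2}, $\S$ partitions into outer, variable inner, and clause inner triangles, so modifying only the variable inner portion and at most one outer triangle per gadget $L_i$ leaves the rest of the analysis of Lemma~\ref{lem:intv2} undisturbed.

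Fix $i \in [n]$. A vertex of $V(L_i)$ can be used by an outer triangle only if it is the head of some backward arc going into $V(K)$, which by construction means it lies in $\{x_i^2, x_i'^2, \overline{x_i^2}\}$ (note that $\overline{x_i'^2}$ is never connected to a clause gadget). Let $U_i \subseteq \{x_i^2, x_i'^2, \overline{x_i^2}\}$ be the vertices of $V(L_i)$ covered by outer triangles of $\S$. The combinatorial heart of the argument is the claim: the maximum number of pairwise vertex-disjoint triangles inside $L_i$ is $11$, attained exactly by $P_i$ and $\overline{P_i}$; moreover, as soon as at least one vertex of $\{x_i^2, x_i'^2\}$ and the vertex $\overline{x_i^2}$ are both unavailable, this maximum drops to $10$. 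This is established by a case analysis on which backward arcs of $L_i$ (those internal to each 4-set of $C$, together with $e_1,\dots,e_4,m_1,m_2$) are used by the triangles of $I^L_i$: each of the eight sets of $C$ supports at most one inner triangle (of type $t^2_{Y}$ or $t^3_{Y}$), and the remaining three triangles covering $\alpha_i$, $\beta_i$, $\beta_i'$ must use either the pair $(e_1,e_2,m_2)$ or the pair $(e_3,e_4,m_1)$, forcing the global pattern to be $P_i$ or $\overline{P_i}$.

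Given this claim, we perform the following per-gadget swap: if $U_i \subseteq \{x_i^2, x_i'^2\}$ (including $U_i=\emptyset$), replace $I^L_i$ by $P_i$; the outer triangles remain valid since $V(P_i) \cap U_i = \emptyset$, and the net change is $11-|I^L_i|\ge 0$. Symmetrically, if $U_i \subseteq \{\overline{x_i^2}\}$, replace $I^L_i$ by $\overline{P_i}$. If $U_i$ intersects both $\{x_i^2, x_i'^2\}$ and $\{\overline{x_i^2}\}$, replace $I^L_i$ by $P_i$ and discard the unique outer triangle using $\overline{x_i^2}$; by the combinatorial claim $|I^L_i|\le 10$ in this case, hence the net change is $(11-|I^L_i|)-1 \ge 0$. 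The discarded outer triangle corresponds to some $K_j$ that now transitions from the second to the first alternative of Lemma~\ref{lem:intv2}, so its conclusion is preserved. Iterating this operation over all $i \in [n]$ yields the desired $\S'$ in polynomial time.

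The main obstacle is proving the combinatorial claim rigorously: that every triangle packing inside $L_i$ of size $11$ coincides with $P_i$ or $\overline{P_i}$, and that the upper bound truly drops to $10$ when vertices from both sides of the gadget are withheld. This requires a careful, slightly tedious enumeration of how the backward arcs of $L_i$ can be grouped into vertex-disjoint triangles, exploiting the symmetric layout of $(X_i, X'_i, \overline{X_i}, \overline{X'_i}, A_i, B_i, A'_i, B'_i)$ and the forced role of $\alpha_i$, $\beta_i$, $\beta_i'$ inside any maximum packing.
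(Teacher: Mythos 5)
Your overall strategy---reduce the lemma to a statement about maximum triangle packings inside a single gadget $L_i$, then perform a three-way swap depending on which of $x_i^2,x_i^{'2},\overline{x_i^2}$ are consumed by outer triangles---is coherent, and your accounting (discard at most one outer triangle, and only when the inner packing is provably of size at most $10$) has the right shape. However, the combinatorial claim on which everything rests is, as you state it, false, and the justification you sketch for it does not work. It is not true that the only $11$-triangle packings of $L_i$ are $P_i$ and $\overline{P_i}$: the eleven vertex-disjoint triangles $t_{X_i}^2$, $t_{X'_i}^2$, $t_{\overline{X_i}}^2$, $t_{\overline{X'_i}}^3$, $t_{A_i}^2$, $t_{B_i}^2$, $t_{A'_i}^2$, $t_{B'_i}^3$, $(h(e_1),\beta_i,t(e_1))$, $(h(e_3),\beta'_i,t(e_3))$, $(h(e_2),\alpha_i,t(e_2))$ form a valid packing of $L_i$ (note that $\alpha_i \in s(e_2)$, so the last triple is indeed a triangle), and it leaves exactly $\{\overline{x_i^{'2}},b_i^{'2}\}$ uncovered. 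This directly refutes your assertion that the three triangles covering $\alpha_i,\beta_i,\beta'_i$ must use either $(e_1,e_2,m_2)$ or $(e_3,e_4,m_1)$: here they use $e_1,e_3$ and $e_2$, so the ``forcing'' step of your enumeration breaks down.

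What your third swap case actually needs is the weaker statement that no $11$-packing of $L_i$ leaves uncovered exactly $\{x_i^2,\overline{x_i^2}\}$ or $\{x_i^{'2},\overline{x_i^2}\}$. That statement is true, but it does not follow from a (false) uniqueness claim; one must argue, for instance, that withholding $x_i^2$ forces the internal arc of $X_i$ to take $x_i^3$ as its middle (otherwise $x_i^1$ is a third uncovered vertex), which kills $e_1$; withholding $\overline{x_i^2}$ similarly kills $e_3$; covering $\beta_i$ and $\beta'_i$ then forces $e_2$ and $e_4$, whose tails force the internal triangles of $A'_i$ and $B'_i$ to consume $t(m_1)$ and $t(m_2)$, stranding $\alpha_i$. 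Since you explicitly defer the enumeration (``the main obstacle''), the pivotal step of your proof is both unproven and mis-stated. For comparison, the paper avoids characterizing maximum packings of $L_i$ altogether: it partitions $I^L_i$ into triangles inside $\Lft_i$, inside $\Rgt_i$, and crossing, proves the numerical bounds $p_1$--$p_5$ (in particular $p_5$, which bounds the number of outer triangles in terms of how many of the blocks $X_i,X'_i,\overline{X_i},\overline{X'_i}$ are already saturated by inner triangles), and concludes by a case analysis on the number of crossing triangles; that is where the tradeoff you are trying to capture via uniqueness is actually established.
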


\begin{proof}
Let $\S_0$ be an arbitrary solution, and $\S$ be the solution obtained
from $\S_0$ after applying Lemma~\ref{lem:intv2}.
By Corollary~\ref{cor:outerinnerv2}, we partition $\S$ into $\S = I^L \cup I^K \cup O$.
Let us say that $i \in [n]$ satisfies $(\star)$ if $I^L_i = P_i$ or $I^L_i = \overline{P_i}$.
Let us suppose that $\S$ does not verify the desired property, and show how to restructure $\S$ to increase the number of $i$ satisfying $(\star)$ while still satisfying Lemma~\ref{lem:intv2}, which will prove the lemma.

Let $\Lft_i = X_i \cup X'_i \cup \overline{X_i} \cup \overline{X'_i}$ and $\Rgt_i = A_i \cup B_i \cup \{\alpha_i\} \cup  A'_i \cup B'_i$ be two subset of vertices of $V(L_i)$.
Given any solution $\tilde{\S}$ satisfying Lemma~\ref{lem:intv2}, we define the following sets. 
Let $\tilde{\S}^{\Lft_i} = \{t \in \tilde{I}^L_i : V(t) \subseteq \Lft_i \}$, $\tilde{\S}^{\Rgt_i} = \{t \in \tilde{I}^L_i : V(t) \subseteq \Rgt_i \}$, and  
\mbox{$\tilde{\S}^{\Lft_i\Rgt_i} = \{t \in \tilde{I}^L_i : V(t) \cap \Lft_i \neq \emptyset \mbox{ and } V(t) \cap \Rgt_i \neq \emptyset\}$}. Observe that these three sets define a partition of $\tilde{I}^L_i$, 
and that triangles of $\tilde{\S}^{\Lft_i}$ are even included in $W$ with $W \in \{X_i,  X'_i , \overline{X_i}, \overline{X_i}'\}$.
Let $\tilde{\S}^{O_i} = \{t \in \tilde{O} : V(t) \cap V(L_i) \neq \emptyset\}$ be the set of outer triangles of $\tilde{\S}$ intersecting $L_i$.
We also define $g_i(\tilde{\S})=(|\tilde{\S}^{\Lft_i}|,|\tilde{\S}^{\Lft_i\Rgt_i}|,|\tilde{\S}^{\Rgt_i}|,|\tilde{\S}^{O_i}|)$ and 
$h_i(\tilde{S})=|\tilde{\S}^{\Lft_i}|+|\tilde{\S}^{\Lft_i\Rgt_i}|+|\tilde{\S}^{\Rgt_i}|+|\tilde{\S}^{O_i}|=|\tilde{I}^L_i \cup \tilde{\S}^{O_i}|$.

Our objective is to restructure $\S$ into a solution $\S'$ with $\S' = (\S \setminus (I^L_i \cup \S^{O_i})) \cup (I^{'L}_i \cup \S^{'O_i})$. We will 
define $I^{'L}_i$ and $\S^{'O_i}$ verifying the following properties $(\triangle)$:
\begin{description}
\item[$\triangle_1:$] $I^{'L}_i = P_i$ or $I^{'L}_i=\overline{P_i}$,
\item[$\triangle_2:$] $\S^{'O_i} \subseteq \S^{O_i}$, 
\item[$\triangle_3:$] $|(I^{'L}_i \cup \S^{'O_i})| \ge |(I^L_i \cup \S^{O_i})| $ (which is equivalent to $h_i(\S') \ge h_i(\S) $), 
\item[$\triangle_4:$] triangles of $I^{'L}_i \cup \S^{'O_i}$ are vertex disjoint. 
\end{description}
 Notice that $\triangle_2$ and $\triangle_4$ imply that all triangles of $\S'$ are still vertex disjoint. Indeed, as $\S$ satisfies Lemma~\ref{lem:intv2}, the only triangles of $\S$ intersecting 
 $L_i$ are $I^L_i \cup \S^{O_i}$, and thus replacing them with $I^{'L}_i \cup \S^{'O_i}$ satisfying the above property implies that all triangles of $\S'$ are vertex disjoint. Moreover, $\S'$ will still satisfy Lemma~\ref{lem:intv2} even with $\S^{'O_i} \subseteq \S^{O_i}$ as removing outer triangles cannot violate property of Lemma~\ref{lem:intv2}. 
Finally $\triangle_3$ implies that $|\S'| \ge |\S|$. Thus, defining  $I^{'L}_i$ and $\S^{'O_i}$ satisfying $(\triangle)$ will be sufficient to prove the lemma. Let us now state some useful properties.

\begin{description}
\item[$p_1:$] $|\S^{\Lft_i}| \le 4$
\item[$p_2:$] $|\S^{\Lft_i\Rgt_i}| \le 4$ as for any $t \in \S^{\Lft_i\Rgt_i}$ there exists $l \in [4]$ such that $V(t) \supseteq V(e_l)$. 
\item[$p_3:$] $|\S^{\Rgt_i}| \le 5$ (as $|V(\S^{\Rgt_i})| = 17$). 
Let $Z = V(\S^{\Lft_i\Rgt_i}) \cap \Rgt_i$.
Let us also prove that if $Z \supseteq \{a^3_i,b^3_i\}$, $Z \supseteq \{a^{'3}_i,b^{'3}_i\}$, $Z \supseteq \{a^3_i,b^{'3}_i\}$ or $Z \supseteq \{a^{'3}_i,b^3_i\}$ then $|\S^{\Rgt_i}| \le 4$. 
For any $W \in \{A_i, B_i,  A'_i, B'_i\}$, let $s_W$ be the unique arc $a$ of $\T$ such that $V(a) \subseteq W$ and let $m_W$ be the unique medium arc $a$ such that $V(a) \cap W \neq \emptyset$. 
Let us call the $\{s_W\}$ the four small arcs of the tournament induced by $\Rgt_i$. 
Let $\bA{A}(\S^{\Rgt_i}) = \{a \in \bA{A}(L_i) : \exists t \in \S^{\Rgt_i}$ \mbox{such that } $V(a) \subseteq V(t) \}$ be the set of backward arcs used by $\S^{\Rgt_i}$.
Observe that arcs of $\bA{A}(\S^{\Rgt_i})$ are small or medium arcs. Let us bound $|\bA{A}(\S^{\Rgt_i})|=|\S^{\Rgt_i}|$.
Notice that for any $W \in \{A_i, B_i,  A'_i, B'_i\},$ $W \cap Z \neq \emptyset$ implies that $\bA{A}(\S^{\Rgt_i})$ cannot contain both $s_W$ and $m_W$.
If $\S^{\Rgt_i}$ contains the $4$ small arcs then by previous remark $\S^{\Rgt_i}$ cannot contain any medium arc, 
and thus $|\S^{\Rgt_i}| \le 4$. If $\S^{\Rgt_i}$ contains $3$ small arcs then it can only contain one medium arc, implying  $|\S^{\Rgt_i}| \le 4$. Obviously, if $|\S^{\Rgt_i}|$ contains $2$ or less small arcs then $|\S^{\Rgt_i}| \le 4$. 
\item[$p_4:$] property $p_3$ implies that if $|\S^{\Lft_i\Rgt_i}| \ge 3$, or if $|\S^{\Lft_i\Rgt_i}|=2$ and triangles of $\S^{\Lft_i\Rgt_i}$ contain $\{e_1,e_3\}$, $\{e_1,e_4\}$, $\{e_2,e_3\}$ or $\{e_2,e_4\}$, 
then $|\S^{\Rgt_i}| \le 4$ (where triangles of $\S^{\Lft_i\Rgt_i}$ contains $\{e_i,e_j\}$ means that there exist $t_1,t_2$ in $\S^{\Lft_i\Rgt_i}$ such that $V(t_1) \supseteq V(e_i)$ and $V(t_2) \supseteq V(e_j)$).
\item[$p_5:$] $|\S^{O_i}| \le 3$. Moreover, if $|\S^{\Lft_i}|=4$ then $|\S^{O_i}| \le 4 - |\S^{\Lft_i\Rgt_i}|$, and if $|\S^{\Lft_i}|=3$ and $|\S^{\Lft_i\Rgt_i}|=4$ then $|\S^{O_i}| \le 1$. The last two inequalities come from the fact that for any $W \in \{X_i, X'_i, \overline{X_i}, \overline{X'_i}\}$, we cannot have both $t_1 \in \S^{O_i}$, $t_2 \in \S^{\Lft_i\Rgt_i}$ 
and $t_3 \in \S^{\Lft_i}$ with $V(t_i) \cap W \neq \emptyset$. 
\end{description}

Notice that if a solution $\S'$ satisfies $I^{'L}_i = P_i$ or $I^{'L}_i=\overline{P_i}$ then $g_i(\S')=(4,2,5,z)$ where $z \in [2]$, and $h_i(\S')=11+z$.
In the following we write $(u^1_1,u^1_2,u^1_3,u^1_4) \le (u^2_1,u^2_2,u^2_3,u^2_4)$ iff $u^1_i \le u^2_i$ for any $i \in [4]$.
Let us describe informally the following argument which will be used several times. Let $z=|\S^{O_i}|$. If $z \le 1$ or if $z = 2$ but the two corresponding outer triangles do not use one vertex in $X_i \cup X'_i$ and one vertex in $\overline{X_i}$, then we will able to "save" all these outer triangles (while creating the optimal number of variable inner triangles in $L_i$), meaning that $\S^{'O_i} = \S^{O_i}$, as either $P_i$ or $\overline{P_i}$ will leave vertices of $\S^{O_i} \cap \Lft_i$ available for outer triangles. Let us proceed by case analysis according to the value $|\S^{\Lft_i\Rgt_i}|$. Remember that $|\S^{\Lft_i\Rgt_i}| \le 4$ according to $p_2$.

Case 1: $|\S^{\Lft_i\Rgt_i}| \le 1$. According to  $p_1, p_3$ we get $g_i(\S) \le (4,1,5,z)$ where $z \in [3]$.
In this case, $\S^{'O_i} = \S^{O_i} \setminus \{t \in \S : V(t) \ni \overline{x^2_i}\}$ and $I^{'L}_i = P_i$ verify $(\triangle)$.
In particular, we have $h_i(\S') \ge h_i(\S)$ as $g_i(\S') \ge (4,2,5,z-1)$.

Case 2: $|\S^{\Lft_i\Rgt_i}|=2$. Let $g_i(\S) = (x,2,y,z)$. If $x \le 3$, then $g_i(\S) \le (3,2,5,z)$ by $p_3$ and we set
$\S^{'O_i} = \S^{O_i} \setminus \{t \in \S : V(t) \ni \overline{x^2_i}\}$ and $I^{'L}_i = P_i$. This satisfies $(\triangle)$
as in particular we have $h_i(\S') \ge h_i(\S)$ as $g_i(\S') \ge (4,2,5,z-1)$.
Let us now turn to case where $x=4$. Let $\S^{\Lft_i\Rgt_i}=\{t_1,t_2\}$. Let us first suppose that triangles of $\S^{\Lft_i\Rgt_i}$ contain $\{e_i,e_j\}$ with 
$\{e_i,e_j\} \in \{\{e_1,e_3\},\{e_1,e_4\},\{e_2,e_3\},\{e_2,e_4\}\}$. 
By $p_4$ we get  $y \le 4$, implying $g_i(\S) \leq (4,2,4,z)$.
In this case, $\S^{'O_i} = \S^{O_i} \setminus \{t \in \S : V(t) \ni \overline{x^2_i}\}$ and $I^{'L}_i = P_i$ verify $(\triangle)$.
In particular, we have $h_i(\S') \ge h_i(\S)$ as $g_i(\S')=(4,2,5,z-1)$. Let us suppose now that $t_1$ contains $e_1$ and $t_2$ contains $e_2$ (case (2a)), or 
$t_1$ contains $e_3$ and $t_2$ contains $e_4$ (case (2b)). In both cases we have $g_i(\S) \le (4,2,5,z)$ where $z \in [2]$ by $p_5$. 
More precisely, $p_5$ implies that $\{W \in \{X_i, X'_i, \overline{X_i}, \overline{X'_i}\} : W \cap V(\S^{O_i})\} \neq \emptyset$ is included in $\{X_,X'_i\}$ (case 2b) or in $\overline{X_i}$ (case 2a).
Thus, in case (2a) we define  $\S^{'O_i} = \S^{O_i}$ and $I^{'L}_i = \overline{P_i}$. In case (2b) we define  $\S^{'O_i} = \S^{O_i}$ and $I^{'L}_i = P_i$.
In both cases these sets verify $(\triangle)$ as in particular $g_i(\S') = (4,2,5,z)$.

\begin{sloppypar}
Case 3: $|\S^{\Lft_i\Rgt_i}|=3$. In this case $g_i(\S) \le (x,3,4,z)$ by $p_4$.
	If $x \le 3$, the sets \mbox{$\S^{'O_i} = \S^{O_i} \setminus \{t \in \S : V(t) \ni \overline{x^2_i}\}$} and $I^{'L}_i = P_i$ verify $(\triangle)$.
In particular, we have \mbox{$h_i(\S') \ge h_i(\S)$} as $g_i(\S') \ge (4,2,5,z-1)$.
If $x = 4$ then $z \le 1$ by $p_5$. Thus, we define $I^{'L}_i = P_i$ if $V(\S^{O_i}) \cap (X_i \cup X'_i) \neq \emptyset$, and 
$I^{'L}_i = \overline{P_i}$ otherwise, and $\S^{'O_i} = \S^{O_i}$. These sets satisfy $(\triangle)$ as in particular $g_i(\S') = (4,2,5,z)$.
\end{sloppypar}

Case 4: $|\S^{\Lft_i\Rgt_i}|=4$. Let $g_i(\S) = (x,4,y,z)$.
If $x=4$ then $z \le 0$ by $p_5$ and $y \le 3$ as $x+4+y \le \frac{|V(L_i)|}{3}$.

Thus, we set $\S^{'O_i} = \S^{O_i} = \emptyset$, $I^{'L}_i = P_i$ (which is arbitrary in this case), and we have property $(\triangle)$ as  $g_i(\S') \ge (4,2,5,0)$.
If $x=3$ (this case is depicted Figure~\ref{fig:ex_restruct_Li}) then $y \le 4$ by $p_3$ and $z \le 1$ by $p_5$, implying $g_i(\S) = (3,4,4,z)$. Thus, we define $I^{'L}_i = P_i$ if $V(\S^{O_i}) \cap (X_i \cup X'_i) \neq \emptyset$, and 
$I^{'L}_i = \overline{P_i}$ otherwise, and $\S^{'O_i} = \S^{O_i}$. These sets satisfy $(\triangle)$ as in particular $g_i(\S') = (4,2,5,z)$.
Finally, if $x \le 2$ then $g_i(\S) \le (2,4,4,z)$ by $p_3$. In this case, $\S^{'O_i} = \S^{O_i} \setminus \{t \in \S : V(t) \ni \overline{x^2_i}\}$ and $I^{'L}_i = P_i$ verify $(\triangle)$.
In particular, we have $h_i(\S') \ge h_i(\S)$ as $g_i(\S') \ge (4,2,5,z-1)$. 

\begin{figure}[!htbp]
\begin{center}
\includegraphics[width=0.80\textwidth]{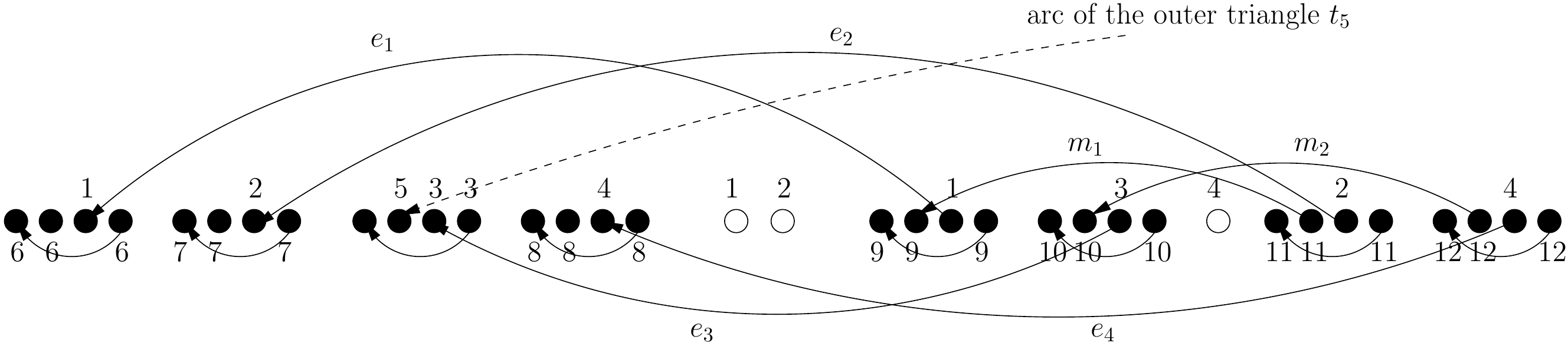}
\end{center}
\caption{Example showing a "bad shaped" solution of case $4$ with  $g_i(\S) = (3,4,4,1)$. We have $\S^{\Lft_i\Rgt_i} = \{t_1,t_2,t_3,t_4\}$, $\S^{O_i} = \{t_5\}$, $\S^{\Lft_i} = \{t_6,t_7,t_8\}$ and 
$\S^{\Rgt_i} = \{t_9,t_{10},t_{11},t_{12}\}$. The three vertices of triangle $t_l$ are annotated with label $l$.}
\label{fig:ex_restruct_Li}
\end{figure}

\end{proof}

\paragraph*{Proof of the L-reduction}
We are now ready to prove the main lemma (recall that $f$ is the
reduction from {\sc Max 2-SAT(3)} to \pb$^{D_M}$ described in
Section~\ref{subsec:reduction2}), and also the main theorem of the section.
\begin{lemma}\label{lem:Lreducv2}
Let $\cal{F}$ be an instance of {\sc Max 2-SAT(3)}. For any $k$, there
exists an assignment $a$ of $\cal{F}$ satisfying at least $k$ clauses
if and only if there exists a solution $\S$ of $f(\cal{F})$ with $|\S|
\geq 11n+m+k$, where $n$ and $m$ are respectively is the number of
variables and clauses in $\cal{F}$.  Moreover, in the $\Leftarrow$
direction, assignment $a$ can be computed from $\S$ in polynomial
time.
\end{lemma}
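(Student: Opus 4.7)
The plan is to prove both directions via explicit constructions, leveraging the restructuring lemmas already established.

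For the $(\Rightarrow)$ direction, given an assignment $a$ of $\cal{F}$ satisfying at least $k$ clauses, I would construct $\S$ as follows. For each variable $x_i$, insert the $11$ triangles of $P_i$ if $a(x_i)$ is true and of $\overline{P_i}$ otherwise; this uses all of $V(L_i)$ except $\{x_i^2, x_i^{'2}\}$ (resp.\ $\{\overline{x_i^2}, \overline{x_i^{'2}}\}$) and contributes $11n$ vertex-disjoint variable inner triangles. For each satisfied clause $C_j$, pick a satisfying literal $l_{i_c}$ sitting at position $x \in [2]$ and add both the outer triangle $(v_{i_c}, \theta_j, c_j^x)$, whose tail $v_{i_c}$ is free by the previous step because $l_{i_c}$ is true, together with the clause inner triangle $Q_j^{3-x}$; for each unsatisfied clause, add only $Q_j^1$. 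The uniqueness property that each $v \in \{x_i^2, x_i^{'2}, \overline{x_i^2}\}$ is the head of at most one backward arc from $V(K)$ guarantees vertex-disjointness across the clause gadgets, yielding $|\S| = 11n + 2k + (m-k) = 11n + m + k$.

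For the $(\Leftarrow)$ direction, given $\S$ with $|\S| \ge 11n + m + k$, I would apply Lemma~\ref{lem:goodpatternv2} (together with Corollary~\ref{cor:outerinnerv2} for the partition structure) to obtain in polynomial time a solution $\S'$ with $|\S'| \ge |\S|$ splitting into outer triangles $O'$, variable inner triangles with $I_i^{'L} \in \{P_i, \overline{P_i}\}$ (exactly $11$ per variable) and clause inner triangles with $I_j^{'K} = Q_j^{x_j}$ (exactly one per clause). Consequently $|\S'| = 11n + m + |O'|$, so $|O'| \ge k$. I would then define $a$ by declaring $a(x_i)$ true iff $I_i^{'L} = P_i$, which is trivially computable from $\S'$ in polynomial time.

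The remaining and semantically central step is to show that $a$ satisfies at least $k$ clauses. Each outer triangle of $\S'$ has the form $(v, \theta_j, c_j^x)$ with $v \in V(L)$ and $c_j^x v \in \bA{A}(\T)$; since distinct outer triangles use distinct $\theta_j$, they correspond to distinct clauses. By construction, $v$ represents a literal $l_{i_c}$ of $C_j$, with $v \in \{x_{i_c}^2, x_{i_c}^{'2}\}$ if $l_{i_c}$ is a positive occurrence and $v = \overline{x_{i_c}^2}$ otherwise. Since $v$ is consumed by the outer triangle, $v \notin V(I_{i_c}^{'L})$; but $V(P_{i_c})$ avoids $x_{i_c}^2, x_{i_c}^{'2}$ while containing $\overline{x_{i_c}^2}$, and $V(\overline{P_{i_c}})$ does the opposite. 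Hence a positive $v$ forces $I_{i_c}^{'L} = P_{i_c}$, making $a(x_{i_c})$ true and satisfying $l_{i_c}$; a negative $v$ symmetrically forces $a(x_{i_c})$ false, again satisfying $l_{i_c}$. Thus at least $|O'| \ge k$ clauses are satisfied. The main difficulty of the argument is really absorbed by the preceding restructuring lemmas: without the canonical form they impose, one cannot directly translate a packing into an assignment, because a single variable gadget could contribute a mixture of $P_i$- and $\overline{P_i}$-like triangles; once $\S'$ is normalized, both directions become a clean accounting exercise based on the correspondence between the choice of $P_i/\overline{P_i}$ and the literal vertices left free for outer triangles.
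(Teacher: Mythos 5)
Your proposal is correct and follows essentially the same route as the paper: the same explicit construction of $11n+m+k$ triangles from an assignment in the forward direction, and an application of Lemma~\ref{lem:goodpatternv2} followed by reading off the assignment from whether $I_i^{'L}=P_i$ or $\overline{P_i}$ in the backward direction. Your final paragraph actually spells out the disjointness argument (a used literal vertex forces the corresponding choice of $P_{i_c}$ versus $\overline{P_{i_c}}$) that the paper leaves implicit, which is a welcome addition rather than a deviation.
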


\begin{proof}
For any $i \in [n]$, let $A_i =P_i$ if $x_i$ is set to true in $a$,
and $A_i=\overline{P_i}$ otherwise. We first add to $\S$ the set
$\cup_{i \in [n]}A_i$.  Then, let $\{C_{j_l}, l \in [k]\}$ be $k$
clauses satisfied by $a$.  For any $l \in [k]$, let $i_l$ be the index
of a literal satisfying $C_{j_l}$, let $x \in [2]$ such that
$c^x_{j_l}$ corresponds to this literal, and let $Z_l =
\{x^2_{i_l},x^{'2}_{i_l}\}$ if literal $i_l$ is positive, and $Z_l =
\{\overline{x^2_{i_l}}\}$ otherwise. For any
$j \in [m]$, if $j=i_l$ for some $l$ (meaning that $j$ corresponds to
a satisfied clause), we add to $\S$ the triangle in $Q^{3-x}_j$, and
otherwise we arbitrarily add the triangle $Q^1_j$.  Finally, for any
$l \in [k]$ we add to $\S$ triangle $t_l =
(y_l,\theta_{j_l},c^x_{j_l})$ where $y_l \in Z_l$ is such that $y_l$
is not already used in another triangle. Notice that such an $y_l$
always exists as triangles of $A_{i}, i \in [n]$ do not intersect
$Z_l$ (by definition of the $A_i$), and as there are at most two
clauses that are true due to positive literal, and one clause that is
true due to a negative literal.
Thus, $\S$ has $11n+m+k$ vertex disjoint triangles.

Conversely, let $\S$ a solution of $f(\cal{F})$ with $|\S| \geq
11n+m+k$. By Lemma~\ref{lem:goodpatternv2} we can construct in
polynomial time a solution $\S'$ from $\S$ such that $|\S'| \ge |\S|$,
$\S'$ only contains outer, variable or clause inner triangles, for
each $j \in [m]$ there exists $x \in [2]$ such that $I^{'K}_j=Q^x_j$,
and for each $i\in[n], I^{'L}_i = P_i$ or $I^{'L}_i =
\overline{P_i}$. This implies that the $k' \ge k$ remaining triangles
must be outer triangles. Let $\{t'_l, l \in [k']\}$ be these $k'$
outer triangles with $t'_l = (y_l,\theta_{j_l},c^{x_l}_{j_l})$
Let us define the following assignation $a$: for each $i\in[n]$, we set $x_i$ to true if $I^{'L}_i = P_i$, and false otherwise. 
This implies that $a$ satisfies at least clauses $\{C_{j_l}, l \in [k']\}$.
\end{proof}
\begin{theorem}\label{thm:apxhv2}
\pb$^{D_M}$ is {\sf APX}-hard, and thus does not admit a {\sf PTAS} unless $P={\sf NP}$.
\end{theorem}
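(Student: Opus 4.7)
My plan is to combine the map $f$ from Section~\ref{subsec:reduction2} with the extraction procedure in the backward direction of Lemma~\ref{lem:Lreducv2} to build an $L$-reduction from {\sc Max 2-SAT(3)} to \pb$^{D_M}$. Since {\sc Max 2-SAT(3)} is {\sf APX}-hard~\cite{ausiello2012complexity,berman1999some}, this will imply the theorem. Throughout, I will denote by $val(a)$ the number of clauses of $\mathcal{F}$ satisfied by an assignment $a$.

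First, I would turn Lemma~\ref{lem:Lreducv2} into an exact identity between optima. Taking $k=opt(\mathcal{F})$ in the $\Rightarrow$ direction and $k = opt(f(\mathcal{F})) - 11n - m$ in the $\Leftarrow$ direction gives
$$ opt(f(\mathcal{F})) \;=\; 11n + m + opt(\mathcal{F}). $$
Next, I would check the two $L$-reduction conditions of Definition~\ref{def:L}. For the first, I need $opt(f(\mathcal{F})) \le \alpha \cdot opt(\mathcal{F})$ for some constant $\alpha$. Assuming (w.l.o.g., by trivial preprocessing) that every variable of $\mathcal{F}$ appears in at least one clause, we have $n \le 2m$ since each clause contributes exactly $2$ literal occurrences. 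Combined with the classical bound $opt(\mathcal{F}) \ge m/2$ for {\sc Max 2-SAT} (a random assignment already satisfies each clause with probability $3/4$), this gives $11n + m \le 23m \le 46\,opt(\mathcal{F})$, hence $opt(f(\mathcal{F})) \le 47\,opt(\mathcal{F})$, so $\alpha = 47$ works.

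For the second condition, given any solution $\S$ of $f(\mathcal{F})$, I would set $k := |\S| - 11n - m$ and apply the $\Leftarrow$ direction of Lemma~\ref{lem:Lreducv2}, which polynomially produces an assignment $a$ of $\mathcal{F}$ with $val(a) \ge k$. Using the identity above,
$$ opt(\mathcal{F}) - val(a) \;\le\; \bigl(opt(f(\mathcal{F})) - 11n - m\bigr) - \bigl(|\S| - 11n - m\bigr) \;=\; opt(f(\mathcal{F})) - |\S|, $$
so $\beta = 1$ suffices. This yields an $L$-reduction with constants $(\alpha,\beta)=(47,1)$, and {\sf APX}-hardness of {\sc Max 2-SAT(3)} transfers to \pb$^{D_M}$.

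The genuine technical work has already been absorbed by the restructuration lemmas (Lemma~\ref{lem:intv2}, Corollary~\ref{cor:outerinnerv2}, Lemma~\ref{lem:goodpatternv2}) and the two-way equivalence of Lemma~\ref{lem:Lreducv2}; the only arguably delicate step in the present theorem is the bookkeeping bound $n = O(m)$ together with $opt(\mathcal{F}) = \Theta(m)$, which is what makes the additive shift $11n+m$ compatible with an $L$-reduction rather than merely a polynomial reduction. Once those linear bounds are in hand, the conclusion that there is no {\sf PTAS} unless $\sf P = NP$ is immediate from the definition of {\sf APX}-hardness.
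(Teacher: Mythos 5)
Your proposal is correct and follows essentially the same route as the paper: both derive the exact identity $opt(f(\mathcal{F})) = opt(\mathcal{F}) + 11n + m$ from Lemma~\ref{lem:Lreducv2}, bound $n$ linearly in $m$ and $opt(\mathcal{F}) = \Omega(m)$ to get condition $(a)$, and verify condition $(b)$ with $\beta = 1$ via the polynomial-time extraction of an assignment. The only differences are cosmetic constants (the paper uses $n \le m$ and $opt(\mathcal{F}) \ge \tfrac{3}{4}m$ rather than your $n \le 2m$ and $opt(\mathcal{F}) \ge m/2$), which do not affect the argument.
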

\begin{proof}
Let us check that Lemma~\ref{lem:Lreducv2} implies a $L$-reduction (whose definition is recalled in Definition~\ref{def:L} of appendix).
Let $OPT_1$ (resp. $OPT_2$) be the optimal value of $\cal{F}$ (resp. $f(\cal{F})$).
Notice that Lemma~\ref{lem:Lreducv2} implies that $OPT_2 = OPT_1+11n+m$.
It is known that $OPT_1 \ge \frac{3}{4}m$ (where $m$ is the number of clauses of ${\cal{F}}$). As $n\le m$ (each variable has at least one positive and one negative occurrence), 
we get $OPT_2 = OPT_1+11n+m \le \alpha OPT_1$ for an appropriate constant $\alpha$,  and thus point $(a)$ of the definition is verified.
Then, given a solution $\S'$ of $f(\cal{F})$, according to Lemma~\ref{lem:Lreducv2} we can construct in polynomial time an assignment $a$  satisfying $c(a)$ clauses with $c(a) \ge \S' - 11n-m$.
Thus, the inequality $(b)$ of Definition~\ref{def:L} with $\beta=1$ becomes  $OPT_1-c(a) \le OPT_2 - \S' = OPT_1+11n+m - \S'$, which is true.
\end{proof}

Reduction of Theorem~\ref{thm:apxhv2} does not imply the {\sf
  NP}-hardness of \perfectpb~as there remain some unused vertices.
However, it is straightforward to adapt the reduction by adding
backward arcs whose head (resp. tail) are before (resp. after) $\T$ to
consume the remaining vertices. This leads to the following result.

\begin{theorem}\label{thm:nphperfectv2dm}
 \perfectpb$^{D_M}$ is {\sf NP}-hard. 
\end{theorem}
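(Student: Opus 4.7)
The plan is to reduce from the decision version of \textsc{Max 2-SAT(3)}, which is NP-hard by Theorem~\ref{thm:apxhv2}: given a formula $\mathcal{F}$ and a threshold $K$, does $\mathcal{F}$ admit an assignment satisfying at least $K$ clauses? Starting from $\T = f(\mathcal{F})$ of Section~\ref{subsec:reduction2}, I would build $\T'$ by choosing $N := 2n + 2m - 3K$ and inserting $N$ new vertex pairs $(u_i, w_i)$, so that the linear order of $\T'$ is $u_1, \ldots, u_N, \V(\T), w_N, \ldots, w_1$, and adding the new backward arcs $\{w_i u_i : i \in [N]\}$ to the FAS. These new arcs are pairwise vertex-disjoint and disjoint from $\bA{A}(\T)$, each $u_i$ has degree $(0,1)$, each $w_i$ has degree $(1,0)$, and original degrees are preserved, so $\T'$ is indeed an instance of \pb$^{D_M}$. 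Since $K \le m \le 3n/2$ in \textsc{Max 2-SAT(3)}, $N \ge 0$, and $N$ is chosen precisely so that $|V(\T')| = 3(11n+m+K)$, matching the threshold of Lemma~\ref{lem:Lreducv2}.

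The main claim to prove is: $\T'$ admits a perfect triangle packing iff $\mathcal{F}$ admits an assignment satisfying at least $K$ clauses. The central observation is that in any sparse tournament, every triangle uses exactly one backward arc (a triangle with two backward arcs would have a vertex incident to both, contradicting the matching property). For the $(\Rightarrow)$ direction, I would split a perfect packing $\S'$ into $\S_T$ (triangles whose backward arc is in $\bA{A}(\T)$, hence entirely inside $V(\T)$) and $\S_D$ (triangles using some new arc $w_i u_i$). Letting $x$ and $y$ denote the number of triangles in $\S_D$ whose middle vertex lies respectively in $V(\T)$ and in the dummy set, I have $3|\S_T| + x = |V(\T)|$ and $2x + 3y = 2N$, so $y \ge 0$ forces $|\S_T| \ge (|V(\T)|-N)/3 = 11n+m+K$, and Lemma~\ref{lem:Lreducv2} then supplies the desired assignment.

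For the $(\Leftarrow)$ direction, starting from a packing $\S_T$ of $\T$ of size $11n+m+K+s$ for some $s \ge 0$, I would extend it to a perfect packing of $\T'$ by keeping $\S_T$, pairing each of the $N-3s$ unused vertices $v \in V(\T)$ with a distinct dummy arc via a triangle $(u_j, v, w_j)$ (valid because every arc between the dummy layers and $V(\T)$ is forward), and covering the remaining $2s$ dummies (coming from $s$ unused dummy arcs) by nested triangles of the form $(u_j, u_i, w_j)$ and $(u_j, w_i, w_j)$ with $j < i$. A concrete explicit assignment is to choose the unused dummy arc indices as $\{N-s+1, \ldots, N\}$ and pair the $k$-th of them with host arcs $2k-1$ and $2k$; this is feasible because $N \ge 3s$, itself a direct consequence of $|\S_T| \le |V(\T)|/3$.

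The only real obstacle I anticipate is the bookkeeping in the $(\Leftarrow)$ direction: verifying that all $2s$ leftover dummy vertices can simultaneously be accommodated as middle vertices of triangles on smaller-indexed host arcs. This reduces to a simple nested-matching argument, ultimately to the inequality $N \ge 3s$, which is the single point requiring care; everything else is routine vertex counting, and hence the reduction is indeed ``straightforward'' as advertised.
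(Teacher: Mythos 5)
Your construction is essentially identical to the paper's: the paper pads $\T=f(\mathcal{F})$ with exactly $n'=2n+2m-3k$ dummy backward arcs placed around $V(\T)$ (so the FAS stays a matching) and runs the same two-way argument, producing $n'$ dummy triangles on the leftover vertices in one direction and counting, in the other, that a perfect packing can use at most $n'$ vertices of $V(\T)$ in dummy triangles, so at least $11n+m+k$ triangles lie inside $V(\T)$ and Lemma~\ref{lem:Lreducv2} applies. One cosmetic slip: $|V(\T')|$ equals $3(11n+m+K)+3N$, not $3(11n+m+K)$ (the $N$ dummy triangles also count toward the perfect packing), but the identities you actually use, $3|\S_T|+x=|V(\T)|$ and $2x+3y=2N$, are correct and the argument goes through.
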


\begin{proof}
Let $({\cal F},k)$ be an instance of the decision problem of
$MAX-2-SAT(3)$ and let $\T = f({\cal F})$ be the tournament defined in
Section~\ref{subsec:reduction2}.  Recall that we have $\T = LK$.  Let
$N = |V(T)| = 35n+5m$, $x^* = 33n+3m+3k$ and $n' = N-x^*$.  We now
define $\T'$ by adding $2n'$ new vertices in $\T$ as follows: $V(\T')
= R_1V(\T)R_2$ with $R_i = \{r_i^l, l \in [n']\}$.  We add to
$\bA{A}(\T')$ the set of arcs $R=\{(r_2^lr_1^l), l \in [n']\}$ which
are called the dummy arcs.
We say that a triangle $t=(u,v,w)$ is dummy iff $(wu) in R$ and $v \in V(\T)$.
Let us prove that there are at least $k$ clauses satisfiable in $\cal{F}$ iff there exists a perfect packing in $\T'$.

$\Rightarrow$\\ Given an assignement satisfying $k$ clause we define a
solution $\S$ with $V(\S) \subseteq V(\T)$ as in
Lemma~\ref{lem:Lreducv2} (triangles of $P_i$ or $\overline{P_i}$ for
each $i \in [n]$, a triangle $Q^x_j$ for each $j \in [m]$, and an
outer triangle $t_l$ with $l \in [k]$ for each satisfied clause. We
have $|\S|=11n+m+k$. This implies that $|V(\T) \setminus V(\S)|=n'$,
and thus we use $n'$ remaining vertices of $V(\T)$ by adding to $\S$
$n'$ dummy triangles.

$\Leftarrow$\\ Let $\S'$ be a perfect packing of $\T'$.  Let $\S = \{t
\in \S' : V(t) \subseteq V(\T)\}$.  Let $X = V(\T) \setminus
V(\S)$. As $\S'$ is a perfect packing of $\T'$, vertices of $X$ must
be used by $|X|$ dummy triangles of $\S'$, implying $|X| \le n'$ and
$|\S| \ge 11n+m+k$.  As $\S$ is set of vertex disjoint triangles of
$\T$ of size at least $11n+m+k$, this implies by
Lemma~\ref{lem:Lreducv2} that at least $k$ clauses are satisfiable in
$\cal{F}$.

\end{proof}

To establish the kernel lower bound of Section~4, we also need the {\sf NP}-hardness
of \perfectpb~where instances have a slightly simpler structure (to
the price of losing the property that there exists a FAS which is a
matching).

 \begin{theorem}\label{thm:nphperfectv2}
  \perfectpb~remains {\sf NP}-hard even restricted to tournament $\T$ admitting the following linear ordering.
 \begin{itemize}
 \item $\T = LK$ where $L$ and $K$ are two tournaments
 \item tournaments $L$ and $K$ are "fixed":
 \begin{itemize}
 \item $K = K_1\dots K_m$ for some $m$, where for each $j \in [m]$ we have $V(K_j) = (\theta_j,c_j)$
  \item $L=R_1L_1 \dots L_n R_2$, where each $L_i$ has is a copy of the variable gadget of Section~\ref{subsec:reduction2}, $R_i = \{r_i^l, l \in [n']\}$  where $n'=2n-m$, and in addition $\bA{L}$ also contains $R =\{(r_2^lr_1^l), l \in [n']\}$ which are called the dummy arcs.
 \end{itemize}
 \fixme{je commente sinon ça compile pas mais on a besoin de ctte propriete !for any $a \in \bA{A}(\T)$, $V(a) \cap V(K) \neq \emptyset$ implies $a=vc_j$ for $v \in L$ (there are no backward arc included in $K$, and all $\theta_j$ have degree $(0,0)$)}
 \end{itemize}
 \end{theorem}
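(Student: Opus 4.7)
My plan is to adapt the construction used in the proof of Theorem~\ref{thm:nphperfectv2dm} so that the output tournament has the restricted shape stated. The source will be an NP-hard SAT-like problem in which every clause must be satisfied (for example a 3-SAT variant whose occurrence pattern is compatible with the variable gadget of Section~\ref{subsec:reduction2}).

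Given a formula $\cal F$ with $n$ variables and $m$ clauses, I would construct $\T = LK$ as follows. The variable segment $L = R_1 L_1 \dots L_n R_2$ uses the variable gadgets $L_i$ of Section~\ref{subsec:reduction2} unchanged. The dummy blocks $R_1 = \{r_1^1, \dots, r_1^{n'}\}$ and $R_2 = \{r_2^1, \dots, r_2^{n'}\}$ of size $n' = 2n - m$ are placed respectively before $L_1$ and after $L_n$, together with the dummy backward arcs $r_2^l r_1^l$. The clause segment $K = K_1 \dots K_m$ uses the compressed gadget $K_j = (\theta_j, c_j)$; for each clause $C_j$, the vertex $c_j$ is linked by backward arcs to the literal vertices of $C_j$ in the appropriate $L_i$'s, with the same vertex encoding as in Section~\ref{subsec:reduction2} (namely $x_i^2$, $x_i^{'2}$ or $\overline{x_i^2}$ depending on whether the occurrence is the first positive, the second positive, or the negative one). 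A direct count then gives $|V(\T)| = 35n + 2m + 2n' = 39n$, so any perfect packing consists of exactly $13n$ triangles.

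I would then show that $\cal F$ is satisfiable if and only if $\T$ admits a perfect packing. The forward direction is a direct construction: place $P_i$ or $\overline{P_i}$ in each $L_i$ according to the assignment ($11n$ triangles), one outer triangle $(v_j, \theta_j, c_j)$ per clause using a satisfying literal vertex $v_j$ ($m$ triangles), and dummy triangles of the form $(r_1^l, u_l, r_2^l)$ absorbing the remaining free vertices of $L$ ($n'$ triangles), for a total of $11n + m + n' = 13n$. For the reverse direction, I would prove analogues of Lemmas~\ref{lem:intv2} and~\ref{lem:goodpatternv2} showing that any perfect packing can be restructured so that each $K_j$ is covered by an outer triangle involving its own $\theta_j$ and $c_j$, each $L_i$ contributes $P_i$ or $\overline{P_i}$, and the rest of $L$ is covered by dummy triangles; the satisfying assignment is then read off from the $P_i$/$\overline{P_i}$ choices.

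The main technical obstacle is the restructuration step under the compressed clause gadget. Because $\theta_j$ has degree $(0,0)$ and the only backward arcs incident to $K$ have head in $K$ of the form $vc_{j'}$ with $v \in L$, a triangle containing $\theta_j$ must take the form $(u, \theta_j, c_{j'})$ with $j' \ge j$, so $\theta_j$ of one clause can in principle be paired with the $c_{j'}$ of a later clause. A variant of Lemma~\ref{lem:intv2}, processing clause indices from largest down to smallest, is required to rearrange such cross-clause configurations into the canonical form $(v, \theta_j, c_j)$ without decreasing the packing size. A second, easier restructuration handles dummy triangles and ensures that no dummy arc $r_2^l r_1^l$ is wasted. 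Once these normalizations are in place, the remainder of the argument follows the template of the proof of Theorem~\ref{thm:nphperfectv2dm}, with the dummy region now embedded inside $L$ instead of flanking $\T$.
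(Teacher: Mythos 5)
Your proposal follows essentially the same route as the paper: reduce from a 3-SAT(3)-type satisfiability problem, reuse the variable gadgets with compressed two-vertex clause gadgets $(\theta_j,c_j)$, pad with $n'=2n-m$ dummy pairs so a perfect packing has exactly $13n$ triangles, and in the reverse direction force each $\theta_j$ onto its own $c_j$ by a downward induction on $j$ before invoking the restructuration of Lemma~\ref{lem:goodpatternv2} on the dummy-free part. The paper's only (cosmetic) difference is that it strips the dummy region and applies Lemma~\ref{lem:goodpatternv2} verbatim to the induced subtournament rather than proving fresh analogues, and it exploits that in a perfect packing the $\theta_j$--$c_j$ pairing is forced rather than merely achievable by restructuring.
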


\begin{proof}
We adapt the reduction of Section~\ref{subsec:reduction2}, reducing now from 3-SAT(3) instead of MAX 2-SAT(3).
Given $\cal{F}$ be an instance of {\sc 3-SAT(3)} with $n$ variables $\{x_i\}$ nd $m$ clauses $\{C_j\}$.
For each variable $x_i$ with $i \in [n]$, we create a tournament $L_i$ exactly as in Section~\ref{subsec:reduction2} and we define $L=L_1 \dots L_n$.
For each clause $C_j$ with $j \in [m]$, we create a tournament $K_j$ with $V(K_j) = (\theta_j,c_j)$, and we define $K = K_1\dots K_m$. 
Let us now define $\T = LK$. Now, we add to $\bA{A}(\T)$ the following backward arcs from $V(K)$ to $V(L)$ (again, we follow the construction of Section~\ref{subsec:reduction2} except
that now each $c_j$ has degree $(3,0)$). If $C_j = l_{i_1} \vee l_{i_2} \vee l_{i_3}$ is a clause in $\cal{F}$ then we add the arcs 
$c_jv_{i_1}, c_jv_{i_2}, c_jv_{i_3}$ where $v_{i_c}$ is the vertex in $\{x_{i_c}^2,x_{i_c}^{'2},\overline{x_{i_c}^2}\}$ corresponding to $l_{i_c}$: if $l_{i_c}$ is a positive occurrence of variable $i_c$ we chose
$v_{i_c} \in \{x_{i_c}^2,x_{i_c}^{'2}\}$, otherwise we chose $v_{i_c} = \overline{x_{i_c}^2}$. Moreover, we chose vertices $v_{i_c}$ in such a way that for any $i \in [n]$, for each $v \in \{x_i^2,x_i^{'2},\overline{x_i^2}\}$ there exists a unique arc $a \in \bA{A}(\T)$ such that $h(a)=v$. This is always possible as each variable has at most $2$ positive occurrences and $1$ negative one.

Finally, we add $2n'$ new vertices in $\T$ as follows: $V(T) = R_1V(L)R_2V(K)$, $R_i = \{r_i^l, l \in [n']\}$  where $n'=2n-m$.
We add to $\bA{A}(\T)$ the set of arcs $R =\{(r_2^lr_1^l), l \in [n']\}$ which are called the dummy arcs.
Notice that $\T$ satisfies the claimed structure (defining the left part as $R_1LR_2$ and not only $L$).
We define an outer and variable inner triangle as in Section~\ref{sec:approx} (there are no more clause inner triangle), and in addition we say that a triangle $t=(u,v,w)$ is dummy iff $(wu) \in R$ and $v \in V(L)$.
Let us prove that there is an assignment satisfying the $m$ clauses of $\cal{F}$ iff $\T$ has a perfect packing.

$\Rightarrow$\\
Given an assignment satisfying the $m$ clauses we define a solution $\S$ containing only outer, variable inner and dummy triangles. 
The variable inner triangle are defined as in Lemma~\ref{lem:Lreducv2} (triangles of $P_i$ or $\overline{P_i}$ for each $i \in [n]$).
For each clause $j \in [m]$ satisfied by a literal $l_{i_x}$ we create an outer triangle $(v_{i_x},\theta_j,c_j)$.
It remains now $2n-m=n'$ vertices of $L$, that we use by adding $n'$ dummy triangles to $\S$.

$\Leftarrow$\\
Let $\S$ be a perfect packing of $\T'$.
Notice that restructuration lemmas of Section~\ref{sec:approx} do not directly remain true because of the dummy arcs. However, we can adapt in a straightforward manner arguments of these lemmas,
using the fact that $\S$ is even a perfect packing.
Given a solution $\S$, we define as in Section~\ref{sec:approx} set $I^{L}_i =\{t \in \S : V(t) \subseteq V(L_i)$, 
$I^{L} = \cup_{i \in [n]} I^L_i$, $O = \{t \in \S \mbox{ $t$ is an outer triangle }\}$, and $D = \{t \in \S \mbox{ $t$ is a dummy triangle }\}$.
Again, we do not claim (at this point) that $\S$ does not contain other triangles. Given any perfect packing $\S$ of $\T$, we can prove the following properties.
\begin{itemize}
\item $\S$ must contain exactly $m$ outer triangles ($|O| =m$). Indeed, for any $j$ from $m$ to $1$, the only way to use $\theta_j$ is to create 
an outer triangle $(u_j,\theta_j,c_j)$. This implies that triangles of $O$ consume exactly $m$ disjoint vertices in $L$.
\item for any $i \in [n]$, we must have $|I^L_i|=11$.
Indeed, let $x$ be the number of vertices of $L$ used in $\S$ (as $\S$ is a perfect packing we know that $x=|L|=35n$).
The only triangles of $\S$ that can use a vertex of $L$ are the outer, the variable inner and the dummy triangles, implying $x \le (\sum_{i \in [n]}|I^L_i|)+m+n'$
as $|D| \le n'$. As $|V(L_i)| = 35$ we have $|I^L_i| \le 11$ and thus we must have $|I^L_i|=11$ for any $i$.
\end{itemize}

\fixme{c'est ici où c'est un peu crado}
Let us now consider the tournament $\T_0 = \T[V(\T) \setminus V(R)]$ without the dummy arcs, and $\S_0 = \{t \in \S : V(t) \subseteq V(\T_0)\}$.
We adapt in a straightforward way the notion of variable inner and outer triangle in $\T_0$. 
Observe that the variable inner and outer triangles of $\S$ and $\S_0$ are the same, and thus are both denoted respectively $I^L_i$ and $\S^{O_i}$.
In particular, $\S_0$ still contains $m$ outer triangle of $\T_0$.
Now we simply apply proof of Lemma~\ref{lem:goodpatternv2} on $\S_0$. More precisely, Lemma~\ref{lem:goodpatternv2} restructures $\S_0$ into a solution $\S_0'$ with $\S_0' = (\S_0 \setminus (I^L_i \cup \S^{O_i})) \cup (I^{'L}_i \cup \S^{'O_i})$, 
where $I^{'L}_i$ and $\S^{'O_i}$ satisfy properties $(\triangle)$. In particular, as $|I^L_i|=|I^{'L}_i|=11$, $ \triangle_3$ implies that $|\S_0^{'O_i}| \ge |\S_0^{O_i}|$, 
and thus that $|\S_0^{'O}| \ge |\S_0^{O}| = m$. Thus, $\S'_0$ satisfies $I^L_i = P_i$ or $I^L_i = \overline{P_i}$ for any $i$, and has $m$ outer triangles. We can now define 
 as in Lemma~\ref{lem:Lreducv2} from $\S'_0$ an assignment satisfying the $m$ clauses.

\end{proof}

\subsection{$(1+\frac{6}{c-1})$-approximation when backward arcs have large minspan}
Given a set of pairwise distinct pairs $D$ and an integer $c$, we denote by \pb$^D_{\ge c}$ the problem \pb$^D$ restricted to tournaments such that 
there exists a linear representation of minspan at least $c$ and where $d(v) \in D$ for all $v$.
In all this section we consider an instance $\T$ of \pb$^{D_M}_{\ge c}$ with a given linear ordering $(V(\T),\bA{A}(\T))$ of minspan at least $c$ and whose degrees belong to $D_M$.
The motivation for studying the approximability of this special case comes from the situation of MAX-SAT(c) where the approximability becomes easier as $c$ grows, as the derandomized uniform assignment provides a $\frac{2^c}{2^c-1}$ approximation algorithm. Somehow, one could claim that MAX-SAT(c) becomes easy to approximate for large $c$ as there many ways to satisfy a given clause. 
As the same intuition applies for tournament admitting an ordering with large minspan (as there are $c-1$ different ways to use a given backward in a triangle), our objective was to 
find a polynomial approximation algorithm whose ratio tends to $1$ when $c$ increases.

Let us now define algorithm $\ALG$.
We define a bipartite graph $G = (V_1,V_2,E)$ with $V_1 = \{v^1_{a} : a \in \bA{A}(\T)\}$ and
$V_2 =\{v^2_l : v_l \in V_{(0,0)}\}$. 
Thus, to each backward arc we associate a vertex in $V_1$ and to each vertex $v_l$ with $d(v_l) = (0,0)$ we associate a vertex in $V_2$. 
Then, $\{v^1_{a},v^2_l\} \in E$ iff $(h(a),v_l,t(a))$ is a triangle in $\T$.

In phase $1$, $\ALG$ computes a maximum matching $M = \{e_l, l \in [|M|]\}$ in $G$. For every $e_l = \{v^1_{ij},v^2_l\} \in M$ create a triangle $t^1_l = (v_j,v_l,v_i)$.
Let $S^1 = \{t^1_l, l \in [|M|]\}$. Notice that triangles of $S^1$ are vertex disjoint. Let us now turn to phase $2$. 
Let $\T^2$ be the tournament $\T$ where we removed all vertices $V(S^1)$. 
Let $(V(\T^2),\bA{A}(\T^2))$ be the linear ordering of $\T^2$ obtained by
removing $V(S^1)$ in $(V(\T),\bA{A}(\T))$. 
We say that three distinct backward edges $\{a_1,a_2,a_3\} \subseteq \bA{A}(\T^2)$ can be packed into triangles $t_1$ and $t_2$ iff $V(\{t_1,t_2\}) = V(\{a_1,a_2,a_3\})$ and the $t_i$ are vertex disjoint.
For example, if $h(a_1) < h(a_2) < t(a_1) < h(a_3) < t(a_2) < t(a_3)$, then $\{a_1,a_2,a_3\}$ can be packed into $(h(a_1),h(a_2),t(a_1))$ and $(h(a_3),t(a_2),t(a_3))$  (recall that 
when $\bA{A}(\T)$ form a matching, $(u,v,w)$ is triangle iff $wu \in \bA{A}(\T)$ and $u<v<w$), and if $h(a_1) < h(a_2) < t(a_2) < h(a_3) < t(a_3) < t(a_1)$, then $\{a_1,a_2,a_3\}$ cannot be packed into two triangles. In phase $2$, while it is possible, $\ALG$ finds a triplet of arcs of $Y \subseteq \bA{A}(\T^2)$ that can be packed into triangles, create the two corresponding triangles, and remove $V(Y)$.
Let $S^2$ be the triangle created in phase $2$ and let $S = S^1 \cup S^2$. 

\begin{observation}\label{obs:arc}
For any $a \in \bA{A}(\T)$, either $V(a) \subseteq V(S)$ or $V(a) \cap V(S) = \emptyset$. Equivalently, no backward arc has one endpoint in $V(S)$ and the other outside $V(S)$.
\end{observation}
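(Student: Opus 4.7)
The plan is to argue by contradiction: assume there exists a backward arc $a \in \bA{A}(\T)$ with exactly one of its endpoints in $V(S)$, and by symmetry suppose that $h(a) \in V(S)$ while $t(a) \notin V(S)$. The key observation is that because $\bA{A}(\T)$ is a matching, each vertex of $\T$ is incident to at most one backward arc; hence $h(a)$ is the head of $a$ and participates in no other backward arc. In particular $h(a)$ has degree $(0,1)$, so $h(a) \notin V_{(0,0)}$. Let $t \in S$ be the (unique) triangle containing $h(a)$; I distinguish two cases depending on whether $t$ was produced in phase~$1$ or in phase~$2$ of $\ALG$.

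If $t \in S^1$, then by construction of the matching $M$ in $G$, the triangle $t$ has the form $(h(a'),v_l,t(a'))$ with $a' \in \bA{A}(\T)$ and $v_l \in V_{(0,0)}$. Since $h(a) \notin V_{(0,0)}$ we have $h(a) \neq v_l$, so $h(a) \in \{h(a'),t(a')\}$. But $h(a)$ belongs to the single backward arc $a$, which forces $a' = a$, and then $t(a) \in V(t) \subseteq V(S)$, contradicting $t(a) \notin V(S)$.

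If $t \in S^2$, then $t$ is one of the two triangles $\{t_1,t_2\}$ produced by packing some triple of backward arcs $\{a_1,a_2,a_3\} \subseteq \bA{A}(\T^2)$ with $V(\{t_1,t_2\}) = V(\{a_1,a_2,a_3\})$. Thus $h(a)$ must be an endpoint of some $a_i$, and the matching property again forces $a = a_i$, whence $t(a) \in V(\{t_1,t_2\}) \subseteq V(S)$, a contradiction. The case where $t(a) \in V(S)$ and $h(a) \notin V(S)$ is fully symmetric. The only step that requires any care is identifying the precise shape of triangles produced by each phase of $\ALG$; once this is made explicit, the matching property of $\bA{A}(\T)$ makes the conclusion essentially automatic, since any triangle of $S$ that contains one endpoint of $a$ is forced to contain the other as well.
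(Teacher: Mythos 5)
Your proof is correct: the paper states this observation without any proof, and your argument supplies exactly the justification it implicitly relies on, namely that every triangle produced in phase~1 or phase~2 contains an endpoint of a backward arc only together with the other endpoint of that same arc, so the matching property of $\bA{A}(\T)$ forces both endpoints of $a$ into $V(S)$ as soon as one is. Nothing further is needed.
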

According to Observation~\ref{obs:arc}, we can partition $\bA{A}(\T) = \bA{A}_0 \cup \bA{A}_1 \cup \bA{A}_2$, where for $i \in \{1,2\}, $ $\bA{A}^i = \{a \in \bA{A}(\T) : V(a) \subseteq V(S^i)$ is the set of arcs used in phase $i$, 
and $\bA{A}_0 =_{def} \{b_i, i \in [x] \}$ are the remaining unused arcs. Let $\bA{A}_\ALG = \bA{A}_1 \cup \bA{A}_2$, $m_i = |\bA{A}_i|$, $m = m_0+m_1+m_2$ and $m_{\ALG} = m_1+m_2$ the number of arcs (entirely) consumed by $\ALG$.
To prove the $1+f(\frac{6}{c-1})$ desired approximation ratio, we will first prove in Lemma~\ref{lemma:numberarcs} that $\ALG$ uses at most all the arcs ($m_A \ge (1-\epsilon(c))m$), and in Theorem~\ref{thm:approxc} that the number of triangles made with these arcs is "optimal". Notice that the latter condition is mandatory as if $\ALG$ used its $m_\ALG$ arcs to only create $\frac{2}{3}(m_\ALG)$ triangles in phase 2 
instead of creating $m' \approx m_\ALG$ triangle with $m'$ backward arcs and $m'$ vertices of degree $(0,0)$, we would have a $\frac{3}{2}$ approximation ratio.

\begin{lemma}\label{lemma:numberarcs}
For any $c\ge 2$, $m_\ALG \ge (1-\frac{6}{c+5})m$
\end{lemma}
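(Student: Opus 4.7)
The plan is to double-count the total span-volume of the unused arcs, $\sum_{i=1}^{m_0} |s(b_i)| \ge c\, m_0$, and bound the right-hand side vertex-by-vertex. Writing $\lambda_v := |\{i : v \in s(b_i)\}|$, we have $\sum_i |s(b_i)| = \sum_{v \in V(\T)} \lambda_v$, so it is enough to establish per-vertex bounds on $\lambda_v$ using the two optimality certificates of $\ALG$: maximality of the phase-$1$ matching $M$, and exhaustion of phase~$2$.

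First I would observe that for any $v \in V_{(0,0)}$ lying in the span of some unused arc $b_i$, $v$ must be matched by $M$; otherwise $(v^1_{b_i}, v^2_v)$ is a length-one augmenting path, contradicting the maximality of $M$. Hence only the $m_1$ matched $V_{(0,0)}$-vertices contribute to the $V_{(0,0)}$-part of $\sum_v \lambda_v$. Next I would establish the central geometric bound $\lambda_v \le 2$ for every vertex $v$: if three unused arcs $b_1, b_2, b_3$ all contained $v$ in their spans, then (since their heads all lie to the left of $v$ and their tails to the right) a short case analysis on the ordering of the tails shows that in every case one can pick two of them as carriers of triangles and use the endpoints of the third as middle vertices, packing the triple into two vertex-disjoint triangles and contradicting the termination of phase~$2$.

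This bound can be sharpened at endpoints of unused arcs: if $v \in V(b)$ for some $b \in \bA{A}_0$, then $\lambda_v \le 1$. Indeed, if two other unused arcs $b_1, b_2$ both spanned $v$, a case analysis on relative positions produces an explicit packing of $\{b, b_1, b_2\}$ — either by using $b_1, b_2$ as carriers with $h(b), t(b)$ as middles (possible when $t(b)$ lies in $s(b_1)$ or $s(b_2)$), or otherwise by taking $b$ together with the $b_j$ whose tail exceeds $t(b)$'s constraint as carriers, and using the endpoints of the remaining $b_{j'}$ as middles — again contradicting phase~$2$.

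To conclude, I would partition $V(\T) = V_{(0,0)} \cup V(\bA{A}_1) \cup V(\bA{A}_2) \cup V(\bA{A}_0)$ and sum the per-vertex bounds: the contribution to $\sum_v \lambda_v$ is at most $2m_1$ from $V_{(0,0)}$ (only matched vertices count), at most $2\cdot 2m_1 = 4m_1$ from $V(\bA{A}_1)$, at most $4m_2$ from $V(\bA{A}_2)$, and at most $2m_0$ from $V(\bA{A}_0)$ via the endpoint refinement. A finer accounting of the $V(\bA{A}_0)$ term — exploiting that when both endpoints of an unused arc are spanned by unused arcs they must actually be spanned by a single common \emph{parent}, as otherwise a packable triple arises — saves one extra $m_0$ in the aggregate. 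Combining with $m_\ALG = m_1 + m_2$ and $m = m_0 + m_\ALG$ and rearranging yields $(c-1)\,m_0 \le 6\, m_\ALG$, which rearranges to $m_\ALG \ge (1 - \tfrac{6}{c+5})\,m$. The main obstacle is precisely this case analysis behind $\lambda_v \le 2$ and its endpoint refinement: one must systematically enumerate the relative orderings of three arcs that share a span-vertex (or an endpoint of one in the other two's spans) and exhibit a concrete pair of vertex-disjoint triangles in each case.
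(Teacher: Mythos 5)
Your overall strategy is a legitimate reorganization of the paper's argument: the paper also double-counts span incidences between unused and used arcs, and your three combinatorial certificates --- every $(0,0)$-vertex in the span of an unused arc must be matched in $M$, $\lambda_v\le 2$ for every vertex, and $\lambda_v\le 1$ at endpoints of unused arcs --- are all correct and are exactly the facts the paper exploits (its bound $|B(a)|\le 6$ for $a\in\bA{A}_1$ and $\le 4$ for $a\in\bA{A}_2$ is your $6m_1+4m_2$ in disguise). The gap is in the last step, the ``finer accounting'' of the $V(\bA{A}_0)$ term. The claim that $\sum_{v\in V(\bA{A}_0)}\lambda_v\le m_0$ is false. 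Take three unused arcs with $h(b_1)<h(b_2)<t(b_2)<h(b_3)<t(b_1)<t(b_3)$. This triple cannot be packed (discarding $b_2$ puts both of its endpoints in $s(b_1)$ and neither in $s(b_3)$; discarding $b_1$ or $b_3$ leaves an endpoint outside both carrier spans), and no vertex lies in three spans, so phase~2 can legitimately terminate on this configuration. Yet the endpoints of unused arcs account for four span incidences ($h(b_2),t(b_2),h(b_3)\in s(b_1)$ and $t(b_1)\in s(b_3)$) against $m_0=3$. Your ``common parent'' observation is true but only caps $\lambda_{h(b)}+\lambda_{t(b)}$ at $2$ per arc (attained precisely when $s(b)$ is nested inside some $s(b')$, as for $b_2$ above), so the honest bound on this term is $2m_0$, and your computation then yields $(c-2)m_0\le 6m_\ALG$, i.e.\ $m_\ALG\ge(1-\frac{6}{c+4})m$ --- strictly weaker than the stated lemma.

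The idea you are missing is the paper's descent to an \emph{innermost} unused arc: rather than counting incidences over $s(b_i)$ itself, it charges $b_i$ through the span of an innermost arc $a_0$ contained in $s(b_i)$, inside which at most \emph{one} endpoint of another unused arc can occur (the set $Z$ with $|Z|\le 1$); every other vertex of $s(a_0)$ then maps to a distinct used arc, giving $c-1$ charges per unused arc instead of $c$ incidences minus a per-endpoint loss. That is what recovers the constant $\frac{6}{c+5}$. As written, your proof establishes a correct but weaker inequality and does not prove the lemma as stated.
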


\begin{proof}
In all this proof, the span $s(a)$ is always considered in the initial input $\T$, and not in $\T^2$.
For any $i \in [x]$, let us associate to each $b_i \in \bA{A}_0$ a set $B_i \subseteq \bA{A}_\ALG$ defined as follows (see Figure~\ref{fig:B_i} for an example).
Let $b_j \in \bA{A}_0$ such that $s(b_j) \subseteq s(b_i)$ and there does not exist a $b_k \in \bA{A}_0$ such that
$s(b_k)$ included in $s(b_j)$ (we may have $b_j = b_i$).
Let $Z = V(\bA{A}_0) \cap s(b_j)$. Notice that $|Z| \le 1$, meaning that there is at most one endpoint of a $b_l, l\neq j$ in $s(b_j)$, as otherwise we would be three arcs in $\bA{A}_0$ that could be packed in two triangles. If there exists $a \in \bA{A}_{\ALG}$ with $s(a) \subseteq s(b_j)$ we define $a_0 = a$, and otherwise we define $a_0 = b_j$.
Now, let $v \in s(a_0) \setminus Z$.
 Observe that $V(\T)$ is partitioned into $V(\bA{A}_0) \cup V(\bA{A}_{\ALG}) \cup V_{(0,0)}$. If $v \in V_{(0,0)}$, then there exists $t^1_l = (u,v,w)$ with $wu \in \bA{A}_1$ (as otherwise the matching in phase 1 would not be maximal and we could add $b_j$ and $v$), 
and we add $wu$ to $B_i$. Otherwise, $v \in V(a)$ with $a \in \bA{A}_{\ALG}$ (this arcs could have been used in phase $1$ or phase $2$), and we add $a$ to $B_i$. Notice that as $a_0$ does not properly contains another arc of $\bA{A}_{\ALG}$, all the added arcs are pairwise distinct, and thus $|B_i| = |s(a_0) \setminus Z| \ge c-1$. 

\begin{figure}[!htbp]
\begin{center}
\includegraphics[width=0.75\textwidth]{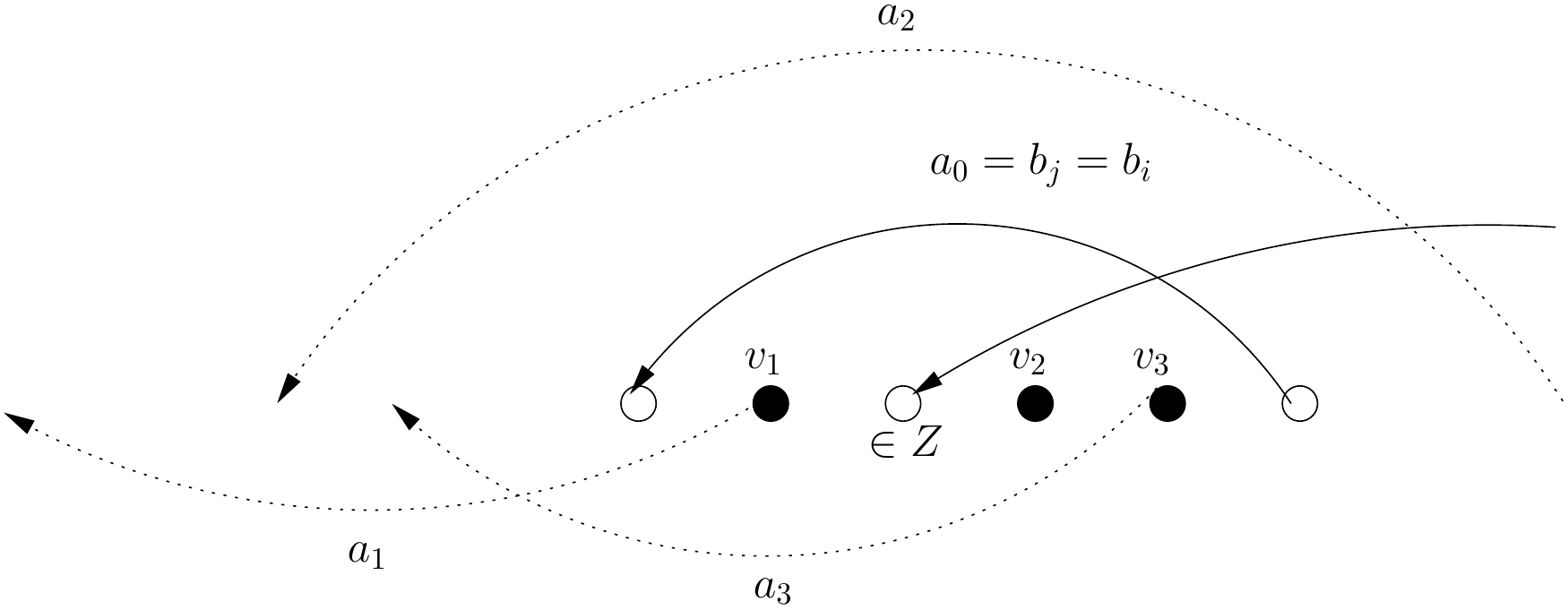}
\end{center}
\caption{On this example white vertices represent $V(\T) \setminus V(S)$ (vertices not used by $\ALG$), and black ones represent $V(S)$. In this case 
we have $B_i = \{a_l, l \in [3]\}$. Indeed, each $v_l \in s(a_0) \setminus Z$, for $l \in [3]$, brings $a_l$ in $B_i$. In particular $v_2 \in V_{(0,0)}$ and was used with $a_2$ to create
a triangle in phase 1.}
\label{fig:B_i}
\end{figure}

\begin{figure}[!htbp]
\begin{center}
\includegraphics[width=0.75\textwidth]{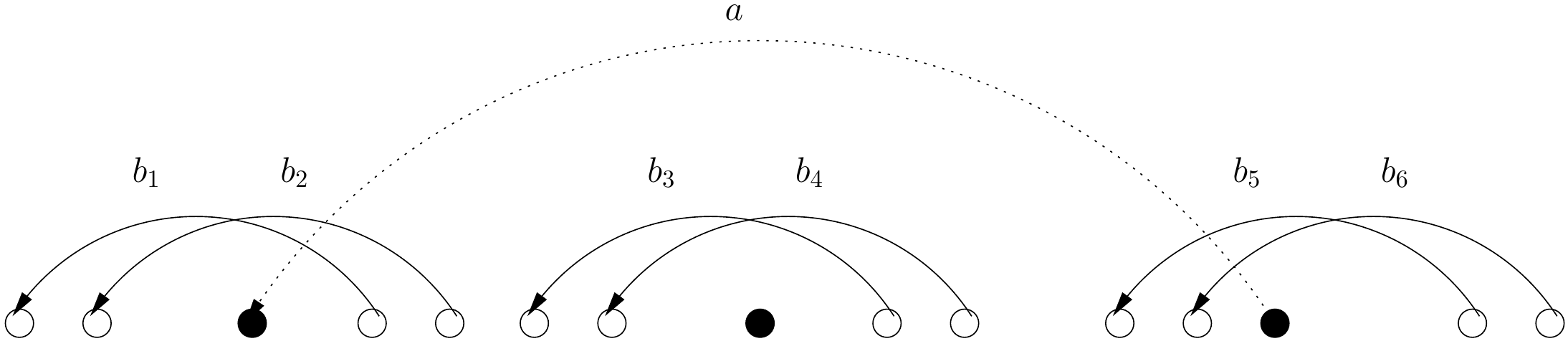}
\end{center}
\caption{Example where $|B(a)|=6$ for $a \in \bA{A}_\ALG$, where $B(a)=\{b_l, l \in [6]\}$.}
\label{fig:B_i_tight}
\end{figure}

Given $a \in \bA{A}_{\ALG}$, let $B(a) = \{B_i, a \in B_i\}$. Let us
prove that $|B(a)| \le 6$ for any $a \in \bA{A}_{\ALG}$.  For any $v
\in V(S)$, let $d_B(v) = |\{b_i : v \in s(b_i)\}|$. Observe that
$d_B(v) \le 2$, as otherwise any triplet of arcs containing $v$ in
their span could be packed into two triangles (there are only $6$
cases to check according to the $3!$ possible ordering of the tail of
these $3$ arcs).
For any $a \in \bA{A}_1$, let $V'(a) = V(t^a)$ where $t^a \in S$ is the triangle containing $a$, 
and for any $a \in A_2$, let $V'(a) = V(a)$.
Observe that by definition of the $B_i$, $a \in B_i$ implies that $b_i$ contributes to the degree $d_B(v)$ for a $v \in V'(a)$. As in particular $d_B(v)$ for any $v \in V'(a)$, 
this implies by pigeonhole principle that $|B(a)| \le 6$ (notice that this bound is tight as depicted Figure~\ref{fig:B_i_tight}).
Thus, if we consider the bipartite graph with vertex set $(\bA{A}_0,\bA{A}_{\ALG})$ and an edge between $b_i \in \bA{A}_0$ and $a \in \bA{A}_{\ALG}$ iff $a \in B_i$, the number of edges $x$ of this graph satisfies
$|\bA{A}_0|(c-1) \le x \le 6|\bA{A}_{\ALG}|$, implying the desired inequality as $m_\ALG = m - m_0$.
\end{proof}

\begin{theorem}\label{thm:approxc}
For any $c \ge 2$, $\ALG$ is a polynomial $(1+\frac{6}{c-1})$ approximation algorithm for \pb$^{D_M}_{\ge c}$.
\end{theorem}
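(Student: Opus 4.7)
The plan is to combine Lemma~\ref{lemma:numberarcs} with a vertex-counting upper bound on $opt(\T)$ to reach the claimed ratio. Since the FAS is a matching, every triangle of $\T$ uses exactly one backward arc: a triangle with two backward arcs would force some vertex to be the endpoint of two distinct backward arcs, contradicting the matching structure. Hence $opt(\T) \leq m$, and every optimum packing $\S^*$ is in bijection with the set of backward arcs used by its triangles. My aim is to sharpen this to $opt(\T) \leq |S| + \frac{2}{3} m_0$, which together with $(c-1) m_0 \leq 6 m_\ALG$ (Lemma~\ref{lemma:numberarcs}) and $m_\ALG \leq \frac{3}{2} |S|$ (from $|S| \geq \frac{2}{3}(m_1 + m_2)$) will yield the announced $1 + 6/(c-1)$ ratio.

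For the sharpening, I would partition the triangles of an optimum packing $\S^*$ according to their middle vertex. Let $k_{\mathrm{iso}}$ count triangles whose middle lies in $V_{(0,0)}$ and $k_{\mathrm{arc}}$ count those whose middle is an endpoint of some backward arc. The $k_{\mathrm{iso}}$ triangles have pairwise distinct backward arcs and pairwise distinct isolated middles, so they form a matching of the bipartite graph $G$ used in phase~1; since $M$ is a maximum matching of $G$, $k_{\mathrm{iso}} \leq m_1$. For $k_{\mathrm{arc}}$, each such middle $v$ is an endpoint of a backward arc $a'$, and $a'$ cannot itself be the backward arc of any triangle of $\S^*$ (otherwise that triangle would also contain $v$, violating vertex-disjointness). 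Hence the $k_{\mathrm{arc}}$ middles are drawn from the $2(m-opt(\T))$ endpoints of the backward arcs not used as backward arcs by $\S^*$, so $k_{\mathrm{arc}} \leq 2(m - opt(\T))$. Summing, $opt(\T) \leq m_1 + 2(m - opt(\T))$, equivalently $opt(\T) \leq (m_1 + 2m)/3$; using $m = m_0 + m_1 + m_2$ and $|S| = m_1 + \frac{2}{3} m_2$ this rearranges to $opt(\T) \leq |S| + \frac{2}{3} m_0$.

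To conclude, Lemma~\ref{lemma:numberarcs} gives $m_0 (c-1) \leq 6 m_\ALG$, and $|S| = m_1 + \frac{2}{3} m_2 \geq \frac{2}{3}(m_1 + m_2) = \frac{2}{3} m_\ALG$ yields $m_\ALG \leq \frac{3}{2}|S|$. Chaining them, $\frac{2}{3} m_0 \leq \frac{4 m_\ALG}{c-1} \leq \frac{6 |S|}{c-1}$, so $opt(\T) \leq |S| + \frac{6|S|}{c-1} = (1 + \frac{6}{c-1})|S|$. The polynomial running time is evident because phase~1 is a bipartite maximum matching and phase~2 iterates over triples of backward arcs until no packable one remains. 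The delicate step is the bound $k_{\mathrm{arc}} \leq 2(m - opt(\T))$, which hinges on the observation that each non-isolated middle locks the corresponding backward arc out of the optimum's backward-arc set; once this is in place, the remaining arithmetic with Lemma~\ref{lemma:numberarcs} is routine.
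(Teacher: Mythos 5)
Your proof is correct and follows essentially the same route as the paper: both arguments reduce to the key inequality $opt(\T) \le \frac{m_1+2m}{3}$ (equivalently $opt(\T) \le |S| + \frac{2}{3}m_0$), using the fact that triangles with isolated middles form a matching in $G$, and then combine it with Lemma~\ref{lemma:numberarcs} and $|S| = m_1 + \frac{2}{3}m_2$. The only difference is in the bookkeeping for that inequality: the paper partitions $OPT$ into three classes ($OPT_1$, $OPT_2$, $OPT_3$, charged respectively $1$, $\frac{3}{2}$ and $2$ pairwise-distinct backward arcs per triangle), whereas you use a two-way split and close the count by noting that the non-isolated middles must lie among the $2(m-opt(\T))$ endpoints of arcs not serving as the backward arc of any optimal triangle, which gives $3\,opt(\T) \le m_1 + 2m$ directly.
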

\begin{proof}
Let $OPT$ be an optimal solution. Let us define set $OPT_i \subseteq OPT$ and $\bA{A}^*_i \subseteq \bA{A}(\T)$ as follows.
Let $t=(u,v,w) \in OPT$. As the FAS of the instance is a matching, we know that $wu \in \bA{A}(\T)$ as we cannot have a triangle with two backward arcs.
If $d(v)=(0,0)$ then we add $t$ to $OPT_1$ and $wu$ to $\bA{A}^*_1$.
Otherwise, let $v'$ be the other endpoint of the unique arc $a$ containing $v$. If $v' \notin V(OPT)$, then we add $t$ to $OPT_3$ and $\{wu,a\}$ to $\bA{A}^*_3$.
Otherwise, let $t' \in OPT$ such that $v' \in V(t')$. As the FAS of the instance is a matching we know that $v'$ is the middle point of $t'$, or more formally that 
$t' = (u',v',w')$ with $u'w' \in \bA{A}(\T)$. We add $\{t,t'\}$ to $OPT_2$ and $\{wu,a,w'u'\}$ to $\bA{A}^*_2$. Notice that the $OPT_i$ form a partition of $OPT$, and that the $\bA{A}^*_i$ have pairwise empty intersection, implying $|\bA{A}^*_1|+|\bA{A}^*_2|+|\bA{A}^*_3| \le m$. Notice also that as triangles of $OPT_1$ correspond to a matching of size $|OPT_1|$ in the bipartite graph defined in phase $1$ of algorithm $\ALG$, we have $|OPT_1|=|\bA{A}^*_1| \le |\bA{A}_1|$.

Putting pieces together we get (recall that $S$ is the solution computed by $\ALG$): $|OPT| = |OPT_1|+|OPT_2|+|OPT_3| = |\bA{A}^*_1|+\frac{2}{3}|\bA{A}^*_2|+\frac{1}{2}|\bA{A}^*_3| 
      \le |\bA{A}^*_1|+\frac{2}{3}(|\bA{A}^*_2|+|\bA{A}^*_3|) \le |\bA{A}^*_1|+\frac{2}{3}(m-|\bA{A}^*_1|) \le \frac{1}{3}|\bA{A}_1|+\frac{2}{3}m $ and 
$|S| = |S^1|+|S^2| 
    =   |\bA{A}_1|+\frac{2}{3}|\bA{A}_2|
    \ge |\bA{A}_1|+\frac{2}{3}((1-\frac{6}{c+5})m - |\bA{A}_1|) 
   = \frac{1}{3}|\bA{A}_1|+\frac{2}{3}(1-\frac{6}{c+5})m $
which implies the desired ratio.
\end{proof}

\section{Kernelization} 
\label{sec:kernel}
In all this section we consider the decision problem \pb~parameterized by the size of the solution. Thus, an input is a pair $I=(\T,k)$ and we say that $I$ is
positive iff there exists a set of $k$ vertex disjoint triangles in $\T$.

\subsection{Positive results for sparse instances}

Observe first that the kernel in $\O(k^2)$ vertices for $3$-{\sc Set Packing} of~\cite{abu2009quadratic} directly implies a kernel in $\O(k^2)$ vertices for \pb.
Indeed, given an instance $(\T=(V,A),k)$ of \pb, we create an instance $(I'=(V,C),k)$ of $3$-{\sc Set Packing} by creating an hyperedge $c \in C$ for each triangle of $\T$. 
Then, as the kernel of~\cite{abu2009quadratic} only removes vertices, it outputs an induced instance $(\overline{I'}=I'[V'],k')$ of $I$ with $V' \subseteq V$, and thus this induced instance can be interpreted 
as a subtournament, and the corresponding instance $(\T[V'],k')$ is an equivalent tournament with $\O(k^2)$ vertices.

As shown in the next theorem, as we could expect it is also possible to have kernel bounded by the number of backward arcs.

\begin{theorem}\label{thm:kernelm}
\pb~admits a polynomial kernel with $\O(m)$ vertices, where $m$ is the number of arcs in a given FAS of the input.
\end{theorem}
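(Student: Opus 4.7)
\medskip
\noindent\emph{Proof plan.} The plan is to produce an equivalent instance on $\O(m)$ vertices via three successive reduction rules, exploiting the fact that $V_{(0,0)}$ accounts for everything except the $\le 2m$ endpoints of the given FAS.

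\emph{Step 1 (bounding~$k$).} Since $\bA{A}(\T)$ is a FAS, every triangle of $\T$ contains at least one backward arc; and because the triangles of a packing are vertex-disjoint they are also arc-disjoint. Hence any packing has at most $m$ triangles, and if $k>m$ I would simply return a trivial NO-instance. In the sequel I assume $k\le m$.

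\emph{Step 2 (isolated middle candidates).} Every $v\in V_{(0,0)}$ has only forward arcs at it, so it can appear in a triangle only as the middle vertex, which requires $v\in s(a)$ for some $a\in\bA{A}(\T)$. I would remove any $v\in V_{(0,0)}$ not in the span of any backward arc.

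\emph{Step 3 (reducing $|V_{(0,0)}|$).} This is the core. I would form the bipartite graph $B=(V_{(0,0)},\bA{A}(\T),E)$ where $(v,a)\in E$ iff $v\in s(a)$. An edge $(v,a)$ corresponds exactly to an \emph{inner} triangle $(h(a),v,t(a))$ with middle $v\in V_{(0,0)}$. Any packing of $\T$ decomposes into (i) a set of inner triangles, which induces a matching of $B$, and (ii) a disjoint set of \emph{outer} triangles whose three vertices all lie in $V\setminus V_{(0,0)}$ (two-backward-arc triangles, or one-backward-arc triangles whose middle vertex itself has backward arcs). Outer triangles are untouched by any operation on $V_{(0,0)}$. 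I would then apply a standard crown/expansion decomposition to $B$: as long as $|V_{(0,0)}|>cm$ for a suitable constant~$c$, extract a non-empty $H\subseteq V_{(0,0)}$ with $N_B(H)\subseteq C\subseteq\bA{A}(\T)$ together with a matching of $C$ into $V_{(0,0)}\setminus H$, and delete $H$. Any inner triangle of an optimal packing whose middle falls in $H$ gets rerouted, along the crown matching, to an inner triangle on the same backward arc whose middle lies in $V_{(0,0)}\setminus H$; distinctness of crown partners preserves disjointness among inner triangles, and outer triangles are unaffected since they never touch $V_{(0,0)}$. Iterating yields $|V_{(0,0)}|=\O(m)$, and combined with $|V(\T)\setminus V_{(0,0)}|\le 2m$ this gives a kernel on $\O(m)$ vertices.

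\emph{Main obstacle.} The delicate point is the \emph{simultaneous} reroute of several inner triangles of the packing whose middle vertices all lie in the removed set $H$: one has to swap all of them in one go and check that the resulting middles, pairwise distinct by the matching property of the crown, do not collide with middles already used in inner triangles of the packing that lie outside $H$. Turning this into a clean soundness proof of the reduction rule, i.e.\ picking the right crown/expansion statement so that the swap is always globally consistent (and bookkeeping the interaction between the matching part of $B$ and the $V\setminus V_{(0,0)}$-only part of the packing) is where the real work sits; the counting that then yields the $\O(m)$ bound is elementary once the reduction is justified.
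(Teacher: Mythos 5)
Your plan is essentially the paper's proof: the same bipartite graph between backward arcs and $V_{(0,0)}$-vertices (your edge relation $v\in s(a)$ coincides with ``$(h(a),v,t(a))$ is a triangle'' precisely because a $(0,0)$-vertex is incident only to forward arcs), the same observation that the non-$(0,0)$ vertices number at most $2m$, and the same crown-type reduction deleting a subset of $V_{(0,0)}$. Your Step~1 is unnecessary (the target is $\O(m)$ vertices, not a bound on $k$) and Step~2 is subsumed by the decomposition, but neither hurts.

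The point you flag as ``where the real work sits'' is, however, exactly the point your formulation leaves open, and it is a genuine gap as written. Your crown statement only provides a matching of $C\supseteq N_B(H)$ into $V_{(0,0)}\setminus H$; it does not guarantee that the matching partners avoid the middle vertices of inner triangles of the packing whose arcs lie outside $C$, so rerouting \emph{only the triangles whose middle falls in $H$} can collide with an untouched inner triangle that already occupies the prescribed partner. The paper closes this by taking the Hall-based partition $V_1=A_1\cup A_2$, $V_2=B_0\cup B_1\cup B_2$ with $N(A_2)\subseteq B_2$, a perfect matching $\mu$ of $A_1$ onto $B_1$, and $B_0$ the deleted part, and then rerouting \emph{every} inner triangle whose backward arc lies in $A_1$ to the middle prescribed by $\mu$ --- not only those whose current middle is in the deleted set. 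After this global swap the new middles are pairwise distinct because $\mu$ is a matching, and they cannot meet the middles of the remaining inner triangles because those arcs lie in $A_2$, hence their middles lie in $B_2$, which is disjoint from $B_1$. With that modification (reroute by arc-membership in the matched head rather than by middle-membership in the deleted crown) your argument goes through; as proposed, the simultaneous-swap consistency is asserted rather than proved.
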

\begin{proof}
Let $I=(\T,k)$ be an input of the decision problem associated to \pb. Observe first that we can build in polynomial time a linear ordering $\V(\T)$ whose backward arcs $\bA{A}(\T)$ correspond to the given FAS. We will obtain the kernel by removing useless vertices of degree $(0,0)$. 
Let us define a bipartite graph $G = (V_1,V_2,E)$ with $V_1 = \{v^1_{a} : a \in \bA{A}(\T)\}$ and $V_2 =\{v^2_l : v_l \in V_{(0,0)}\}$.
Thus, to each backward arc we associate a vertex in $V_1$ and to each vertex $v_l$ with $d(v_l) = (0,0)$ we associate a vertex in $V_2$. 
Then, $\{v^1_{a},v^2_l\} \in E$ iff $(h(a),v_l,t(a))$ is a triangle in $\T$.
By Hall's theorem, we can in polynomial time partition $V_1$ and $V_2$ into $V_1=A_1 \cup A_2$, $V_2=B_0 \cup B_1 \cup B_2$ such that $N(A_2) \subseteq B_2$, $|B_2| \le |A_2|$, and 
there is a perfect matching between vertices of $A_1$ and $B_1$ ($B_0$ is simply defined by $B_0 = V_2 \setminus (B_1 \cup B_2)$). 

For any $i, 0 \le  i \le 2$, let $X_i = \{v_l \in V_{(0,0)} : v^2_l \in B_i\}$ be the vertices of $\T$ corresponding to $B_i$. 
Let $V_{\neq(0,0)} = V(\T) \setminus V_{(0,0)}$. Notice that $|V_{\neq(0,0)}| \le 2m$.
We define $\T' = \T[V_{\neq(0,0)} \cup X_1 \cup X_2]$ the sub-tournament obtained from $\T$ by removing vertices of $X_0$, 
and $I' = (\T',k)$. We point out that this definition of $\T'$ is similar to the final step of the kernel of~\cite{abu2009quadratic} as our partition of $V_1$ and $V_2$ (more precisely
$(A_1,B_0 \cup B_1)$) corresponds in fact to the crown decomposition of~\cite{abu2009quadratic}. Observe that $|V(\T')| \le 2m+|A_1|+|A_2| \le 3m$, implying the desired bound of the number of vertices of the kernel.

It remains to prove that $I$ and $I'$ are equivalent. Let $k \in \mathbb{N}$, and let us prove that there exists a solution $\S$ of $\T$ with $|\S| \ge k$ iff 
there exists a solution $\S'$ of $\T'$ with $|\S'| \ge k$. Observe that the $\Leftarrow$ direction is obvious as $\T'$ is a subtournament of $\T$. Let us now prove the $\Rightarrow$ direction.
Let $\S$ be a solution of $\T$ with $|\S| \ge k$. Let $\S = \S_{(0,0)} \cup \S_1$ with $\S_{(0,0)} = \{t \in \S : t=(h(a),v,t(a))\mbox{ with } v \in V_{(0,0)}, a \in \bA{A}(\T)\}$ 
and $\S_1 = \S \setminus \S_{(0,0)}$. Observe that $V(\S_1) \cap V_{(0,0)} = \emptyset$, implying $V(\S_1) \subseteq V_{\neq(0,0)}$.
For any $i \in [2]$, let $\S^i_{(0,0)} = \{t \in \S_{(0,0)} : t=(h(a),v,t(a))\mbox{ with } v \in V_{(0,0)}, v^1_a \in A_i\}$ be a partition of $\S_{(0,0)}$.
We define $\S' = \S_1 \cup \S^2_{(0,0)} \cup \S^{'1}_{(0,0)}$, where $\S^{'1}_{(0,0)}$ is defined as follows. For any $v^1_a \in A_1$, let $v^2_{\mu(a)} \in B_1$ be the vertex associated to $v^1_a$ 
in the $(A_1,B_1)$ matching. To any triangle $t=(h(a),v,t(a)) \in \S^1_{(0,0)}$ we associate a triangle $f(t)=(h(a),v_{\mu(a)},t(a)) \in \S^{'1}_{(0,0)}$, where by definition $v_{\mu(a)} \in X_1$.
For the sake of uniformity we also say that any $t \in \S_1 \cup \S^2_{(0,0)}$ is associated to $f(t)=t$.

Let us now verify that triangles of $\S'$ are vertex disjoint by verifying that triangles of $\S^{'1}_{(0,0)}$ do not intersect another triangle of $\S'$.
Let $f(t)=(h(a),v_{\mu(a)},t(a)) \in \S^{'1}_{(0,0)}$. Observe that $h(a)$ and $t(a)$ cannot belong to any other triangle $f(t')$ of $\S'$ as for any $f(t'') \in \S'$, $V(f(t'')) \cap V_{\neq(0,0)} = V(t'') \cap V_{\neq(0,0)}$ 
(remember that we use the same notation $V_{\neq(0,0)}$ to denote vertices of degree $(0,0)$ in $\T$ and $\T'$).
Let us now consider $v_{\mu(a)}$. For any $f(t') \in \S_1$, as $V(f(t')) \cap V_{(0,0)} = \emptyset$ we have $v_{\mu(a)} \notin V(f(t'))$. 
For any $f(t')=(h(a'),v_l,t(a')) \in \S^{2}_{(0,0)}$, we know by definition that $v^1_{a'} \in A_2$, implying that $v^2_l \in B_2$ (and $v_l \in X_2$) as $N(A_2) \subseteq B_2$ and thus that $v_l \neq v_{\mu(a)}$.
Finally, for any $f(t')=(h(a'),v_l,t(a')) \in \S^{'1}_{(0,0)}$, we know that $v_l = v_{\mu(a')}$, where $a \neq a'$, leading to $v_l \neq v_{\mu(a)}$ as $\mu$ is a matching.
\end{proof}

Using the previous result we can provide a $\O(k)$ vertices kernel for
\pb~restricted to sparse tournaments. 

\begin{theorem}\label{thm:kernel-for-sparse}
\pb~restricted to sparse tournaments admits a polynomial kernel with
$\O(k)$ vertices, where $k$ is the size of the solution.
\end{theorem}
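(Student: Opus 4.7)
The plan is to use Theorem~\ref{thm:kernelm} as a black box, which provides a kernel of $\O(m)$ vertices for any tournament with a FAS of size $m$, and then to show that for sparse tournaments $m$ itself is $\O(k)$ after polynomial-time preprocessing. First I apply Theorem~\ref{thm:kernelm} to the input $(\T,k)$ and obtain an equivalent instance $(\T',k)$ with $|V(\T')| \le 3m'$, where $m'$ is the number of backward arcs in the matching FAS of $\T'$ (certainly at most $m$). The key structural observation is that in a sparse tournament every triangle contains exactly one backward arc: by Section~\ref{sec:notation} a triangle with two backward arcs shares the vertex $v_{i_2}$ between them, which is forbidden when $\bA{A}(\T')$ is a matching. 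In particular any triangle packing of size $p$ uses exactly $p$ distinct backward arcs, so $opt(\T') \le m'$. It thus suffices to either certify $opt(\T') \ge k$ (and return a trivial YES-instance) or to argue $m' = \O(k)$.

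To this end I augment Theorem~\ref{thm:kernelm} with the following safe reduction rule: if a backward arc $a$ of $\T'$ is not contained in any triangle, delete $a$ from $\bA{A}(\T')$, i.e., flip it into a forward arc in the same linear ordering. In a sparse tournament this operation is triangle-preserving: it removes only triangles that used $a$ as backward arc (none, by assumption), and no new triangle involving the flipped forward arc can appear since the endpoints of $a$ carry no other backward arc. After the flip, both endpoints of $a$ become degree-$(0,0)$ vertices and can be discarded upon re-applying Theorem~\ref{thm:kernelm}. Iterating both reductions until saturation, in polynomial time we reach an equivalent sparse instance in which every backward arc lies in some triangle.

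In this fully reduced instance I then claim that $opt \ge m'/c$ for an absolute constant $c$, which together with the above gives $m' < ck$ in the NO-case, and hence a kernel of $\le 3m' = \O(k)$ vertices. The argument has two ingredients. The maximum matching of size $\mu$ in the bipartite graph $G=(\bA{A}(\T'),V_{(0,0)},E)$ used in the proof of Theorem~\ref{thm:kernelm} already contributes $\mu$ disjoint triangles with degree-$(0,0)$ middles. Then the backward arcs not saturated by this matching must realize their triangles through arc-endpoint middles, whence pairing two compatible arcs (the endpoint of one lying in the span of the other) yields additional disjoint triangles, each consuming two backward arcs. A charging argument combined with a crown-decomposition-type analysis on an auxiliary bipartite graph recording arc-to-arc span-memberships then bounds the number of arcs unused by both steps by $\O(opt)$, giving the claim.

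The main obstacle is this last charging argument: a single vertex of the packing $S$ may lie inside the spans of many backward arcs, so a naive incidence count is too weak. The bound has to exploit the matching structure of $\bA{A}(\T')$—each vertex is endpoint of at most one backward arc, so at most a constant number of arcs can be charged to a triangle of $S$ through one of its vertices used as a middle—together with the geometry of the linear ordering, in order to control how many arcs can simultaneously be blocked by the same triangle of $S$.
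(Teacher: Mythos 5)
Your overall strategy is the same as the paper's: a win/win argument where you either certify that the instance is a \textsc{yes}-instance or conclude that the number $m'$ of backward arcs is $\O(k)$ and then invoke Theorem~\ref{thm:kernelm}. Your preliminary steps are sound: in a sparse tournament every triangle has exactly one backward arc (two backward arcs of a triangle share a vertex, impossible in a matching), so $opt \le m'$; and a backward arc of a matching FAS lies in no triangle exactly when its span is empty, i.e., when it joins two consecutive vertices, so your ``flip'' rule coincides with the paper's elimination of consecutive backward arcs and is indeed safe (no new triangle can use the flipped arc because its endpoints carry no other backward arc).

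The genuine gap is the central claim that in the fully reduced instance $opt \ge m'/c$ for an absolute constant $c$. You do not prove it: you list two ``ingredients'' (the matching $\mu$ and a pairing of unsaturated arcs) and then defer to ``a charging argument combined with a crown-decomposition-type analysis'' which you yourself identify as the main obstacle, precisely because one vertex of a packing can lie in the spans of many backward arcs. As stated, the proof is incomplete at exactly the step that carries all the combinatorial content. The claim is true, and the paper establishes it with a much more elementary greedy argument that avoids any charging: after eliminating consecutive backward arcs, pick any backward arc $v_jv_i$ ($i<j$); since it is not consecutive there is some $v_l$ with $i<l<j$, and $(v_i,v_l,v_j)$ is a triangle. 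Deleting $v_i,v_l,v_j$ destroys at most $2$ backward arcs (the chosen one and possibly one incident to $v_l$) and creates at most $3$ new consecutive backward arcs, which are then eliminated; so each triangle costs at most $5$ backward arcs, and $m'\ge 5k$ guarantees $k$ disjoint triangles. Replacing your unproven charging step by this greedy extraction (with $c=5$) would complete your proof along essentially the paper's lines.
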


\begin{proof}
Let $I=(\T,k)$ be an input of the decision problem associated to
\pb~such that $\T$ is a sparse tournament.  We say that an arc $a$ is
a \emph{consecutive backward arc} of $\V(\T)$ if it is a backward arc
of $\T$ and $a=v_{i+1}v_i$ with $v_i$ and $v_{i+1}$ being consecutive
in $\V(\T)$.  If $\T$ admits a consecutive backward arc $v_iv_{i+1}$
then we can exchange $v_i$ and $v_{i+1}$ in $\T$. The backward arcs of
the new linear ordering is exactly $\bA{A}(\T)\setminus v_{i+1}v_i$
and so is still a matching. Hence we can assume that $\T$ does not
contain any consecutive backward arc.  Now if $|\bA{A}(\T)|< 5k$ then
by Theorem~\ref{thm:kernelm} we have a kernel with $\O(k)$
vertices. Otherwise, if $|\bA{A}(\T)|\ge 5k$ we will prove that $T$ is
a {\sc yes} instance of \pb . Indeed we can greedily produce a family
of $k$ vertex disjoint triangles in $T$.  For that consider a backward
arc $v_jv_i$ of $\T$, with $i<j$. As $v_jv_i$ is not consecutive there
exists $l$ with $i<l<j$ and we select the triangle $v_iv_jv_l$ and
remove the vertices $v_i$, $v_l$ and $v_j$ from $\T$. Denote by $\T'$
the resulting tournament and let $\V(\T')$ be the order induced by
$\V(\T)$ on $\T'$. So we loose at most 2 backward arcs in $\V(\T')$
($v_jv_i$ and a possible backward arc containing $v_l$) and create at
most 3 consecutive backward arcs by the removing of $v_i$, $v_l$ and
$v_j$. Reducing these consecutive backward arcs as previously, we can
assume that $\V(\T')$ does not contain any consecutive backward arc
and satisfies $|\bA{A}(\T')|\ge |\bA{A}(\T)|-5 \ge 5(k-1)$. Finally
repeating inductively this arguments, we obtain the desired family of
$k$ vertex-disjoint triangles in $\T$, and $\T$ is a {\sc yes}
instance of \pb.
\end{proof}

\subsection{No (generalised) kernel in ${\cal O}(k^{2-\epsilon})$}

In this section we provide an OR-cross composition (see
Definition~\ref{def:orcompo} in Appendix) from \perfectpb~restricted
to instances of Theorem~\ref{thm:nphperfectv2} to \perfectpb~
parameterized by the total number of vertices.

\paragraph*{Definition of the instance selector}
The next lemma build a special tournament, called an \emph{instance
  selector} that will be useful to design the cross composition. 

\begin{lemma}\label{lem:path}
For any $\g =2^{\gp}$ and $\m$ we can construct in polynomial time (in
$\g$ and $\m$) a tournament $\Pg$ such that
\begin{itemize}
\item there exists $\g$ subsets of $\m$ vertices $\Xg{i}=\{x^i_j : j
  \in [\m] \}$, that we call the distinguished set of vertices, such
  that
\begin{itemize}
\item the $\Xg{i}$ have pairwise empty intersection
\item for any $i \in [\g]$, there exists a packing $\S$ of triangles
  of $\Pg$ such that $V(\Pg) \setminus V(\S) = \Xg{i}$ (using this
  packing of $\Pg$ corresponds to select instance $i$)
\item for any packing $\S$ of triangles of $\Pg$ with
  $|V(\S)|=|V(\Pg)|-\m$ there exists $i \in [\g]$ such that $V(\Pg)
  \setminus V(\S) \subseteq \Xg{i}$
\end{itemize}
\item $|V(\Pg)|=\O(\m\g)$.
\end{itemize}
\end{lemma}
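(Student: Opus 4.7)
The plan is to construct $\Pg$ recursively, using the fact that $\gamma = 2^{\gp}$ via a doubling argument on $\gp$. The base case $\gp = 0$ (i.e., $\gamma = 1$) is trivial: let $\Pg$ consist of $\mu$ vertices forming $X^1$ with no backward arcs, so the only packing is empty and leaves $X^1$ uncovered. For the inductive step with $\gamma = 2\gamma_0$ where $\gamma_0 = 2^{\gp-1}$, I would take two copies $P^L, P^R$ of $P_{\mu, \gamma_0}$ (with distinguished sets $\{Y^{L,i}\}, \{Y^{R,i}\}$) and a \emph{selector gadget} $S$ on $\O(\mu)$ new vertices, and build $\Pg$ as the concatenation $P^L \cdot S \cdot P^R$ in the linear order. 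I would set $X^i = Y^{L,i}$ for $i \le \gamma_0$ and $X^{\gamma_0 + i} = Y^{R,i}$ for $i \in [\gamma_0]$, maintaining the invariant that the distinguished set $X^1$ lies at the very start and $X^\gamma$ at the very end of the linear order, so that $S$ can attach to the ``rightmost'' distinguished set $Y^{L,\gamma_0}$ of $P^L$ and the ``leftmost'' set $Y^{R,1}$ of $P^R$ via short-span backward arcs.

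The role of $S$ is to enforce a binary choice: backward arcs inside $S$ are designed so that the only ways to cover all vertices of $S$ by triangles are either (a) together with $Y^{R,1}$, forming $\mu$ triangles that fully consume $S \cup Y^{R,1}$, or (b) together with $Y^{L,\gamma_0}$, forming $\mu$ triangles consuming $S \cup Y^{L,\gamma_0}$. For the forward direction, given $i \in [\gamma]$, if $i \le \gamma_0$ I combine the inductive packing of $P^L$ leaving $Y^{L,i}$ uncovered, the inductive packing of $P^R$ leaving $Y^{R,1}$ uncovered, and option~(a) of $S$; the symmetric choice handles $i > \gamma_0$. This yields a packing whose uncovered set is exactly $X^i$. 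The size bound follows from the recurrence $|V(\Pg)| = 2|V(P_{\mu,\gamma_0})| + \O(\mu)$, which solves to $\O(\mu\gamma)$.

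For the reverse direction I would argue, for any packing $\S$ with $|V(\S)| = |V(\Pg)| - \mu$: first, every vertex of $S$ must be covered since none lies in any $X^j$, and (because backward arcs of $P^L$ and $P^R$ have spans strictly inside their respective sub-tournaments by induction) the triangles covering $S$-vertices must use backward arcs of $S$; second, by the design of $S$ these triangles commit to option~(a) or option~(b), not both. Suppose WLOG option~(a) is used: then $Y^{R,1}$ and all of $S$ are consumed by the $S$-triangles, and the restriction of $\S$ to $P^L$ must be a packing of size $(|V(P^L)|-\mu)/3$ that leaves exactly $\mu$ vertices uncovered inside $P^L$. Applying the inductive hypothesis to $P^L$ gives that these $\mu$ uncovered vertices lie in some $Y^{L,i} = X^i$, concluding the case.

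The main obstacle is the rigorous design of $S$ to forbid ``mixed'' packings that combine option~(a) triangles for some indices $l$ with option~(b) triangles for others: with an obvious selector (one pair of arcs per index $l$) the $\mu$ indices are independent, so mixed packings are achievable and the uncovered vertices end up split between $Y^{L,i}$ and $Y^{R,j}$, violating the required containment in a single $X^k$. To rule this out I would add synchronization backward arcs inside $S$ that couple the indices, creating vertex-sharing between any option~(a) triangle at index $l$ and any option~(b) triangle at an index $l' \ne l$, so that vertex-disjointness of $\S$ forces a consistent global commitment. Verifying the rigidity of the resulting selector via case analysis on the possible middle vertices of each backward arc is the delicate step; once it is in place, the recursion and size accounting fall out immediately.
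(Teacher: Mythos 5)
Your high-level architecture (a binary choice for each of the $\log_2\g$ bits of the selected index $i$) is the same idea the paper implements, but the paper does it flat rather than recursively: one set of backward arcs per bit, where the level-$l$ arcs form a single even circuit whose two perfect matchings correspond to the two values of bit $l$, and each arc of level $l$ spans $2^l$ consecutive distinguished columns. The crucial point is \emph{how} rigidity of each binary choice is obtained, and this is exactly where your proposal has a genuine gap. You leave the selector gadget $S$ unconstructed and propose to enforce consistency by ``creating vertex-sharing between any option~(a) triangle at index $l$ and any option~(b) triangle at an index $l'\neq l$.'' This is combinatorially impossible for general $\m$: the $\m$ option-(b) triangles are pairwise vertex-disjoint (they form a legal packing), and a single option-(a) triangle has only two vertices inside $S$, so it cannot meet $\m-1$ pairwise disjoint triangles once $\m\ge 4$. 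The correct mechanism is weaker locally but relies on a global extremal argument: arrange the selector arcs in an even circuit (so only \emph{adjacent} cross-option triangles conflict) and then prove, by counting, that a packing covering $|V(\Pg)|-\m$ vertices must use a \emph{maximum} matching of arcs from each level; since an even circuit has exactly two perfect matchings, this forces a globally consistent bit choice. Without this counting step, chain- or cycle-type conflicts alone do not exclude mixed selections, and the third bullet of the lemma fails.

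There is a second, independent gap in your reverse direction. After committing $S$ to option~(a), you apply the induction hypothesis to the restriction of $\S$ to $P^L$, which requires that \emph{all} $\m$ uncovered vertices lie in $P^L$. Nothing you have stated prevents them from being split as $u_L+u_R=\m$ with $u_R>0$: the inductive statement only speaks about packings of $P^{L}$ (resp.\ $P^R$) missing \emph{exactly} $\m$ vertices, and divisibility by $3$ on each side does not force $u_R=0$. To close this you must strengthen the induction hypothesis to ``every packing of $P_{\m,\g}$ leaves at least $\m$ vertices uncovered'' (the paper gets this for free from the same counting argument: any packing uses at most $\mm{l}$ arcs of level $l$, hence at most $\sum_l\mm{l}$ triangles in total); then $u_L\ge\m$ and $u_L+u_R=\m$ force $u_R=0$. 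Both missing ingredients are thus instances of the same omission: the extremal/counting analysis that turns ``a packing of the right size exists'' into ``any packing of that size is rigid.''
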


\begin{proof}
Let us first describe vertices of $\Pg$.  For any $i \in [\g-1]_0$
(where $[x]_0$ denotes $\{0,\dots,x\}$) let $\Xg{i}=\{x^i_j : j \in
[\m] \}$, and let $X = \cup_{i\in [\g-1]_0}\Xg{i}$. For any $l \in
[\gp-1]_0$, let $\Vg{l}=\{v^l_k,k \in [|\Vg{l}|]\}$ be the vertices of
level $l$ with $|\Vg{l}|= \m \g/2^{l} +2$, and $V=\cup_{l \in
  [\gp-1]_0}\Vg{l}$. Finally, we add a set $\Alg{}=\{\Alg{l} :
l\in[\gp-1]_0\}$ containing one dummy vertex for each level and
finally set $V(\Pg)= X\cup V\cup \Alg{}$.  Observe that
$|V(\Pg)|=\m\g+\sum_{l \in [\gp-1]_0}(|\Vg{l}|+1)=\O(\m\g)$.  Let us
now describe $\V(\Pg)$ and $\bA{A}(\Pg)$ recursively. Let $\Pgc{0}$ be
the tournament such that $\V(\Pgc{0})=(v^0_1, x^0_1, v^0_2, x^1_1,
\dots , v^0_\g, x^{\g-1}_1)$ $(v^0_{\g+1}, x^0_2, \dots, v^0_{2\g},
x^{\g-1}_2)$ $\dots$ $(v^0_{(\m-1)\g+1}$ $,x^0_\m,\dots,
v^0_{\m\g},x^{\g-1}_\m)$ $(v^0_{\m\g+1}, \Alg{1}, v^0_{\m\g+2})$ and
$\bA{A}(\Pgc{0})=Z^0_P$ where $Z^0_P=A^0_P \cup A^{'0}_P$ with
$A^0_P=\{v^0_{k+1}v^0_{k} : k \in [|\Vg{0}|-2]\}$ and $A^{'0}_P =
\{v^0_{|\Vg{0}|}v^0_{|\Vg{0}|-1},v^0_{|\Vg{0}|}v^0_1\}$.

Then, given a tournament $\Pgc{l}$ with $0 \leq l <\gp -1$, we
construct the tournament $\Pgc{l+1}$ such that the vertices of
$\Pgc{l+1}$ are those of $\Pgc{l}$ to which are added the set
$\Vg{l+1}$.  For $j \in [|\Vg{l+1}|-2]$, we add the vertex $v^{l+1}_j$
of $\Vg{l+1}$ just after the vertex $v^l_{2j-1}$ in the order of
$\Pgc{l+1}$, and we for $i \in \{0,1\}$ we add vertex
$v^{l+1}_{|\Vg{l+1}|-i}$ just after $v^{l}_{|\Vg{l}|-i}$.  Similarly,
we add the vertex $\Alg{l+1}$ just after the vertex $\Alg{l}$. The
backward arcs of $\Pgc{l+1}$ are defined by: $\bA{A}(\Pgc{l+1}) =
\bA{A}(\Pgc{l}) \cup Z^{l+1}_P$ where $Z^{l+1}_P=A^{l+1}_P \cup
A^{'l+1}_P$ are called the \emph{arcs of level $l$}, with
$A^{l+1}_P=\{v^{l+1}_{k+1}v^{l+1}_{k} : k \in [|\Vg{l+1}|-2]]\}$ and
  $A^{'l+1}_P=\{v^{l+1}_{|\Vg{l+1}|}v^{l+1}_{|\Vg{l+1}|-1},v^{l+1}_{|\Vg{l+1}|}v^{l+1}_1\}$. We
  can now define our gadget tournament $\Pg$ as the tournament
  corresponding to $\Pgc{\gp-1}$. We refer the reader to Figure~\ref{fig:Pathgadget} where an example of the gadget is depicted, where $\m = 3$ and $\g=4$.
	
	\begin{figure}%
\centering
\includegraphics[width=\textwidth]{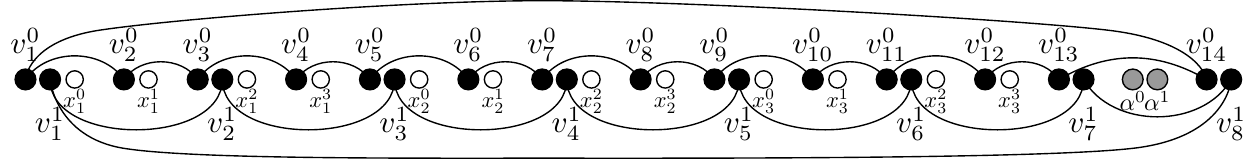}
\caption{An example of the instance selector, where $\m = 3$ and $\g=4$. All depicted arcs are backward arcs.}%
\label{fig:Pathgadget}%
\end{figure}

In all the following given $i \in [\g-1]_0$ we call the last $x$ bits
(resp. the $x^{th}$ bit) $i$ its $x$ right most (resp. the $x^{th}$,
starting from the right) bits in the binary representation of $i$.
Let us now state the following observations. 
\begin{itemize}
\item[$\triangle_1$] The vertices of $X$ have degree $(0,0)$ in $\Pg$.
\item[$\triangle_2$] For any $l \in [\gp-1]_0$, the extremities of the
  arcs of level $l$ are exactly $V^l$ ($V(Z^l_P) = V^l$) and the arcs
  of $Z^l_P$ induce an even circuit on $V^l$.
\item[$\triangle_3$] For any $a \in A^l_P$, the span of $a$ contains
  $2^l$ consecutive vertices of $X$, more precisely $s(a) \cap X =
  \{x^i_j,\dots,x^{i+2^l-1}_j\}$ for $j \in [m]$ and $i$ such that the
  $l-1$ last bits of $i$ are equal to $0$.
\item[$\triangle_4$] There is a unique partition $Z^l_P = Z^{l,0}_P
  \cup Z^{l,1}_P$ such that $|Z^{l,0}_P|=|Z^{l,1}_P|=\mm{l}$, the size
  of a maximum matching of backward arcs in $\Pg[\Vg{l}]$, such that
  each $Z^{l,x}_P$ is a matching (for any $a,a' \in Z^{l,x}_P, V(a)
  \cap V(a') = \emptyset$), and such that $\cup_{a \in Z^{l,x}_P
    \setminus A^{'l}_P} s(a) \cap X$ is the set of all vertices
  $x^i_j$ of $X$ whose $l^{th}$ bit of $i$ is $x$.
\end{itemize}
Now let us first prove that for any $i \in [\g-1]_0$, there exists an
packing $\S$ of $\Pg$ such that $V(\Pg) \setminus V(\S) = \Xg{i}$.
Let $(x_{\gp-1} \dots x_0)$ be the binary representation of $i$.  Let
us define recursively $\S=\cup_{l \in [\gp-1]_0}\S_l$ in the following
way.  We maintain the invariant that for any $l$, the remaining
vertices of $X$ after defining $\cup_{z \in [l]_0}\S_z$ are all the
vertices of $X$ having their $l$ last bits equal to
$(x_{l-1},\dots,x_0)$.  We define $\S_l$ as the $\mm{l}-1$ triangles
$\{(h(a),x_a,t(a), a \in Z^{l,1-x_l}_P) \setminus A^{'l}_P \}$ such
that $x_a$ is the unique remaining vertex of $X$ in $s(a)$ (by
$\triangle_3$ and our invariant of the $\S_{\le l}$, there remains
exactly one vertex in $s(a)$, and by $\triangle_4$ these $\mm{l}-1$
triangles consume all remaining vertices of $X$ whose $l^{th}$ bit is
$1-x_l$), and a last triangle using an arc in $A^{'l}_P$ with
$t=(v^l_{|\Vg{0}|},\Alg{l},v^l_{|\Vg{0}|-1})$ if $x_l = 1$ and
$t=(v^l_{0},\Alg{l},v^0_{|\Vg{0}|})$ otherwise.  Thus, by our
invariant, the remaining vertices of $X$ after defining $\S$ are
exactly $\Xg{i}$.  As $\S$ also consumes $\alpha$ and $V$ we have
$V(\Pg) \setminus V(\S) = \Xg{i}$. Notice that this definition of $\S$
shows that $|V(\Pg)|-m = |V(\S)| = 3\sum_{l \in [\gp-1]_0}\mm{l}$

Let us now prove that for any packing $\S$ of $\Pg$ with
$|V(S)|=|V(\Pg)|-m=3\sum_{l \in [\gp-1]_0}\mm{l}$, there exists $i \in
[\g]$ such that $V(\Pg) \setminus V(\S) \subseteq \Xg{i}$.  Let $t_1,
\dots, t_\mm{}$ be the triangles of $\S$. For any $t_k$ of $\S$, we
associate one backward arc $a_k$ of $t_k$ (if there are two backward
arcs, we pick one arbitrarily).  Let $Z=\{a_k : k \in [|\S|]\}$ and
for every $l \in |\gp-1]_0$ let $Z^l=\{a_k\in A : V(a_k) \subset
  \Vg{l}\}$ the set of the backward arcs which are between two
  vertices of level $l$.  Notice that the $Z^l$ 's form a partition of
  $Z$.  For any $l \in [\gp-1]_0$, we have $|Z^l| \leq \mm{l}$ as two
  arcs of $Z^l$ correspond to two different triangles of $\S$,
  implying that $Z^l$ is a matching.  Furthermore, as
  $|\S|=|Z|=\sum_{l \in [\gp-1]_0}|Z^l|=\mm{} =
  \sum_{l\in[\gp]}{\mm{l}}$, we get the equality $|Z^l| = \mm{l}$ for
  any $l \in [\gp-1]_0$.  This implies that for each $Z^l$ there
  exists $x$ such that $Z^l=Z^{l,x}_P$, implying also that
  $V(Z^l)=\Vg{l}$, and $V(Z)=\Vg{}$.

Let $A^l = Z^l \setminus A^{'l}_P$, $\S^{l} = \{t_k \in \S : a_k \in
A^l\}$.
We can now prove by induction that all the remaining vertices $R_l=X
\setminus V(\cup_{x \in [l]_0} \S^{l})$ have the same $l$ last bits.
Notice that since all vertices of $\Vg{}$ are already used, and as
triangles of $\S^l$ cannot use a dummy vertex in $\alpha$, all
triangles of $\S^l$ must be of the from $(h(a_k),x,t(a_k))$ with $x
\in X$. As $A^l= Z^{l,x}_P \setminus A^{'l}_P$, by $\triangle_4$ we
know that $\cup_{a \in A^l} s(a) \cap X$ contains all the remaining
vertices of $X$, and thus of $R_{l-1}$, whose $l^{th}$ bit is
$x$. Moreover, by $\triangle_3$ we know that for any $a \in A^l$ we
have $|R_{l-1} \cap s(a)| \le 1$ because as $a \in A^l_P$ we know
exactly the structure of $s(a) \cap X$, and if there remain two
vertices in $s(a) \cap X$ then their last $l-1$ last bits would be
different. Thus, as triangles of $\S^l$ remove on vertex in the span
of each $a \in A^l$, they remove all vertices of $R_{l-1}$ whose
$l^{th}$ bit is $x$, implying the desired result.

\end{proof}

\paragraph*{Definition of the reduction}
We suppose given a family of $t$ instances $F=\{\Fb_l, l \in [t]\}$ of
\perfectpb~restricted to instances of Theorem~\ref{thm:nphperfectv2}
where $\Fb_l$ asks if there is a perfect packing in $\T_l=L_lK_l$.  We
chose our equivalence relation in Definition~\ref{def:orcompo} such
that there exist $n$ and $m$ such that for any $l \in [t]$ we have
$|V(L_l)|=n$ and $|V(K_l)|=m$. We can also copy some of the $t$
instances such that $t$ is a square number and $g = \sqrt{t}$ is a
power of two. We reorganize our instances into $F=\{ \Fb_{(p,q)} : 1
\leq p,q \leq g \}$ where $\Fb_{(p,q)}$ asks if there is a perfect
packing in $\T_{(p,q)}=L_pK_q$.  Remember that according to
Theorem~\ref{thm:nphperfectv2}, all the $L_p$ are equals, and all the
$K_q$ are equals.  We point out that the idea of using a problem on
"bipartite" instances to allow encoding $t$ instances on a "meta"
bipartite graph $G=(A,B)$ (with $A=\{A_i, i \in \sqrt{t}\}$, $B=\{B_i,
i \in \sqrt{t}\}$) such that each instance $p,q$ is encoded in the
graph induced by $G[A_i \cup B_i]$ comes from~\cite{dell2014kernelization}. 
We refer the reader to Figure~\ref{fig:Reduc2} which represents the
different parts of the tournament. We define a tournament
$G=\Lb \Mb_G \Lt \Mt_G P_{(n,g)}$, where $\Lb = \Lb_1 \dots \Lb_g$,
$\Mt_G$ is a set of $n$ vertices of degree $(0,0)$, $\Mb_G$ is a set
of $(g-1)n$ vertices of degree $(0,0)$, $\Lt = \Lt_1 \dots \Lt_g$
where each $\Lt_p$ is a set of $n$ vertices, and $P_{(n,g)}$ is a copy
of the instance selector of Lemma~\ref{lem:path}.  Then, for every $p
\in [g]$ we add to $G$ all the possible $n^2$ backward arcs going from
$\Lt_p$ to $\Lb_p$. Finally, for every distinguished set $X^p$ of
$P_{(n,g)}$ (see in Lemma~\ref{lem:path}), we add all the possible
$n^2$ backward arcs from $X^p$ to $\Lt_p$.

Now, in a symmetric way we define a tournament $D=\Kb \Mb_D \Kt \Mt_D
P'_{(m,g)}$, where $\Kb = \Kb_1 \dots \Kb_g$, $\Mt_D$ is a set of $m$
vertices of degree $(0,0)$, $\Mb_D$ is a set of $(g-1)m$ vertices of
degree $(0,0)$, $\Kt = \Kt_1 \dots \Kt_g$ where each $\Kt_q$ is a set
of $m$ vertices, and $P'_{(m,g)}$ is a copy of the instance selector
of Lemma~\ref{lem:path}.  Then, for every $q \in [g]$ we add to $G$
all the $m^2$ possible backward arcs going from $\Kt_p$ to
$\Kb_p$. Finally, for every distinguished set $X^{'q}$ of $P'_{(m,g)}$
we add all the possible $m^2$ backward arcs from $X^{'q}$ to $\Kt_q$.
Finally, we define $\T = GD$. Let us add some backward arcs from $D$
to $G$. For any $p$ and $q$ with $1\leq p, q \leq g$, we add backward
arcs from $\Kb_q$ to $\Lb_p$ such that $\T[\Kb_q\Lb_p]$ corresponds to
$\T_{(p,q)}$. Notice that this is possible as for any fixed $p$, all
the $\T_{(p,q)}, q \in [g]$ have the same left part $\Lb_p$, and the
same goes for any fixed right part.

\begin{figure}%
\centering
\includegraphics[width=\textwidth]{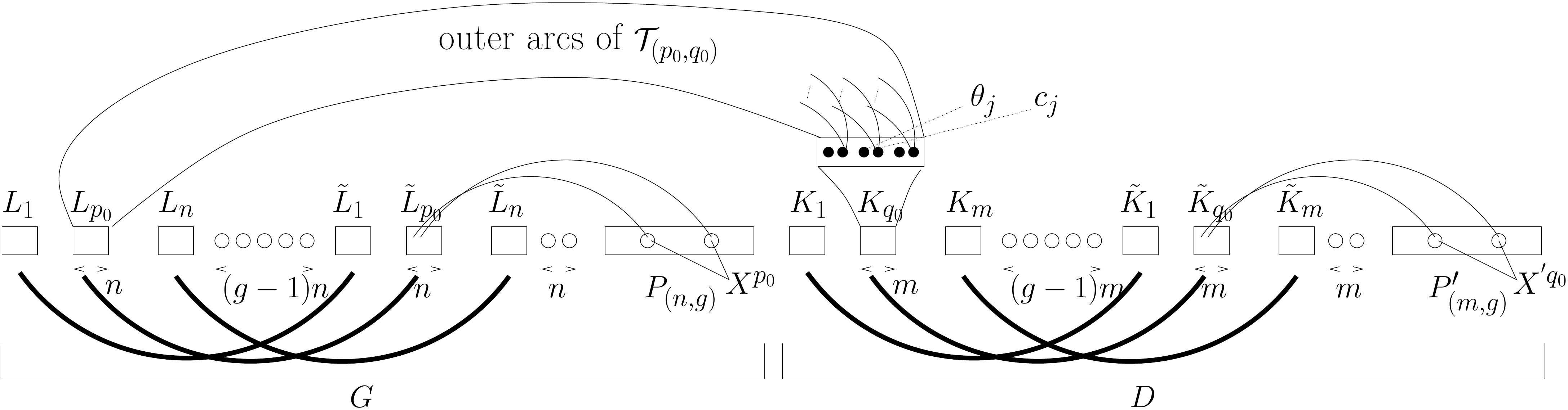}
\caption{A example of the weak composition. All depicted arcs are backward arcs. Bold arcs represents the $n^2$ (or $m^2$) possible arcs between the two groups.}%
\label{fig:Reduc2}%
\end{figure}

\paragraph*{Restructuration lemmas}

Given a set of triangles $\S$ we define $\S_{\subseteq P'}=\{t \in \S
| V(t) \subseteq P'_{(m,g)}\}$, $\S_{\subseteq P}=\{t \in \S : V(t)
\subseteq P_{(n,g)}\}$, $\S_{\Mt_D}= \{t \in \S : V(t) \mbox{
  intersects $\Kt$, $\Mt_D$ and $P'_{m,g}$}\}$, $\S_{\Mb_D}= \{t \in
\S : V(t) \mbox{ intersects $\Kb$, $\Mb_D$ and $\Kt$}\}$, $\S_{\Mt_G}=
\{t \in \S : V(t) $ intersects $\Lt$, $\Mt_G$ and $P_{n,g}\}$,
$\S_{\Mb_G}= \{t \in \S : V(t) \mbox{ intersects $\Lb$, $\Mb_G$ and
  $\Lt$}\}$, $\S_D = \{t \in \S : V(t) \subseteq V(D)\}$, $\S_G = \{t
\in \S : V(t) \subseteq V(G)\}$, and $\S_{GD} = \{t \in \S : V(t)
\mbox{ intersects}$ $V(G)$ and $V(D)\}$.  Notice that $\S_G, \S_{GD},
\S_D$ is a partition of $\S$.

\begin{claim}
\label{cl:D2goodtriangles}
If there exists a perfect packing $\S$ of $\T$, then $|\S_{\Mt_D}| =
m$ and $|\S_{\Mb_D}| = (g-1)m$. This implies that $V(\S_{\Mt_D} \cup
S_{\Mb_D}) \cap V(\Kt) = V(\Kt)$, meaning that the vertices of $\Kt$
are entirely used by $\S_{\Mt_D} \cup S_{\Mb_D}$.
\end{claim}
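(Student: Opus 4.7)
The plan is to exploit the fact that vertices in $\Mt_D$ and $\Mb_D$ have degree $(0,0)$, so that any triangle containing such a vertex must use it as the middle vertex of a backward arc spanning it, and then to classify which backward arcs of $\T$ can actually span these vertices.

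First I would handle $\Mt_D$. A vertex $v \in \Mt_D$ sits strictly between $\Kt$ and $P'_{(m,g)}$ in the ordering of $\T$. Since $d(v)=(0,0)$, any triangle of $\S$ containing $v$ has the form $(h(a),v,t(a))$ for some backward arc $a$ with $h(a)$ before $v$ and $t(a)$ after $v$. Going through the list of backward arcs of $\T$ (arcs internal to $P_{(n,g)}$, arcs $X^p\to\Lt_p$, arcs $\Lt_p\to\Lb_p$, arcs $\Kb_q\to\Lb_p$, arcs internal to $P'_{(m,g)}$, arcs $X^{'q}\to\Kt_q$, and arcs $\Kt_q\to\Kb_q$), the only ones whose tail lies after $\Mt_D$ are the arcs $X^{'q}\to\Kt_q$ and those internal to $P'_{(m,g)}$; the latter have both endpoints in $P'_{(m,g)}$, hence do not span $v$. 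Therefore any triangle covering $v$ has its head in $\Kt$ and its tail in $P'_{(m,g)}$, so it belongs to $\S_{\Mt_D}$. Since $\S$ is perfect and uses each vertex of $\Mt_D$ exactly once, while each triangle of $\S_{\Mt_D}$ uses exactly one vertex of $\Mt_D$, we obtain $|\S_{\Mt_D}|=|\Mt_D|=m$.

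Next I would carry out the symmetric analysis for $\Mb_D$. A vertex $w \in \Mb_D$ lies strictly between $\Kb$ and $\Kt$, and again $d(w)=(0,0)$. Scanning the same list of backward arcs, the only ones whose head lies before $w$ \emph{and} tail lies after $w$ are the arcs $\Kt_q\to\Kb_q$; all other candidates either have both endpoints after $\Mb_D$ (the $X^{'q}\to\Kt_q$ arcs and those internal to $P'_{(m,g)}$) or both endpoints before $\Mb_D$ (the $\Kb_q\to\Lb_p$, $\Lt_p\to\Lb_p$, $X^p\to\Lt_p$ arcs and those internal to $P_{(n,g)}$). Hence any triangle of $\S$ containing $w$ has its head in $\Kb$ and its tail in $\Kt$, and so belongs to $\S_{\Mb_D}$. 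Perfection of $\S$ gives $|\S_{\Mb_D}|=|\Mb_D|=(g-1)m$.

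Finally I would conclude by a counting argument. Every triangle of $\S_{\Mt_D}$ uses exactly one vertex of $\Kt$, and every triangle of $\S_{\Mb_D}$ uses exactly one vertex of $\Kt$; moreover the triangles of $\S_{\Mt_D}\cup \S_{\Mb_D}\subseteq \S$ are pairwise vertex disjoint. So $V(\S_{\Mt_D}\cup \S_{\Mb_D})\cap V(\Kt)$ contains $m+(g-1)m=gm$ distinct vertices, which equals $|V(\Kt)|$. Hence $V(\S_{\Mt_D}\cup \S_{\Mb_D})\cap V(\Kt)=V(\Kt)$, as claimed. The only real care needed is the exhaustive (but short) check of which backward arcs span a given $(0,0)$-vertex; I do not foresee any genuine obstacle beyond that case analysis.
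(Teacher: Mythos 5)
Your proof is correct and follows essentially the same route as the paper's: both arguments rest on the observation that the $(0,0)$-degree vertices of $\Mt_D$ (resp. $\Mb_D$) lie only in the span of the backward arcs from $P'_{(m,g)}$ to $\Kt$ (resp. from $\Kt$ to $\Kb$), forcing every triangle covering them into $\S_{\Mt_D}$ (resp. $\S_{\Mb_D}$), and then conclude by counting against $|V(\Kt)|=gm$. Your version merely makes the exhaustive arc-type check explicit where the paper states it in one line.
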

\begin{proof}
We have $|\S_{\Mt_D}| \leq m$ since $|\Mt_D|= m$. We obtain the
equality since the vertices of $\Mt_D$ only lie in the span of
backward arcs from $P'_{m,g}$ to $\Kt$, and they are not the head or
the tail of a backward arc in $\T$. Thus, the only way to use vertices
of $\Mt_D$ is to create triangles in $\S_{\Mt_D}$, implying
$|\S_{\Mt_D}| \ge m$.  Using the same kind of arguments we also get
$|\S_{\Mb_D}| = (g-1)m$.  As $|V(\Kt)|=gm$ we get the last part of the
claim.
\end{proof}

\begin{claim}
\label{cl:tildeKgoodtriangle}
If there exists a perfect packing $\S$ of $\T$, then there exists $q_0
\in [g]$ such that $\Kt_\S=\Kt_{q_0}$, where $\Kt_\S=\Kt \cap
V(\S_{\Mt_D})$.
\end{claim}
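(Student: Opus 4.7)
The plan is to prove $\Kt_\S = \Kt_{q_0}$ in three stages: identify the shape of every $t\in\S_{\Mt_D}$, show that in a perfect packing the only triangles of $\S$ touching $V(P'_{(m,g)})$ belong to $\S_{\subseteq P'}\cup\S_{\Mt_D}$, and then invoke the third bullet of Lemma~\ref{lem:path}. For the first stage, note that since vertices of $\Mt_D$ have degree $(0,0)$ and since the only backward arcs whose span contains a vertex of $\Mt_D$ are the $X^{'q}\to\Kt_q$ arcs added in the composition, every $t\in\S_{\Mt_D}$ must be of the form $t=(u,v,w)$ with $u\in\Kt_q$, $v\in\Mt_D$ and $w\in X^{'q}$ for some common $q=q(t)$. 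Hence $\Kt_\S=\{u(t):t\in\S_{\Mt_D}\}$ has cardinality $m$ and lies in $\bigcup_q\Kt_q$, so it suffices to prove that $Y:=V(\S_{\Mt_D})\cap V(P'_{(m,g)})$ is contained in a single distinguished set $X^{'q_0}$: this forces $q(t)=q_0$ for every $t\in\S_{\Mt_D}$, whence $\Kt_\S\subseteq\Kt_{q_0}$, and equality follows from $|\Kt_\S|=m=|\Kt_{q_0}|$.

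For the second stage, I would use a $\Kt$-coverage count. Every vertex of $\Kt$ is covered by exactly one triangle of $\S$, each triangle in $\S_{\Mt_D}$ or $\S_{\Mb_D}$ uses exactly one vertex of $\Kt$ (since each of these families is defined by intersecting three disjoint parts of $\T$), and by Claim~\ref{cl:D2goodtriangles} the two families together contain $m+(g-1)m=gm=|V(\Kt)|$ triangles; hence no triangle of $\S$ outside $\S_{\Mt_D}\cup\S_{\Mb_D}$ touches $V(\Kt)$. A short case analysis on the backward-arc pattern of a triangle containing a vertex of $P'_{(m,g)}$, using that vertices of $X$ have degree $(0,0)$ in $P'_{(m,g)}$ (property $\triangle_1$ of Lemma~\ref{lem:path}) and that the only backward arcs leaving $P'_{(m,g)}$ are the $X^{'q}\to\Kt_q$ arcs, then shows that such a triangle is either contained in $P'_{(m,g)}$, lies in $\S_{\Mt_D}$, or touches $V(\Kt)$; the preceding sentence rules out the third possibility.

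For the third stage, since $\S$ is a perfect packing and only $\S_{\subseteq P'}$ and $\S_{\Mt_D}$ cover vertices of $P'_{(m,g)}$, the set $\S_{\subseteq P'}$ is a triangle packing of $P'_{(m,g)}$ covering exactly $|V(P'_{(m,g)})|-m$ vertices. Applying the third bullet of Lemma~\ref{lem:path} then yields an index $q_0\in[g]$ with $V(P'_{(m,g)})\setminus V(\S_{\subseteq P'})\subseteq X^{'q_0}$, and since this leftover set equals $Y$ we obtain $Y\subseteq X^{'q_0}$, finishing the argument. The main obstacle is the case analysis in the second stage: one must ensure that no exotic triangle manages to use a vertex of $P'_{(m,g)}$ without also using a vertex of $\Kt$, and this is precisely where property $\triangle_1$ and the restricted structure of backward arcs incident to $P'_{(m,g)}$ play the decisive role.
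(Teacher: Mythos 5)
Your proposal is correct and follows essentially the same route as the paper's proof: use Claim~\ref{cl:D2goodtriangles} (which you re-derive via the $\Kt$-coverage count) to force the partition $\S_{P'}=\S_{\subseteq P'}\cup\S_{\Mt_D}$, count to get $|V(\S_{\subseteq P'})|=|V(P'_{(m,g)})|-m$, and then invoke the third bullet of Lemma~\ref{lem:path} to locate the leftover vertices in a single $X^{'q_0}$. Your treatment is somewhat more explicit about the shape of the triangles of $\S_{\Mt_D}$ and about the case analysis excluding exotic triangles, but the underlying argument is the same.
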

\begin{proof}
Let $\S_{P'}$ be the triangles of $\S$ with at least one vertex in
$P'_{m,g}$.  As according to Claim~\ref{cl:D2goodtriangles} vertices
of $\Kt$ are entirely used by $\S_{\Mt_D} \cup S_{\Mb_D}$, the only
way to consume vertices of $P'_{m,g}$ is by creating local triangles
in $P'_{m,g}$ or triangles in $\S_{\Mt_D}$. In particular, we cannot
have a triangle $(u,v,w)$ with $\{u,v\} \subseteq \Kt$ and $w \in
P'_{m,g}$, or with $u \in \Kt$ and $\{v,w\} \subseteq P'_{m,g}$.  More
formally, we get the partition $\S_{P'}=\S_{\subseteq P'} \cup
\S_{\Mt_D}$.  As $\S$ is a perfect packing and uses in particular all
vertices of $P'_{m,g}$ we get $|V(\S_{P'})|=|V(P'_{m,g})|$, implying
$|V(\S_{\subseteq P'})|=|V(P'_{m,g})|-m$ by
Claim~\ref{cl:D2goodtriangles}.  By Lemma~\ref{lem:path}, this implies
that there exists $q_0 \in [g]$ such that $X' \subseteq X^{'q_0}$
where $X'=V(P'_{m,g}) \setminus V(\S_{\subseteq P'})$.  As $X'$ are
the only remaining vertices that can be used by triangles of
$\S_{\Mt_D}$, we get that the $m$ triangles of $\S_{\Mt_D}$ are of the
form $(u,v,w)$ with $u \in \Kt_{q_0}$, $v \in \Mt_D$, and $w \in X'$.

\end{proof}

\begin{claim}
\label{cl:Dgoodremainingtriangles}
If there exists a perfect packing $\S$ of $\T$, then there exists $q_0
\in [g]$ such that $V(\S_{P'} \cup \S_{\Mt_D} \cup \S_{\Mb_D}) = V(D)
\setminus \Kb_{q_0}$.

\end{claim}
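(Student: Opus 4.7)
The plan is to combine the two previous claims with the bipartite-block structure of the backward arcs between $\Kt$ and $\Kb$. From Claim~\ref{cl:tildeKgoodtriangle} we already have the candidate index $q_0$, so it remains to show that the union $V(\S_{P'} \cup \S_{\Mt_D} \cup \S_{\Mb_D})$ covers everything in $V(D)$ except the block $\Kb_{q_0}$. The core accounting is to identify, for each of the five parts of $D = \Kb\, \Mb_D\, \Kt\, \Mt_D\, P'_{(m,g)}$, which of the three triangle families consumes its vertices.

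First I would use the decomposition $\S_{P'} = \S_{\subseteq P'} \cup \S_{\Mt_D}$ from the proof of Claim~\ref{cl:tildeKgoodtriangle} and the fact that $\S$ is perfect to conclude $V(\S_{P'}) = V(P'_{(m,g)}) \cup V(\Mt_D) \cup \Kt_{q_0}$. Indeed, $\S$ must cover all of $P'_{(m,g)}$; the $m$ triangles of $\S_{\Mt_D}$ consume exactly $m$ vertices in each of $\Mt_D$ and $\Kt_{q_0}$, which saturates $\Mt_D$ and, by Claim~\ref{cl:tildeKgoodtriangle}, equals $\Kt_{q_0}$. Combined with $V(\Mb_D)$ being saturated by $\S_{\Mb_D}$ (there are $(g-1)m$ triangles there and $|\Mb_D|=(g-1)m$) and $V(\Kt) \setminus \Kt_{q_0}$ being covered by $\S_{\Mb_D}$ (Claim~\ref{cl:D2goodtriangles} says $\Kt$ is saturated by $\S_{\Mt_D}\cup\S_{\Mb_D}$), only the claim about $\Kb$ remains.

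For $\Kb$, the key structural remark is that the construction adds backward arcs from $\Kt$ to $\Kb$ only within matching blocks: for every $q$, the only backward arcs with tail in $\Kt$ and head in $\Kb$ are the $m^2$ arcs from $\Kt_q$ to $\Kb_q$. Thus every triangle of $\S_{\Mb_D}$ is of the form $(u,v,w)$ with $v \in \Mb_D$ and $w \in \Kt_q$, $u \in \Kb_q$ for a single $q$. Since $\S_{\Mb_D}$ uses precisely the vertices of $V(\Kt) \setminus \Kt_{q_0} = \bigcup_{q \neq q_0} \Kt_q$ on the $\Kt$-side, and for each $q \neq q_0$ all $m$ vertices of $\Kt_q$ are used, its triangles consume $m$ vertices of $\Kb_q$ for every $q \neq q_0$, i.e.\ all of $\Kb_q$. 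Consequently $V(\S_{\Mb_D}) \cap \Kb = \bigcup_{q \neq q_0} \Kb_q = V(\Kb) \setminus \Kb_{q_0}$.

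Putting the five parts together then gives exactly $V(\S_{P'} \cup \S_{\Mt_D} \cup \S_{\Mb_D}) = V(D) \setminus \Kb_{q_0}$, as required. The main conceptual point — and the only place where the construction of $\T$ is actually used beyond the previous two claims — is the block-diagonal structure of the arcs between $\Kt$ and $\Kb$; everything else is cardinality bookkeeping enforced by the fact that $\S$ is a perfect packing. No additional restructuring of $\S$ is needed, so no genuine obstacle is anticipated.
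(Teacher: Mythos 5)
Your proof is correct and follows essentially the same route as the paper: invoke Claims~\ref{cl:D2goodtriangles} and~\ref{cl:tildeKgoodtriangle} to pin down $q_0$ and the saturation of $\Mt_D$, $\Mb_D$, $\Kt$ and $P'_{(m,g)}$, then use the block-diagonal structure of the backward arcs from $\Kt_q$ to $\Kb_q$ to force the $(g-1)m$ triangles of $\S_{\Mb_D}$ to consume exactly $\Kb \setminus \Kb_{q_0}$. The paper states this more tersely, but the underlying argument and the cardinality bookkeeping are the same as yours.
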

\begin{proof}
By Claim~\ref{cl:D2goodtriangles} we know that $|\S_{\Mb_D}| =
(g-1)m$. As by Claim~\ref{cl:tildeKgoodtriangle} there exists $q_0 \in
[g]$ such that $\Kt_\S=\Kt_{q_0}$, we get that the $(g-1)m$ triangles
of $\S_{\Mb_D}$ are of the form $(u,v,w)$ with $u \in \Kb \setminus
\Kb_{q_0}$, $v \in \Mb_D$, and $w \in \Kt \setminus \Kt_{q_0}$.
\end{proof}

\begin{lemma}
\label{cl:gd}
If there exists a perfect packing $\S$ of $\T$, then $V(\S_{GD}) \cap
V(G) \subseteq V(\Lb)$.  Informally, triangles of $\S_{GD}$ do not use
any vertex of $\Mb_G, \Lt, \Mt_T$ and $P_{n,g}$.
\end{lemma}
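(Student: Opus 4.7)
The plan is to mirror on the $G$-side the analysis that Claims~\ref{cl:D2goodtriangles}--\ref{cl:Dgoodremainingtriangles} did on the $D$-side. It proceeds in three steps: a structural classification of the cross triangles, a reduction to a natural sub-tournament, and a global counting argument that ultimately invokes Lemma~\ref{lem:path}.

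I will first locate every $t=(v_{i_1},v_{i_2},v_{i_3})\in\S_{GD}$. A cross triangle must use a backward arc with one endpoint on each side, and the only such arcs in $\T$ are the $\Kb$-to-$\Lb$ arcs. Inspecting the two admissible triangle types then shows that $t$ is either (i) a type-1 triangle with $v_{i_1}\in\Lb$, $v_{i_3}\in\Kb$ and $v_{i_2}$ positioned anywhere between, or (ii) a type-2 triangle with $v_{i_1}\in\Lb$, $v_{i_2}\in\Kb$ and $v_{i_3}\in\Kt$. Claim~\ref{cl:Dgoodremainingtriangles} already gives $V(\S_{GD})\cap V(D)\subseteq \Kb_{q_0}$, which excludes $\Kt$-vertices and hence form~(ii), and also forces $v_{i_3}\in\Kb_{q_0}$ in form~(i). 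So only the location of $v_{i_2}$ remains to be pinned down.

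Next, Claim~\ref{cl:Dgoodremainingtriangles} tells us that $\S_{\subseteq P'}\cup\S_{\Mt_D}\cup\S_{\Mb_D}$ exactly covers $V(D)\setminus \Kb_{q_0}$, so $\S_G\cup\S_{GD}$ forms a perfect packing of the sub-tournament $\T_0 := \T[V(G)\cup \Kb_{q_0}]$. I split the cross triangles by the location of $v_{i_2}$ into counts $a_L,a_{MG},a_T,a_{MT},a_P,b$ for $v_{i_2}$ lying in $\Lb,\Mb_G,\Lt,\Mt_G,P_{n,g},\Kb_{q_0}$ respectively; counting $\Kb_{q_0}$-vertex usage yields $a_L+a_{MG}+a_T+a_{MT}+a_P+2b=m$. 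Vertices of $\Mt_G$ and $\Mb_G$ have degree $(0,0)$ and so must be middle vertices of a triangle whose backward arc spans them; the only candidates in $\T_0$ are $X^p$-to-$\Lt_p$ arcs (yielding $\S_{\Mt_G}$-type triangles) and $\Kb_{q_0}$-to-$\Lb$ arcs (yielding cross triangles), so $|\S_{\Mt_G}|=n-a_{MT}$ and $|\S_{\Mb_G}|=(g-1)n-a_{MG}$.

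Finally, the global vertex-count identity $|V(G)|=3|\S_G|+2(a_L+a_{MG}+a_T+a_{MT}+a_P)+b$, combined with Lemma~\ref{lem:path} applied to the sub-packing $\S_{\subseteq P}$ of $P_{n,g}$ (which forces at least $n$ vertices of $V(P_{n,g})$ to lie outside $V(\S_{\subseteq P})$ and to be consumed either via $X^p$-to-$\Lt_p$ arcs or by cross triangles with $v_{i_2}\in P_{n,g}$), pins down each count and forces $a_{MG}=a_T=a_{MT}=a_P=0$; equivalently, $v_{i_2}\in\Lb$ in every cross triangle, giving the desired $V(\S_{GD})\cap V(G)\subseteq \Lb$. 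The main obstacle lies in this last step: because the backward arcs from $\Kb_{q_0}$ to $\Lb$ actually span the interior of $G$ (unlike the symmetric situation in $D$, where no backward arc from outside $D$ spans $\Mt_D$ or $\Mb_D$), the clean "only one way to use a degree-$(0,0)$ vertex" argument underlying Claim~\ref{cl:D2goodtriangles} does not transfer verbatim, and Lemma~\ref{lem:path}'s tight bound on packings of $P_{n,g}$ has to be pulled in to close the counting.
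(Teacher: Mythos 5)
Your setup (classifying the cross triangles, excluding the two-backward-arc form via Claim~\ref{cl:Dgoodremainingtriangles}, and noting that degree-$(0,0)$ vertices of $\Mt_G$ and $\Mb_G$ can a priori be consumed as middle vertices of cross triangles) is sound, and you correctly identify the real difficulty: the $\Kb_{q_0}$-to-$\Lb$ arcs span all of $\Mb_G,\Lt,\Mt_G,P_{(n,g)}$, so the "only one way to use a $(0,0)$-vertex" argument from the $D$-side does not transfer. The problem is that your proposed resolution --- a global vertex count over $V(G)$ combined with Lemma~\ref{lem:path} --- is only asserted, and I do not believe it closes. The counting yields relations such as $a_L+a_{MG}+a_T+a_{MT}+a_P+2b=m$, $|\S_{\Mt_G}|=n-a_{MT}$, $|\S_{\Mb_G}|=(g-1)n-a_{MG}$, and (via the maximality bound on packings of $P_{(n,g)}$) an inequality forcing at least $n$ vertices of $P_{(n,g)}$ to be consumed externally; but these are Diophantine constraints with slack. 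For instance, a configuration with $a_{MT}=a_P=1$, where $n-1$ triangles of $\S_{\Mt_G}$ leave their third vertices in a single $X^{p_0}$ and one cross triangle has its middle vertex in $X^{p_0}$, is consistent with every count you write down and with the conclusion of Lemma~\ref{lem:path} (the uncovered set of $P_{(n,g)}$ is still a subset of $X^{p_0}$). Nothing on the $G$-side rules it out. Your inventory of triangle shapes inside $G$ is also incomplete (e.g.\ triangles on an $X^p$-to-$\Lt_p$ arc whose middle vertex lies in $\Lt$ or in $P_{(n,g)}$ rather than in $\Mt_G$, and triangles internal to $\Lb_p$), which adds further degrees of freedom to the count.

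The missing idea is that the lemma is closed entirely on the $D$-side, using the fixed structure of $\Kb_{q_0}$ guaranteed by Theorem~\ref{thm:nphperfectv2}: $\Kb_{q_0}=K_{(q_0,1)}\dots K_{(q_0,m')}$ with $V(K_{(q_0,j)})=(\theta_j,c_j)$, no backward arc of $\T$ lies inside $\Kb_{q_0}$ (so every vertex of $\Kb_{q_0}$ must be covered by $\S_{GD}$), and each $\theta_j$ is incident to no backward arc, the only backward arcs spanning it being the arcs $vc_{j'}$ with $j'\ge j$ and $v\in\Lb$. A downward induction on $j$ from $m'$ to $1$ then forces the triangle covering $\theta_j$ to be $(v,\theta_j,c_j)$ with $v\in\Lb$; these $m'$ triangles exhaust $\Kb_{q_0}$, hence exhaust $\S_{GD}$, and $V(\S_{GD})\cap V(G)\subseteq V(\Lb)$ follows with no counting in $G$ at all. (Equivalently, in your notation this immediately gives $b=m'=m/2$, which annihilates all the other counts; but the driver is the $\theta_j$ argument, not the $G$-side arithmetic.) To repair your proof you would need to import exactly this structural property of $\Kb_{q_0}$; as written, the final step is a genuine gap.
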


\begin{proof}
By Claim~\ref{cl:Dgoodremainingtriangles}, there exists $q_0 \in [g]$
such that $V(\S_{P'} \cup \S_{\Mt_D} \cup \S_{\Mb_D}) = V(D) \setminus
\Kb_{q_0}$.  By Theorem~\ref{thm:nphperfectv2} we know that $\Kb_{q_0}
= K_{(q_0,1)}\dots K_{(q_0,m')}$ for some $m'$ (we even have $m' =
\frac{m}{2}$), where for each $j \in [m']$ we have $V(K_{(q_0,j)}) =
(\theta_j,c_j)$.  Moreover, for any $p \in [g]$, the last property of
Theorem~\ref{thm:nphperfectv2} ensures that for any $a \in
\bA{A}(\T_{(p,q_0)})$, $V(a) \cap V(\Kb_{q_0}) \neq \emptyset$ implies
$a=vc_j$ for $v \in \Lb_p$.  So no arc of $\bA{A}(\T_{(p,q_0)})$, and
thus no arc of $\T$ is entirely included in $\Kb_{q_0}$. This implies
that $\S$ cannot cover the vertices of $\Kb_{q_0}$ using triangles $t$
with $V(t) \subseteq V(\Kb_{q_0})$, and thus that all these vertices
must be used by triangles of $\S_{GD}$, implying that $V(\S_{GD}) \cap
V(D) = \Kb_{q_0}$. The last property of Theorem~\ref{thm:nphperfectv2}
also implies that all the $\theta_j$ have a left degree equal to $0$
in $\T$, or equivalently that there is no arc $a$ of $\T$ such that
$t(a)=\theta_j$ and $h(a) < \theta_j$.  Thus, by induction for any $j$
from $m'$ to $1$, we can prove that the only way for triangles of
$\S_{GD}$ to use $\theta_j$ is to create a triangle
$t_j=(v,\theta_j,c_j)$ with necessarily $v \in V(\Lb)$.
\end{proof}

Lemma~\ref{cl:gd} will allow us to prove
Claims~\ref{cl:G2goodtriangles}, ~\ref{cl:tildeLgoodtriangle}
and~\ref{cl:Ggoodremainingtriangles} using the same arguments as in
the right part ($D$) of the tournament as all vertices of $\Mb_G, \Lt,
\Mt_T$ and $P_{n,g}$ must be used by triangles in $\S_G$.

\begin{claim}
\label{cl:G2goodtriangles}
If there exists a perfect packing $\S$ of $\T$, then $|\S_{\Mt_G}| =
n$ and $|\S_{\Mb_G}| = (g-1)n$. This implies that $V(\S_{\Mt_G} \cup
S_{\Mb_G}) \cap V(\Lt) = V(\Lt)$, meaning that vertices of $\Lt$ are
entirely used by $\S_{\Mt_G} \cup S_{\Mb_G}$.
\end{claim}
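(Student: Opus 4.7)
The plan is to mirror the proof of Claim~\ref{cl:D2goodtriangles} on the $G$ side, using Lemma~\ref{cl:gd} as the crucial handle that confines the "mixing" triangles to the $\Lb$ part of $G$. Once this confinement is in place, the exact same counting argument that worked for $\Mt_D$ and $\Mb_D$ should go through verbatim for $\Mt_G$ and $\Mb_G$.

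First, I would observe that the vertices of $\Mt_G$ and $\Mb_G$ have degree $(0,0)$ in $\T$ by construction (they are interleaved dummies, not endpoints of any backward arc). Hence each such vertex can be covered by a triangle of $\S$ only as the middle vertex of some triangle $(h(a),v,t(a))$ where $a$ is a backward arc whose span contains $v$. For a vertex of $\Mt_G$, the only backward arcs whose span contains it are those going from $P_{n,g}$ to $\Lt$ (by the construction of $G$); for a vertex of $\Mb_G$, the only such backward arcs are those going from $\Lt$ to $\Lb$. Moreover, by Lemma~\ref{cl:gd}, no triangle of $\S_{GD}$ intersects $\Mt_G$ or $\Mb_G$, so every triangle covering such a vertex lies in $\S_G$ and must be of the form characterising $\S_{\Mt_G}$ or $\S_{\Mb_G}$ respectively.

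Since $\S$ is a perfect packing, every vertex of $\Mt_G$ is covered, and each triangle of $\S_{\Mt_G}$ consumes exactly one vertex of $\Mt_G$, giving $|\S_{\Mt_G}| \geq |\Mt_G| = n$; the reverse inequality is trivial, so $|\S_{\Mt_G}| = n$. The same argument applied to $\Mb_G$, whose size is $(g-1)n$, yields $|\S_{\Mb_G}| = (g-1)n$.

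Finally, each triangle of $\S_{\Mt_G}$ uses exactly one vertex of $\Lt$ (its tail lies in $P_{n,g}$, its middle in $\Mt_G$, and its head in $\Lt$), and each triangle of $\S_{\Mb_G}$ also uses exactly one vertex of $\Lt$ (as its head lies in $\Lb$, its middle in $\Mb_G$, and its tail in $\Lt$). Therefore $|V(\S_{\Mt_G} \cup \S_{\Mb_G}) \cap V(\Lt)| = n + (g-1)n = gn = |V(\Lt)|$, so $V(\Lt)$ is entirely consumed by $\S_{\Mt_G} \cup \S_{\Mb_G}$. The main (minor) obstacle is simply to verify carefully that Lemma~\ref{cl:gd} really eliminates the possibility that a triangle in $\S_{GD}$ touches $\Mt_G$ or $\Mb_G$; once this is explicit, the argument is strictly parallel to the $D$-side claim.
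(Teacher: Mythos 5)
Your proof is correct and follows essentially the same route as the paper's: use Lemma~\ref{cl:gd} to force every triangle covering a vertex of $\Mt_G$ or $\Mb_G$ to lie in $\S_G$, note that these degree-$(0,0)$ vertices can only be middle vertices of triangles of the prescribed form, and then count. The only difference is that you spell out the final $gn$-count on $\Lt$ slightly more explicitly than the paper does.
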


\begin{proof}
We have $|\S_{\Mt_G}| \leq n$ since $|\Mt_G|= n$. Lemma~\ref{cl:gd} implies that all vertices of $\Mt_G$ must be used by triangles of $\S_G$, and thus using arcs whose both endpoints lie in $V(G)$.
As vertices of $\Mt_G$ are not the head or the tail of a backward arc in $\T$, we get that the only way for $\S_G$ to use vertices of $\Mt_G$ is to create triangles in $\S_{\Mt_G}$, implying $|\S_{\Mt_G}| \ge n$.
Using the same kind of arguments (and as all vertices of $\Mb_G$ must also be used by triangles of $\S_G$) we also get $|\S_{\Mb_G}| = (g-1)n$.
As $|V(\Lt)|=gn$ we get the last part of the claim.
\end{proof}

\begin{claim}
\label{cl:tildeLgoodtriangle}
If there exists a perfect packing $\S$ of $\T$, then there exists $p_0
\in [g]$ such that $\Lt_\S=\Lt_{p_0}$, where $\Lt_\S=\Lt \cap
V(\S_{\Mt_G})$.
\end{claim}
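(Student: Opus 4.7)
The plan is to mirror the argument of Claim~\ref{cl:tildeKgoodtriangle}, now on the $G$-side of the tournament, using Lemma~\ref{cl:gd} and Claim~\ref{cl:G2goodtriangles} in the roles previously played by Claim~\ref{cl:D2goodtriangles}. The key observation enabling this transposition is that Lemma~\ref{cl:gd} forces $V(\S_{GD}) \cap V(G) \subseteq V(\Lb)$, so triangles of $\S_{GD}$ never touch $P_{n,g}$, $\Mt_G$, $\Lt$ or $\Mb_G$; in particular every triangle of $\S$ intersecting $V(P_{n,g})$ must already lie in $\S_G$, and all the counting arguments used on the $D$-side become available.

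First I would set $\S_P = \{t \in \S : V(t) \cap V(P_{n,g}) \neq \emptyset\}$ and argue that $\S_P = \S_{\subseteq P} \cup \S_{\Mt_G}$. Indeed, by Claim~\ref{cl:G2goodtriangles} the vertices of $\Lt$ are already consumed by $\S_{\Mt_G} \cup \S_{\Mb_G}$, and by construction of $\T$ the only backward arcs between $P_{n,g}$ and the rest of $G$ go from some distinguished set $X^p$ to $\Lt_p$. This rules out triangles of the form $(u,v,w)$ with $\{u,v\} \subseteq \Lt$ and $w \in V(P_{n,g})$, as well as triangles with $u \in \Lt$ and $\{v,w\} \subseteq V(P_{n,g})$. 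Hence any $t \in \S_P$ either lies entirely inside $P_{n,g}$, or has $u \in \Lt$, $v \in \Mt_G$, $w \in X^{p}$ and therefore belongs to $\S_{\Mt_G}$.

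Since $\S$ is a perfect packing we have $|V(\S_P)| = |V(P_{n,g})|$, and by Claim~\ref{cl:G2goodtriangles} the $|\S_{\Mt_G}|=n$ triangles of $\S_{\Mt_G}$ consume exactly $n$ vertices of $P_{n,g}$ (forcibly lying in $X$), so $|V(\S_{\subseteq P})| = |V(P_{n,g})| - n$. Applying the last property of Lemma~\ref{lem:path} to $\S_{\subseteq P}$ yields a $p_0 \in [g]$ with $V(P_{n,g}) \setminus V(\S_{\subseteq P}) \subseteq X^{p_0}$. The $n$ triangles of $\S_{\Mt_G}$ must cover these remaining vertices, and because the only $\Lt$-vertices reachable by a backward arc from $X^{p_0}$ are those of $\Lt_{p_0}$, each such triangle has its $\Lt$-endpoint in $\Lt_{p_0}$. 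Thus $\Lt_\S \subseteq \Lt_{p_0}$, and equality follows from $|\Lt_\S| = |\S_{\Mt_G}| = n = |\Lt_{p_0}|$.

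I expect no significant obstacle: the argument is a symmetric copy of Claim~\ref{cl:tildeKgoodtriangle}, and the only conceptually new step is the appeal to Lemma~\ref{cl:gd} to guarantee that cross-triangles of $\S_{GD}$ cannot interfere with the counting inside $P_{n,g}$ and $\Mt_G$. Once that point is explicitly invoked, the remaining reasoning is bookkeeping already carried out verbatim on the $D$-side.
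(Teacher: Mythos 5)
Your proof is correct and follows essentially the same route as the paper: invoke Lemma~\ref{cl:gd} to confine all triangles meeting $P_{n,g}$ to $\S_G$, use Claim~\ref{cl:G2goodtriangles} to obtain the partition $\S_P = \S_{\subseteq P} \cup \S_{\Mt_G}$, apply the last property of Lemma~\ref{lem:path} to locate $X^{p_0}$, and conclude via the backward arcs from $X^{p_0}$ to $\Lt_{p_0}$. The only cosmetic difference is that you make the final cardinality argument $|\Lt_\S| = n = |\Lt_{p_0}|$ explicit where the paper leaves it implicit.
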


\begin{proof}
Lemma~\ref{cl:gd} implies that all vertices of $\Mt_G$ and $P_{(n,g)}$ must be used by triangles in $\S_G$.
Let $\S_{P}$ be the triangles of $\S_G$ with at least one vertex in $P_{n,g}$.
As according to Claim~\ref{cl:G2goodtriangles} vertices of $\Lt$ are entirely used by $\S_{\Mt_G} \cup S_{\Mb_G}$, the only way for $\S_G$ to consume vertices of $P_{n,g}$ 
is by creating local triangles in $P_{n,g}$ or triangles in $\S_{\Mt_G}$. In particular, we cannot have a triangle $(u,v,w)$ with $\{u,v\} \subseteq \Lt$ and $w \in P_{n,g}$, or with $u \in \Lt$ and $\{v,w\} \subseteq P_{n,g}$. More formally, we get the partition $\S_{P}=\S_{\subseteq P} \cup \S_{\Mt_G}$.
As $\S_G$ uses in particular all vertices of $P_{n,g}$ we get $|V(\S_{P})|=|V(P_{n,g})|$, implying $|V(\S_{\subseteq P})|=|V(P_{n,g})|-n$ by Claim~\ref{cl:G2goodtriangles}.
By Lemma~\ref{lem:path}, this implies that there exists $p_0 \in [g]$ such that $X \subseteq X^{p_0}$ where $X=V(P_{n,g}) \setminus V(\S_{\subseteq P})$.
As $X$ are the only remaining vertices that can be used by triangles of $\S_{\Mt_G}$, we get that the $n$ triangles of $\S_{\Mt_G}$ are of the form $(u,v,w)$ with $u \in \Lt_{p_0}$, $v \in \Mt_G$, and $w \in X$.
\end{proof}

\begin{claim}
\label{cl:Ggoodremainingtriangles}
If there exists a perfect packing $\S$ of $\T$, then there exists $p_0
\in [g]$ such that $V(\S_{P} \cup \S_{\Mt_G} \cup \S_{\Mb_G}) = V(G)
\setminus \Lb_{p_0}$.
\end{claim}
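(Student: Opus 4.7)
The approach is to mirror the reasoning of Claim~\ref{cl:Dgoodremainingtriangles}, transporting the argument from the $D$ side to the $G$ side and relying on the structural claims already established. First I would invoke Claim~\ref{cl:G2goodtriangles} to get $|\S_{\Mb_G}| = (g-1)n$ and Claim~\ref{cl:tildeLgoodtriangle} to obtain an index $p_0 \in [g]$ with $\Lt \cap V(\S_{\Mt_G}) = \Lt_{p_0}$. Together with the fact (from Claim~\ref{cl:G2goodtriangles}) that $V(\Lt)$ is entirely consumed by $\S_{\Mt_G} \cup \S_{\Mb_G}$, this forces the $(g-1)n$ triangles of $\S_{\Mb_G}$ to cover exactly the vertices of $\Lt \setminus \Lt_{p_0}$.

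Next I would argue that each triangle $(u,v,w) \in \S_{\Mb_G}$ is necessarily of the form $u \in \Lb_r$, $v \in \Mb_G$, $w \in \Lt_r$ for a single index $r \in [g] \setminus \{p_0\}$. Indeed, since vertices of $\Mb_G$ have degree $(0,0)$ and sit between $\Lb$ and $\Lt$ in the ordering, the $\Mb_G$-vertex in the triangle must be the middle one; the corresponding backward arc $wu$ must go from $\Lt$ to $\Lb$, and by construction of $G$ the only such arcs satisfy $u \in \Lb_r$, $w \in \Lt_r$ for the same~$r$. Combined with $\Lt_{p_0}$ being already fully occupied by $\S_{\Mt_G}$, this forces the $(g-1)n$ triangles of $\S_{\Mb_G}$ to partition the $g-1$ pairs $(\Lb_r, \Lt_r)$ with $r \neq p_0$ (each contributing $n$ triangles), so that $\S_{\Mb_G}$ covers precisely $V(\Mb_G) \cup (\Lt \setminus \Lt_{p_0}) \cup (\Lb \setminus \Lb_{p_0})$.

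Finally, assembling the contributions, $\S_P$ covers $V(P_{(n,g)}) \cup V(\Mt_G) \cup \Lt_{p_0}$ by Claim~\ref{cl:tildeLgoodtriangle} (the $m$ triangles of $\S_{\Mt_G} \subseteq \S_P$ together with $\S_{\subseteq P}$ exhaust $P_{(n,g)}$ except on $X^{p_0}$, and also exhaust $\Mt_G$ and $\Lt_{p_0}$), and $\S_{\Mb_G}$ covers the remaining vertices of $G$ apart from $\Lb_{p_0}$ as just shown. Hence $V(\S_P \cup \S_{\Mt_G} \cup \S_{\Mb_G}) = V(G) \setminus \Lb_{p_0}$, which is the desired equality. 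I do not expect a serious obstacle: every substantive step reduces to the structural analysis already performed for the $D$ side, and the only new point is the rigidity of $\Mb_G$-crossing triangles, which is immediate from the definition of $G$ since backward arcs between $\Lt$ and $\Lb$ exist only within a common index~$r$.
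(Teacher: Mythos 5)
Your proposal is correct and follows essentially the same route as the paper: invoke the count $|\S_{\Mb_G}|=(g-1)n$ and the index $p_0$ from Claim~\ref{cl:tildeLgoodtriangle}, observe that the rigidity of the backward arcs from $\Lt_r$ to $\Lb_r$ forces each $\S_{\Mb_G}$-triangle to live in a single pair $(\Lb_r,\Lt_r)$ with $r\neq p_0$, and assemble the coverage; you merely spell out the rigidity step that the paper leaves implicit. The only slip is cosmetic: $\S_{\Mt_G}$ contains $n$ triangles, not $m$.
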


\begin{proof}
By Claim~\ref{cl:D2goodtriangles} we know that $|\S_{\Mb_G}| = (g-1)n$. As by Claim~\ref{cl:tildeLgoodtriangle} there exists $p_0 \in [g]$ such that $\Lt_\S=\Lt_{p_0}$, 
we get that the $(g-1)n$ triangles of $\S_{\Mb_G}$ are of the form $(u,v,w)$ with $u \in \Lb \setminus \Lb_{p_0}$, $v \in \Mb_G$, and $w \in \Lt \setminus \Lt_{p_0}$.
\end{proof}

We are now ready to state our final claim is now straightforward as
according Claim~\ref{cl:Dgoodremainingtriangles}
and~\ref{cl:Ggoodremainingtriangles} we can define $\S_{(p_0,q_0)}=\S
\setminus ((\S_{P'} \cup \S_{\Mt_D} \cup \S_{\Mb_D}) \cup (\S_{P} \cup
\S_{\Mt_G} \cup \S_{\Mb_G}))$.
\begin{claim}
\label{cl:mainclaim}
If there exists a perfect packing $\S$ of $\T$, there exists $p_0, q_0
\in [g]$ and $\S_{(p_0,q_0)} \subseteq \S$ such that
$V(\S_{(p_0,q_0)}) = V(\T_{(p_0,q_0)})$ (or equivalently such that
$\S_{(p_0,q_0)}$ is a perfect packing of $\T_{(p_0,q_0)}$).
\end{claim}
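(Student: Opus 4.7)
The plan is to combine the two partition results from Claims~\ref{cl:Dgoodremainingtriangles} and~\ref{cl:Ggoodremainingtriangles} and then argue that the triangles of $\S$ not accounted for on either side of $\T$ must exactly form a perfect packing of $\T_{(p_0,q_0)}$. The hint in the sentence preceding the claim already gives the candidate: set $\S_{(p_0,q_0)} = \S \setminus ((\S_{P'} \cup \S_{\Mt_D} \cup \S_{\Mb_D}) \cup (\S_{P} \cup \S_{\Mt_G} \cup \S_{\Mb_G}))$, and verify the two required properties.

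First, I would apply Claim~\ref{cl:Dgoodremainingtriangles} to obtain $q_0 \in [g]$ such that $V(\S_{P'} \cup \S_{\Mt_D} \cup \S_{\Mb_D}) = V(D) \setminus V(\Kb_{q_0})$, and symmetrically Claim~\ref{cl:Ggoodremainingtriangles} to obtain $p_0 \in [g]$ with $V(\S_{P} \cup \S_{\Mt_G} \cup \S_{\Mb_G}) = V(G) \setminus V(\Lb_{p_0})$. Since $\S$ is a perfect packing of $\T = GD$, the removed triangles collectively consume exactly $V(\T) \setminus (V(\Lb_{p_0}) \cup V(\Kb_{q_0}))$, so $V(\S_{(p_0,q_0)}) = V(\Lb_{p_0}) \cup V(\Kb_{q_0}) = V(\T_{(p_0,q_0)})$.

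Next, I would check that each triangle in $\S_{(p_0,q_0)}$ actually lives in $\T_{(p_0,q_0)}$, i.e. that its three vertices lie inside $V(\Lb_{p_0}) \cup V(\Kb_{q_0})$. This is immediate once one observes that the previous step already forces all vertices of every $t \in \S_{(p_0,q_0)}$ to belong to $V(\Lb_{p_0}) \cup V(\Kb_{q_0})$: any outside vertex has been consumed by the six removed families, and a triangle of $\S$ cannot share a vertex with another. Because $\T[V(\Lb_{p_0}) \cup V(\Kb_{q_0})]$ is precisely $\T_{(p_0,q_0)}$ by construction of the composed tournament, $\S_{(p_0,q_0)}$ is thus a packing of $\T_{(p_0,q_0)}$, and by the vertex count it is a perfect one.

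The main thing to be careful about is that the six families removed from $\S$ are genuinely pairwise disjoint, so that the set difference defining $\S_{(p_0,q_0)}$ is meaningful. This is where Lemma~\ref{cl:gd} is crucial: it guarantees that no triangle of $\S_{GD}$ touches $\Mb_G, \Lt, \Mt_G$ or $P_{n,g}$, and symmetric reasoning (contained in Claim~\ref{cl:Dgoodremainingtriangles}) handles the $D$-side. Together with the disjointness of the vertex sets $\Mb, \Mt, \Lt, P$ on each side, this rules out any triangle of $\S$ being simultaneously counted in two of the six families, completing the verification.
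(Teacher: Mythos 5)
Your proposal is correct and follows essentially the same route as the paper: the paper also defines $\S_{(p_0,q_0)}$ as $\S$ minus the six families and declares the claim ``straightforward'' from Claims~\ref{cl:Dgoodremainingtriangles} and~\ref{cl:Ggoodremainingtriangles}, since those two claims force the removed triangles to cover exactly $V(\T)\setminus(V(\Lb_{p_0})\cup V(\Kb_{q_0}))$ and vertex-disjointness of the perfect packing does the rest. Your additional checks (that the leftover triangles lie entirely in $\T[\Lb_{p_0}\cup\Kb_{q_0}]=\T_{(p_0,q_0)}$ and that the set difference is well defined) only make explicit what the paper leaves implicit.
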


\paragraph*{Proof of the weak composition}

\begin{theorem}
For any $\epsilon>0$, \perfectpb~(parameterized by the total number of
vertices $N$) does not admit a polynomial (generalized) kernelization
with size bound $\O(N^{2-\epsilon})$ unless ${\sf NP} \subseteq {\sf coNP / Poly}$.
\end{theorem}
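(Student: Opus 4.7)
The plan is to complete the OR-cross-composition argument from \perfectpb{} restricted to instances of Theorem~\ref{thm:nphperfectv2} to \perfectpb{} parameterized by the total number of vertices~$N$. Claim~\ref{cl:mainclaim} already handles the ``only if'' direction: if $\T$ admits a perfect packing, then some input instance $\Fb_{(p_0,q_0)}$ has a perfect packing, and in particular at least one of the $t$ input instances is positive. What remains is to establish the converse direction, verify the appropriate size bound on the constructed instance, and invoke the cross-composition framework of \cite{hermelin2012weak,dell2014kernelization}.

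For the converse direction, suppose some $\Fb_{(p_0,q_0)}$ admits a perfect packing $\S_{(p_0,q_0)}$ of $\T_{(p_0,q_0)}$, which, seen inside $\T$, uses exactly the vertices of $\Lb_{p_0} \cup \Kb_{q_0}$. First, apply Lemma~\ref{lem:path} to $P_{(n,g)}$ to obtain a packing $\S_P$ with $V(P_{(n,g)}) \setminus V(\S_P) = \Xg{p_0}$. Combining $\Xg{p_0}$, $\Mt_G$ and $\Lt_{p_0}$ through the $n^2$ arcs from $\Xg{p_0}$ to $\Lt_{p_0}$ yields $n$ vertex-disjoint triangles of type $\S_{\Mt_G}$ covering these $3n$ vertices. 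Then, pair up $\Lb \setminus \Lb_{p_0}$ with $\Lt \setminus \Lt_{p_0}$ through the $(g{-}1)n^2$ arcs going from $\Lt_p$ to $\Lb_p$ for $p\neq p_0$, and use $\Mb_G$ as middle vertices to form the $(g{-}1)n$ triangles of type $\S_{\Mb_G}$. Perform the symmetric construction on $D$ with $P'_{(m,g)}$, $\Mt_D$, $\Mb_D$. These triangle families are pairwise vertex disjoint by construction, and together with $\S_{(p_0,q_0)}$ cover $V(\T)$ entirely.

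For the size bound, observe that $|V(\T)| = |V(G)| + |V(D)| = \mathcal{O}((n+m)g) = \mathcal{O}(s\sqrt{t})$, where $s = n+m$ is the size of each input instance (all the $L_p$ are identical and all the $K_q$ are identical by Theorem~\ref{thm:nphperfectv2}) and $t = g^2$ is the number of input instances. Equivalently, the parameter $N = |V(\T)|$ satisfies $t = \Omega(N^2 / s^2)$. Together with the two directions above this constitutes an OR-cross-composition of \textit{dimension~$2$} in the sense of~\cite{hermelin2012weak} from the NP-hard problem of Theorem~\ref{thm:nphperfectv2} into \perfectpb{} parameterized by~$N$. The standard framework (\cite{hermelin2012weak,dell2014kernelization}) then rules out a polynomial (generalized) kernel of bit-size $\mathcal{O}(N^{2-\epsilon})$ for any $\epsilon>0$ unless ${\sf NP} \subseteq {\sf coNP / Poly}$.

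The main obstacle is the bookkeeping in the converse direction: one must check that the triangles inserted on top of $\S_{(p_0,q_0)}$ form a correct packing, i.e.\ that each triangle uses only arcs that actually exist in $\T$ (namely the full bipartite sets of backward arcs between $\Xg{p_0}$ and $\Lt_{p_0}$, between $\Lt_p$ and $\Lb_p$ for $p\neq p_0$, and their analogues on the $D$~side), and that the six layers $\S_{\subseteq P}$, $\S_{\Mt_G}$, $\S_{\Mb_G}$, $\S_{\subseteq P'}$, $\S_{\Mt_D}$, $\S_{\Mb_D}$ together with $\S_{(p_0,q_0)}$ exactly partition $V(\T)$. Everything else is either already proved (Claim~\ref{cl:mainclaim}) or a direct application of the cross-composition framework used as a black box.
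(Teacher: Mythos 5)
Your proposal is correct and follows essentially the same route as the paper: the backward direction via Claim~\ref{cl:mainclaim}, the forward direction by stacking the instance-selector packing, the $\Mt_G/\Mb_G$ (resp.\ $\Mt_D/\Mb_D$) triangle layers, and the perfect packing of the selected instance on $\Lb_{p_0}\cup\Kb_{q_0}$, and finally the size bound $N=\O((n+m)\sqrt{t})$ fed into the cross-composition framework. The only cosmetic difference is that you phrase the conclusion via the ``dimension~$2$'' formulation of weak composition rather than the or-cross-composition with cost $t^{1/2+o(1)}$ as stated in Theorem~\ref{thm:orcompo}, which is the same tool.
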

\begin{proof}
Given $t$ instances $\{\Fb_l\}$ of \perfectpb~restricted to instances
of Theorem~\ref{thm:nphperfectv2}, we define an instance $\T$ of
\perfectpb~ as defined in Section~\ref{sec:kernel}.  We recall that
$g = \sqrt{t}$, and that for any $l \in [t]$, $|V(L_l)|=n$ and
$|V(K_l)|=m$.  Let $N = |V(\T)|$. As
$N=|V(P'_{(m,g)})|+m+(g-1)m+2mg+|V(P_{(n,g)})|+n+(g-1)n+2ng$ and
$|V(P_{(\m,\g)})| = O(\m\g)$ by Lemma~\ref{lem:path}, we get $N =
\O(g(n+m))=\O(t^{\frac{1}{2+o(1)}} \max(|\Fb_l|))$.  Let us now verify
that there exists $l \in [t]$ such that $\Fb_l$ admits a perfect
packing iff $\T$ admits a perfect packing.
First assume that there exist $p_0,q_0 \in [g]$ such that $\Fb_{(p_0,q_0)}$
admits a perfect packing.  By Lemma~\ref{cl:mainclaim}, there is a
packing $\S_{P'}$ of $P'_{(m,g)}$ such that $V(\S_{p'})= V(P'_{(m,g)})
\setminus X^{'q_0}$.  We define a set $\S_{\Mt_D}$ of $m$ vertex
disjoint triangles of the form $(u,v,w)$ with $u \in \Lt_{q_0}, v \in
\Mt_D, w \in X^{'q_0}$.  Then, we define a set $\S_{\Mb_D}$ of
$(g-1)m$ vertex disjoint triangles of the form $(u,v,w)$ with $u \in
\Lb \setminus \Lb_{q_0}, v \in \Mb_D, w \in \Lt \setminus \Lt_{q_0}$.
In the same way we define $\S_{P}$, $\S_{\Mt_G}$ and
$\S_{\Mb_G}$. Observe that $V(\T) \setminus ((\S_{P'} \cup \S_{\Mt_D}
\cup \S_{\Mb_D}) \cup (\S_{P} \cup \S_{\Mt_G} \cup
\S_{\Mb_G}))=\Kb_{q_0} \cup \Lb_{p_0}$, and thus we can complete our
packing into a perfect packing of $\T$ as $\Fb_{(p_0,q_0)}$ admits a
perfect packing.
Conversely if there exists a perfect packing $\S$ of $\T$, then by
Claim~\ref{cl:mainclaim} there exists $p_0, q_0 \in [g]$ and
$\S_{(p_0,q_0)} \subseteq \S$ such that $V(\S_{(p_0,q_0)}) =
V(\T_{(p_0,q_0)})$, implying that $\Fb_{(p_0,q_0)}$ admits a perfect
packing.
\end{proof}

\begin{corollary}
\label{theo:noKernel}
For any $\epsilon>0$, \pb~(parameterized by the size $k$ of the solution)
does not admit a polynomial kernel with size
 $\O(k^{2-\epsilon})$ unless ${\sf NP} \subseteq {\sf coNP / Poly}$.

\end{corollary}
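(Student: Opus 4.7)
The plan is to derive the corollary directly from the preceding theorem by observing that $\perfectpb$ is the restriction of $\pb$ to instances where one asks for a packing of exactly $k = N/3$ triangles, so the natural parameter $k$ of $\pb$ agrees (up to a factor $3$) with the total number of vertices $N$ on perfect-packing instances.

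First I would describe the reduction-by-parameter argument. Given an instance $\T$ of $\perfectpb$ on $N$ vertices, I map it to the instance $(\T, k)$ of $\pb$ with $k = N/3$; this is an equivalent instance, since $\T$ admits a perfect packing if and only if it contains $N/3$ vertex-disjoint triangles (any triangle packing uses $3$ vertices per triangle, so $k = N/3$ is the maximum possible and is attained exactly when the packing is perfect). Note also that $N/3$ is a valid integer whenever $\perfectpb$ has any chance of being a \textsc{yes}-instance, and otherwise we can trivially output a constant-size \textsc{no}-instance.

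Next, assume for contradiction that $\pb$ admits a polynomial (generalized) kernel of size $\O(k^{2-\epsilon})$ for some $\epsilon > 0$. Composing this kernel with the trivial reduction above, every instance $\T$ of $\perfectpb$ on $N$ vertices is mapped in polynomial time to an instance of $\pb$ of total bit-size $\O(k^{2-\epsilon}) = \O((N/3)^{2-\epsilon}) = \O(N^{2-\epsilon})$ which is equivalent to $\T$ (as a $\perfectpb$ instance, via the parameter translation). This is exactly a generalized kernelization of $\perfectpb$ parameterized by $N$ with size bound $\O(N^{2-\epsilon})$, contradicting the previous theorem.

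There is essentially no obstacle here: the only thing to double-check is that the kernel of $\pb$ is a (generalized) kernel in the sense that the output instance is equivalent to the input for the decision problem, so that translating a $\perfectpb$ question into a $\pb$ question and then applying the hypothetical $\pb$ kernel yields a valid generalized kernel for $\perfectpb$ with the correct size bound. Under the standard definitions of (generalized) kernelization, this is immediate, and the corollary follows.
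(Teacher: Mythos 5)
Your proposal is correct and matches the paper's (implicit) argument: the paper states this corollary without proof as an immediate consequence of the preceding theorem, relying on exactly the observation you make, namely that a \perfectpb~instance on $N$ vertices is equivalent to the \pb~instance with $k=N/3$, so a hypothetical $\O(k^{2-\epsilon})$ kernel for \pb~would yield a generalized kernel of size $\O(N^{2-\epsilon})$ for \perfectpb~parameterized by $N$, contradicting the theorem. Nothing further is needed.
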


\section{Conclusion and open questions}

Concerning approximation algorithms for \pb~restricted to sparse
instances, we have provided a $(1+\frac{6}{c+5})$-approximation
algorithm where $c$ is a lower bound of the ${\mathtt minspan}$ of the
instance. On the other hand, it is not hard to solve by dynamic
programming \pb~for instances where ${\mathtt maxspan}$ is bounded
above. Using these two opposite approaches it could be interesting to
derive an approximation algorithm for \pb~with factor better than
$4/3$ even for sparse tournaments.

Concerning {\sf FPT} algorithms, the approach we used for sparse
tournament (reducing to the case where $m=\O(k)$ and apply the $\O(m)$
vertices kernel) cannot work the general case. Indeed, if we were able
to sparsify the initial input such that $m'=\O(k^{2-\epsilon})$,
applying the kernel in $\O(m')$ would lead to a tournament of total
bit size (by encoding the two endpoint of each arc)
$\O(m'log(m'))=\O(k^{2-\epsilon})$, contradicting
Corollary~\ref{theo:noKernel}.  Thus the situation for \pb~could be as
in vertex cover where there exists a kernel in $\O(k)$ vertices,
derived from~\cite{linearVc74}, but the resulting instance cannot have
$\O(k^{2-\epsilon})$ edges~\cite{dell2014kernelization}. So it is
challenging question to provide a kernel in $\O(k)$ vertices for the
general \pb~problem.

\bibliography{TP-biblio}


\appendix

\section{Definitions}
\label{appendix-def}

\paragraph*{Approximation}
\begin{definition}[\cite{papadimitriou1991optimization}]\label{def:L}
Let $\Pi$ and $\Pi'$ be two optimization (maximization or minimization) problems.
We say that $\Pi$ $L$-reduces to $\Pi'$ if there are two polynomial-time algorithms $f$, $g$, and
constants $\alpha, \beta > 0$ such that for each instance $I$ of $\Pi$
\begin{itemize}
\item[(a)] Algorithm $f$ produces an instance $I' = f(I)$ of $\Pi'$ such that the optima of
$I$ and $I'$, $OPT(I)$ and $OPT(I')$, respectively, satisfy $OPT(I') \le \alpha OPT(I)$
\item[(b)] Given any solution of $I'$ with cost $c$, algorithm $g$ produces a solution of
$I$ with cost $c$ such that $|c - OPT(I)| \le \beta |c'- OPT(I')|$.
\end{itemize}
\end{definition}
\begin{definition}
Let $A$ be an algorithm of a maximization (resp. minimization) problem $\Pi$. For $\rho \geq 1$, we say that $A$ is a $\rho$-approximation of $\Pi$ iff for any instance $I$ of $\Pi$, $A_I \geq OPT(I)/\rho$ (resp. $A_I \leq \rho OPT(I)$) where $A_I$ is the value of the solution $A(I)$ and $OPT(I)$ the value of a optimal solution of $I$.
\end{definition}

\begin{definition}
\label{def:apx}
Let $\Pi$ be a {\sf NP}-optimization problem. The problem $\Pi$ is in {\sf APX} if there exists a constant $\rho >1$ such that $\Pi$ admits a $\rho$-approximation algorithm.
\end{definition}

\begin{definition}
\label{def:ptas}
Let $\Pi$ be a {\sf NP}-optimization problem. The problem $\Pi$ admits a {\sf PTAS} if for any $\epsilon >0$, there exists a polynomial $(1 + \epsilon)$-approximation of $\Pi$.
\end{definition}

\paragraph*{Parameterized complexity} We refer the reader 
to~\cite{downey2013fundamentals} for more details on parameterized
complexity and kernelization, and we recall here only some basic
definitions. A \emph{parameterized problem} is a language $L \subseteq
\Sigma^* \times \mathbb{N}$.  For an instance $I=(x,k) \in \Sigma^*
\times \mathbb{N}$, the integer $k$ is called the \emph{parameter}.

A parameterized problem is \emph{fixed-parameter tractable} ({\sf
  FPT}) if there exists an algorithm $A$, a computable function $f$,
and a constant $c$ such that given an instance $I=(x,k)$, $A$ (called
an {\sf FPT} algorithm) correctly decides whether $I \in L$ in time
bounded by $f(k) \cdot |I|^c$, where $|I|$ denotes the size of $I$.
Given a computable function $g$, a \emph{kernelization algorithm} (or
simply a \emph{kernel}) for a parameterized problem $L$ of \emph{size}
$g$ is an algorithm $A$ that given any instance $I=(x,k)$ of $L$, runs
in polynomial time and returns an equivalent instance $I'=(x',k')$
with $|I'|+k' \le g(k)$.  It is well-known that the existence of an
{\sf FPT} algorithm is equivalent to the existence of a kernel (whose
size may be exponential), implying that problems admitting a
polynomial kernel form a natural subclass of {\sf FPT}. Among the wide
literature on polynomial kernelization, we only recall in
the notion of weak composition used to
lower bound the size of a kernel.

\begin{definition}[Definition as written in~\cite{jansen2015sparsification}]\label{def:orcompo}
Let $L \subseteq \Sigma^*$ be a language, $R$ be a polynomial equivalence relation on $\Sigma^*$, let  $Q \subseteq \Sigma^* \times \mathbb{N}$
be a parameterized  problem, and let $f : \mathbb{N} \rightarrow \mathbb{N}$ be a function. An or-cross-composition of $L$ into $Q$ (with respect to $R$) of cost $f(t)$ is an
algorithm that, given $t$ instances $x_i \in \Sigma^*$ of $L$ belonging to the same equivalence class of $R$, takes time polynomial in $\sum_{i \in [t]}|x_i|$ and outputs an instance 
$(y, k) \in \Sigma^* \times \mathbb{N}$ such that:
 \begin{enumerate}
  \item the parameter $k$ is bounded by $\O(f(t)\max_i|x_i|^c)$, where $c$ is some constant independent of $t$, and
  \item $(y, k) \in Q$ if and only if there is an $i \in [t]$ such that $x_i \in L$.
 \end{enumerate}
\end{definition}

\begin{theorem}[\cite{bodlaender2014kernelization}]\label{thm:orcompo}
Let $L \subseteq \Sigma^*$ be a language, let $Q \subseteq \Sigma^* \times \mathbb{N}$ be a parameterized problem, and
let $d, \epsilon$ be positive reals. If $L$ is {\sf NP}-hard under Karp reductions, has an or-cross-composition
into $Q$ with cost $f(t) = t^{1/d+o(1)}$ , where $t$ denotes the number of instances, and $Q$ has a
polynomial (generalized) kernelization with size bound $\O(k^{d-\epsilon})$, then ${\sf NP} \subseteq {\sf coNP / Poly}$.
\end{theorem}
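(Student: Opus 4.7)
The plan is to compose the hypothesised OR-cross-composition with the hypothesised polynomial (generalized) kernelization so as to obtain an ``OR-distillation'' of the NP-hard source language $L$ whose output size grows \emph{sub-linearly} in the number of composed instances, and then to invoke the Fortnow--Santhanam machinery (in the sharp form due to Dell and van Melkebeek) to derive ${\sf NP} \subseteq {\sf coNP}/{\sf Poly}$.

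First I would fix a large integer $t$ and take $t$ instances $x_1,\dots,x_t \in \Sigma^*$ of $L$ lying in a common equivalence class of $R$, writing $s := \max_i |x_i|$. Running the OR-cross-composition yields in polynomial time an instance $(y,k)$ of $Q$ with $k \le c_1 \cdot f(t) \cdot s^{c_0} = \O(t^{1/d + o(1)} \cdot \mathrm{poly}(s))$ such that $(y,k) \in Q$ iff some $x_i \in L$. Next I would feed $(y,k)$ to the hypothetical generalized kernelization, obtaining in polynomial time an equivalent instance $(y',k')$ (of possibly another parameterized problem) whose total bit-size satisfies $|y'|+k' \le c_2 \cdot k^{d-\epsilon} = \O(t^{(d-\epsilon)/d + o(1)} \cdot \mathrm{poly}(s)) = \O(t^{1-\epsilon/d + o(1)} \cdot \mathrm{poly}(s))$.

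Fixing $\delta := \epsilon/(2d) > 0$, for every sufficiently large $t$ the composition is therefore a polynomial-time algorithm that maps the disjunction $\bigvee_{i=1}^t [x_i \in L]$ to a single instance of an NP language of total bit-size at most $t^{1-\delta} \cdot \mathrm{poly}(s)$; this is by definition an OR-distillation of $L$ of cost $t^{1-\delta}\mathrm{poly}(s)$. Finally I would invoke the theorem of Fortnow--Santhanam, in the sharp polynomial-exponent form of Dell and van Melkebeek, which states that if an NP-hard (under Karp reductions) language admits an OR-distillation of cost $t^{1-\delta}\mathrm{poly}(s)$ then ${\sf NP} \subseteq {\sf coNP}/{\sf Poly}$. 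Since $L$ is NP-hard under Karp reductions by hypothesis, the desired conclusion follows.

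The delicate point to watch in the reduction itself is the arithmetic of exponents: the blow-up factor $s^{c_0(d-\epsilon)}$ inherited from the cross-composition must stay polynomial purely in the \emph{individual} instance size $s$ and must not interact with the $t$-dependence, so that the gain $t^{-\epsilon/d}$ in the exponent of $t$ genuinely survives. The calibration $f(t) = t^{1/d+o(1)}$ in the hypothesis is precisely engineered so that any kernel of size $\O(k^{d-\epsilon})$ with $\epsilon > 0$ falls strictly below the Fortnow--Santhanam distillation barrier; a slightly larger composition cost would leave no slack. The genuinely hard work, which I would quote rather than redo, lives inside the Dell--van Melkebeek proof of that barrier: it uses a padding and averaging argument to show that a sufficiently compressive OR-distillation would yield, for every length, short non-uniform advice allowing $L$ to be decided in coNP, contradicting the non-inclusion of NP in coNP/poly. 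Assembling this barrier as a black box, together with the two-step ``cross-composition then kernel'' reduction above, completes the proof.
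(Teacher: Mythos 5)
This theorem is not proved in the paper at all: it is imported verbatim from Bodlaender, Jansen and Kratsch \cite{bodlaender2014kernelization} and used as a black box, so there is no in-paper argument to compare yours against. Your reconstruction is the standard proof and the exponent arithmetic is correct: the cross-composition gives $k = \O(t^{1/d+o(1)}\,\mathrm{poly}(s))$, the kernel compresses to $\O(k^{d-\epsilon}) = \O(t^{1-\epsilon/d+o(1)}\,\mathrm{poly}(s))$ bits, and this sublinear-in-$t$ compression of the OR of $t$ instances of an {\sf NP}-hard language triggers the Fortnow--Santhanam/Dell--van Melkebeek distillation barrier. The one step you gloss over is the role of the polynomial equivalence relation $R$: the distillation lower bound must apply to \emph{arbitrary} $t$-tuples of instances of $L$, whereas the cross-composition only accepts tuples drawn from a single equivalence class. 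The actual proof bridges this by partitioning the $t$ inputs into the polynomially many (in $\max_i|x_i|$) classes guaranteed by the definition of a polynomial equivalence relation, running the composition on each class separately, and then merging the resulting batch of $Q$-instances into one instance of an {\sf NP}-complete language with only polynomial overhead (this is also where membership of the target in {\sf NP}, or the ``generalized'' kernel formalism, is needed). Your opening move of simply ``fixing $t$ instances in a common equivalence class'' assumes this away rather than handling it; it is a routine but genuinely necessary piece of the argument, and with it added your sketch matches the proof in the cited source.
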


\section{Problems}
\label{app:prbl}
\begin{problem}\label{def:fvs}\hspace{-0.3em}(FVS) ~\\
\textbf{Input:} A directed graph $D=(V,A)$.~\\
\textbf{Output:} A set of vertices $X\subseteq V$ such that $D[V\setminus X]$ is acyclic.~\\
\textbf{Optimisation:} Minimise $|X|$.~\\
The problem is called FVST if the input is a tournament.
\end{problem}

\begin{problem}\label{def:dsp}\hspace{-0.3em}($d$-{\sc Set Packing}) ~\\
\textbf{Input:} An integer $d \geq 3$ and a $d$-uniform hypergraph \mbox{$G=(V,H)$}.~\\
\textbf{Output:} A subset of hyperedges $X = \{X_i, i\in [k] $ with $X_i \in H\}$ such that for every $i \neq j$, $X_i \cap X_j = \emptyset$.~\\
\textbf{Optimisation:} Maximise $k$.
\end{problem}

\begin{problem}\label{def:pdsp}\hspace{-0.3em}({\sc Perfect} $d$-{\sc Set Packing}) ~\\
\textbf{Input:} An integer $d \geq 3$ and a $d$-uniform hypergraph \mbox{$G=(V,H)$}.~\\
\textbf{Question:} Is there a subset of hyperedges $X = \{X_i, i\in [k] $ with $X_i \in H\}$ such that for every $i \neq j$, $X_i \cap X_j = \emptyset$ and $\bigcup_{i\in[k]}X_i = V$?
\end{problem}

\begin{problem}\label{def:hp}\hspace{-0.3em}($H$-{\sc Packing})~\\
\textbf{Input:} A graph \mbox{$G=(V,E)$} and a subgraph $H$.~\\
\textbf{Output:} A collection of subgraphs $X=\{H_i, i\in [k]\}$ such that for every $i$, $H_i$ is isomorphic to $H$ and for every $j \neq i$, $V(H_i) \cap V(H_j) = \emptyset$.~\\
\textbf{Optimisation:} Maximise $k$.
\end{problem}

\begin{problem}\label{def:php}\hspace{-0.3em}({\sc Perfect} $H$-{\sc Packing})~\\
\textbf{Input:} A graph \mbox{$G=(V,E)$} and a subgraph $H$.~\\
\textbf{Question:} Is there a collection of subgraphs $X=\{H_i, i\in [k]\}$ such that for every $i$, $H_i$ is isomorphic to $H$, for every $j \neq i$, $V(H_i) \cap V(H_j) = \emptyset$ and $\bigcup_{i\in[k]}H_i = V$?
\end{problem}

\section{Polynomial detection of sparse tournaments}\label{app:proof}
\begin{lemma}
\label{lem:faslinear}
In polynomial time, we can decide if a tournament is sparse or not, and if so, to give a linear representation whose FAS is a matching
\end{lemma}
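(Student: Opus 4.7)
The plan is to present a polynomial-time algorithm that either outputs a linear representation of $\T$ whose set of backward arcs is a matching, or correctly reports that no such representation exists. I would proceed in two stages.

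First, I would reduce to the strongly connected case. The claim is that any valid linear representation $(\V(\T),\bA{A}(\T))$ of a sparse tournament must respect the condensation order of $\T$: if $C<C'$ are two strongly connected components (i.e.\ all arcs between $C$ and $C'$ are oriented from $C$ to $C'$), then every vertex of $C$ must precede every vertex of $C'$ in $\V(\T)$. Otherwise, if some $u\in C$ lies after some $v\in C'$, then $uv$ is a backward arc; iterating this observation together with the fact that every strongly connected tournament contains vertices with arcs going in both directions between any two non-trivial parts quickly produces two backward arcs sharing a vertex, contradicting the matching property (the finitely many corner cases with small SCCs are checked by hand). Consequently $\T$ is sparse if and only if each strongly connected component of $\T$ is sparse, and a global linear representation is obtained by concatenating local ones in the order of the condensation.

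Second, I would handle a strongly connected $\T$ via a recursive peeling procedure. The key structural observation is that in any valid representation, the first vertex $v_1$ satisfies $d^-(v_1)\le 1$ in $\T$: every in-arc of $v_1$ must be a backward arc whose head is $v_1$, and the matching property allows at most one such arc. The algorithm therefore searches for a vertex $v$ of in-degree at most $1$. If $d^-(v)=0$, it places $v$ first and recurses on $\T\setminus\{v\}$ with no extra constraint. If $d^-(v)=1$ with unique in-neighbour $u$, it places $v$ first, commits the arc $uv$ to $\bA{A}(\T)$, and recurses on $\T\setminus\{v\}$ with the side condition that $u$ may not appear in any further backward arc; this condition is easy to propagate since it merely asks that when $u$ is eventually selected as a first vertex at some recursive call, its in-degree in the current subtournament equals the number of already-placed vertices. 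If no admissible $v$ is ever found, the algorithm declares $\T$ non-sparse.

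The main obstacle lies in proving that this greedy is safe, i.e.\ that whenever $\T$ is sparse at least one admissible $v$ exists and any such choice extends to a valid representation. This follows from an exchange argument: starting from any putative valid representation, a vertex of in-degree at most $1$ satisfying the side conditions can be shifted to the front while only rearranging forward arcs, so the matching property is preserved and the remaining subtournament is still sparse under the propagated constraint. Termination is immediate as each iteration removes one vertex, and since each step only inspects degrees and a constant-size neighbourhood, the total running time is polynomial in $|V(\T)|$.
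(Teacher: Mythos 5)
Your overall architecture (peel off a forced first vertex, propagate the constraint that a committed tail may not enter a second backward arc) is the same as the paper's, but the step on which everything rests --- that \emph{any} vertex of in-degree at most $1$ satisfying the side conditions can be placed first --- is false, and the ``exchange argument'' you invoke does not go through. Consider the sparse, strongly connected tournament on $\{u,v,z_1,z_2,w\}$ given by the ordering $(u,v,z_1,z_2,w)$ with the single backward arc $wu$. Here $d^-(u)=d^-(v)=1$ and no side conditions are active yet, so your greedy may legitimately select $v$. But $v$ cannot be the first vertex of any linear representation whose backward arcs form a matching: placing $v$ first makes $uv$ a backward arc, so $u$ may lie in no other backward arc, which forces its unique in-neighbour $w$ before $u$ and its out-neighbours $z_1,z_2$ after $u$; the ordering must therefore begin $v,w,u$, and then $z_1\to w$ and $z_2\to w$ become two backward arcs sharing the head $w$. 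Your algorithm, having committed to $v$, reaches a dead end and incorrectly declares a sparse tournament non-sparse. The exchange argument fails at exactly this point: shifting $v$ to the front does \emph{not} ``only rearrange forward arcs'' --- it converts $u\to v$ into a backward arc, and the induced obligation that all of $N^-(u)$ precede $u$ can cascade into a violation of the matching property elsewhere. (Your first stage, the reduction to strongly connected components, is harmless but does not rescue you: the counterexample above is strongly connected.)

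The paper avoids this trap by not choosing arbitrarily. It observes that in a tournament the set $Z$ of vertices of in-degree at most $1$ has size at most $3$, and then determines by case analysis \emph{which} element of $Z$ can safely head the ordering: a source if one exists; the unique element if $|Z|=1$; the \emph{tail} $x$ of the arc $xy$ joining the two elements if $|Z|=2$; and the whole $3$-cycle if $|Z|=3$. In the counterexample, $Z=\{u,v\}$ with $u\to v$, and indeed only $u$, the tail, works. To repair your proof you would need either to adopt such a deterministic tie-breaking rule and prove it safe, or to justify bounded backtracking over the at most three candidates at each level (taking care that this does not blow up across the recursion); as written, the safety claim is simply asserted and is refuted by the example above.
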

\begin{proof}
Indeed if a tournament $\T$ is sparse we can detect the
first vertex (or vertices) of a linear representation $\V(\T)$ of $\T$
where $\bA{A}(\T)$ is a matching. If $T$ has a vertex $x$ of indegree
0 then $x$ must be the first or the
second vertex of $\V(\T)$, and
we can always suppose that x is the first vertex of $\V(\T)$. Otherwise, we look at
$Z$ the set of vertices of $\T$ with indegree 1. As $\T$ is a
tournament we have $|Z|\le 3$ and if $Z=\emptyset$ then $T$ is not a
sparse tournament. If $|Z|=1$, then the only element of $Z$ must be
the first vertex of $\V(\T)$. If $|Z|=2$ with $Z=\{x,y\}$ such that
$xy$ is an arc of $\T$, then $x$ must be the first element of $\V(\T)$
and $y$ its second element. Finally, if $|Z|=3$ with $Z=\{x,y,z\}$ then
$xyz$ must be a triangle of $\T$ and must be placed at the beginning
of $\V(\T)$. So repeating inductively these arguments we obtain in
polynomial time in $|\T|$ either $\V(\T)$ such that $\bA{A}(\T)$ is a
matching or a certificate that $\T$ is not sparse. 
\end{proof}

\end{document}